\newcommand{\Eq}[1]{Eq.~(\ref{#1})}
\newcommand{\Ineq}[1]{Ineq.~(\ref{#1})}
\newcommand{\Def}[1]{Definition~\ref{#1}}
\newcommand{\Lem}[1]{Lemma~\ref{#1}}
\newcommand{\Thm}[1]{Theorem~\ref{#1}}
\newcommand{\Sec}[1]{Sec.~\ref{#1}}
\newcommand{\cc}[1]{~\cite{#1}}
\newcommand{\cRef}[1]{Ref.~\cite{#1}}
\newcommand{\cRefs}[1]{Refs.~\cite{#1}}
\newcommand{\Fig}[1]{Fig.~\ref{#1}}
\DeclareMathOperator{\Span}{span}
\DeclareMathOperator{\Spec}{spec}
\DeclareMathOperator{\Eig}{Eig}
\DeclareMathOperator{\supp}{Supp}
\newcommand{\Tr}[1]{\mathrm{Tr}\left[#1\right]}
\DeclareMathOperator*{\tr}{Tr}
\newcommand{\Ptr}[2]{\mathrm{Tr}_{#2}\left[#1\right]}
\newcommand{\image}{\textrm{Im}}
\newcommand{\poly}{\mathrm{poly}}
\newcommand{\EqDef}{\stackrel{\mathrm{def}}{=}}
\newcommand\tab[1][1cm]{\hspace*{#1}}
\newcommand{\suppress}[1]{}
\newcommand{\myvec}[1]{\ensuremath{\mathsf{vec}}\left({#1}\right)}
\newcommand{\Id}{\mathbbm{1}}
\newcommand{\mcH}{\mathcal{H}}
\newcommand{\mcL}{\mathcal{L}}
\newcommand{\mcD}{\mathcal{D}}
\newcommand{\gs}{\Omega}
\newcommand{\im}{\mathrm{Im}}
\newcommand {\br} [1] {\ensuremath{ \left( #1 \right) }}
\newcommand {\minusspace} {\: \! \!}
\newcommand {\fn} [2] {\ensuremath{ #1 \minusspace \br{ #2 }}}
\newcommand{\norm}[1]{{\| #1 \|}}  
\newcommand{\ket}[1]{{ |{#1} \rangle }}  
\newcommand{\bra}[1]{{\langle {#1} | }}
\newcommand{\braket}[1]{{ \langle {#1} \rangle }}  
\newcommand{\ketbra}[2]{{ |{#1} \rangle\langle {#2} | }}
\newcommand{\dket}[1]{|#1\rangle\!\rangle}
\newcommand{\dmax}[2]{\fn{\mathrm{D}_{\max}}{#1 \middle\| #2}}
\newcommand{\dmin}[2]{\fn{\mathrm{D}_{\min}}{#1 \middle\| #2}}
\newcommand{\drel}[2]{\fn{\mathrm{D}}{#1 \middle \| #2}}
\newcommand{\mutinf}[2]{\fn{\mathrm{I}}{#1  :  #2}}
\newcommand{\maxinf}[2]{\fn{\mathrm{I}_{\max}}{#1 : #2}}
\newcommand{\maxinfeps}[2]{\fn{\mathrm{I}^\epsilon_{\max}}{#1 : #2}}
\newcommand{\maxinfdel}[2]{\fn{\mathrm{I}^\delta_{\max}}{#1 : #2}}
\newcommand{\SR}[2]{\fn{\mathrm{SR}}{#1 : #2}}
\newcommand{\ent}[1]{\fn{\mathrm{S}}{#1}}
\newcommand{\srho}[1]{\rho^{(#1)}}
\newcommand{\st}[1]{t^{(#1)}}
\newcommand{\ssigma}[1]{\sigma^{(#1)}}
\newcommand{\stau}[1]{\tau^{(#1)}}
\newcommand{\bigO}[1]{\ensuremath{\operatorname{O}\left(#1\right)}}
\newcommand{\bigtO}[1]{\ensuremath{\tilde{\operatorname{O}}\left(#1\right)}}
\newcommand{\bOmega}[1]{\ensuremath{\operatorname{\Omega}\bigl(#1\bigr)}}
\newcommand{\bTheta}[1]{\ensuremath{\operatorname{\Theta}\bigl(#1\bigr)}}
\newtheorem{theorem}{Theorem}[section]  
\newtheorem{remark}{Remark}[theorem]  
\newtheorem{definition}[theorem]{Definition}  
\newtheorem{claim}[theorem]{Claim}  
\newtheorem{lemma}[theorem]{Lemma}
\newtheorem{corol}[theorem]{Corollary}  
\newtheorem{fact}[theorem]{Fact}
\newcommand{\qedsymb}{\hfill{\rule{2mm}{2mm}}}  
\newenvironment{proof}[1][]{\begin{trivlist}  
  \item[\hspace{\labelsep}{\bf\noindent Proof#1:\/}]}
  {\qedsymb\end{trivlist}}
\newcommand{\eps}{\epsilon}  
\DeclareMathOperator{\mySR}{SR}
\title{Area laws and tensor networks for maximally mixed ground
states}
\author[1]{Itai Arad}
\author[2]{Raz Firanko}
\author[1,3,4]{Rahul Jain}
\affil[1]{Centre for Quantum Technologies, Singapore}
\affil[2]{Physics Department, Technion, Haifa 3200003, Israel
\footnote{Email: \texttt{razf680@campus.technion.ac.il}}}
\affil[3]{Department of Computer Science, 
  National University of Singapore}
\affil[4]{MajuLab, UMI 3654, Singapore 
  \footnote{Email:  \texttt{rahul@comp.nus.edu.sg}}}
\begin{document}
\maketitle

\begin{abstract}
  We show an area law in the mutual information for the
  maximally-mixed state $\gs$ in the ground space of general
  Hamiltonians, which is independent of the underlying ground space
  degeneracy. Our result assumes the existence of a `good'
  approximation to the ground state projector (a good AGSP), a
  crucial ingredient in previous area-law proofs.  Such approximations
  have been explicitly derived for 1D gapped local Hamiltonians and
  2D frustration-free locally-gapped Hamiltonians. As a
  corollary, we show that in 1D gapped local Hamiltonians, for any
  $\eps>0$ and any bi-partition $L\cup L^c$ of the system,
  \begin{align*}
    \maxinfeps{L}{L^c}_{\gs} \le \bigO{ \log (|L|\log(d))+\log(1/\eps)},
  \end{align*}
  where $|L|$ represents the number of sites in $L$, $d$ is the dimension of a site and
  $\maxinfeps{L}{L^c}_{\gs}$ represents the $\eps$-\emph{smoothed
  maximum mutual information} with respect to the $L:L^c$ partition
  in $\gs$. From this bound we then conclude $\mutinf{L}{L^c}_\gs
  \le \bigO{\log(|L|\log(d))}$ -- an area law for the mutual information in
  1D systems with a logarithmic correction. In addition, we show
  that $\gs$ can be approximated in trace norm up to $\eps$ with
  a state of Schmidt rank of at most $\poly(|L|/\eps)$, leading to
  a good MPO approximation for $\gs$ with polynomial bond
  dimension. Similar corollaries are derived for the mutual
  information of 2D frustration-free and locally-gapped local
  Hamiltonians.
\end{abstract}

\section{Introduction}
\label{sec:intro}

Understanding the structure of entanglement and correlations in
many-body quantum systems is a central problem in the theory of
condensed matter physics and quantum field theory. Properties of
this structure characterize different phases of matter and
transitions between them. From a computational point of view, the
amount of entanglement and correlations in many-body quantum
systems influences their computational complexity. For example, low
entanglement in a many-body quantum state can often be used to
construct an efficient classical representation of it. 

A useful method to characterize the amount of entanglement in a
many-body quantum state is by looking at the scaling behavior of the
\emph{mutual information} between a sub-region and the rest of the
system. This reduces to the \emph{entanglement entropy} of the
region when the underlying quantum state is pure. For random quantum
states, this quantity scales like the \emph{volume} of the region,
which saturates its maximal value. However, in many physically
interesting states, such as the ground states of local Hamiltonians,
mutual information and entanglement entropy often obey the so-called
\emph{area-law} behavior\cc{ref:Eisert2008AL}. In such cases, these
quantities scale like the surface area of the boundary between the
region and the rest of the system --- corresponding to a much lower
amount of correlations and entanglement.

Area laws are known to hold in several physically important states.
In particular, they have been shown to exist in Gibbs states
$\rho_G(\beta) \EqDef e^{-\beta H}/Z_\beta$, where $H$ is a local
Hamiltonian on a finite-dimensional lattice, $\beta$ is a finite
inverse temperature $\beta=\frac{1}{T}$ and $Z_\beta\EqDef \tr
e^{-\beta H}$ is a normalization factor, also known as the partition
function. In \cRef{ref:Cirac2008-Gibbs-AL} it has been shown that
for any region $L$ in the lattice and its complement region $L^c$,
the mutual information between $L$,$L^c$ is bounded by
$\mutinf{L}{L^c} = O(\beta\cdot|\partial L|)$. Therefore, when
$\beta=O(1)$, such Gibbs states satisfy an area-law in their mutual
information.

When the temperature goes to zero (equivalently, $\beta\to \infty$),
$e^{-\beta H}$ becomes proportional to the ground space projector
$\Pi_{gs}$, and the Gibbs state becomes the maximally-mixed state in
the ground space $V_{gs}$, which we call the \emph{maximally-mixed
ground-state}:
\begin{align}
  \gs \EqDef \Pi_{gs}/\tr\Pi_{gs}
  = \lim_{\beta\to \infty} e^{-\beta H}/Z_\beta .
\end{align} 
While the bound $\mutinf{L}{L^c} = O(\beta\cdot|\partial L|)$ from
\cRef{ref:Cirac2008-Gibbs-AL} becomes trivial in this limit, it is
often true that $\mutinf{L}{L^c}$ remains small. More precisely, when
the underlying Hamiltonian has a finite spectral gap and a
non-degenerate ground state, it is conjectured that its ground state
satisfies an area law of entanglement entropy. This is known as the
\emph{area-law conjecture}\cc{ref:Eisert2008AL}. This conjecture was
first shown to hold in non-interacting, relativistic field
theories\cc{ref:Bombelli1986-AL, Srednicki1993-AL}, as well as in
several exactly solvable models\cc{ref:Audenaert2002-AL,
ref:Vidal2003-AL}.  It was then rigorously proven for 1D
systems\cc{ref:Hastings2007, ref:ALV2012-AL, ref:Arad2013-1DAL}
using very different methods than the one used in the Gibbs state
case\cc{ref:Cirac2008-Gibbs-AL}. Finally, it was also proven for
higher dimensional lattices under various additional
assumptions\cc{ref:Masanes2009-AL,
ref:Cho2014-AL,ref:Brandao2015-AL, ref:anshu2022-2DAL}.

Over the past decade, the area-law conjecture was the subject of an
intensive research aimed at expanding the set of systems for which
it is shown to hold. A central challenge is of course to fully prove
it in 2D or higher dimensions without additional assumptions.
Another important line of research is to understand its validity in
the presence of \emph{ground space degeneracy}. To what extent do
all states in the ground space satisfy an area-law? How does the
bound depend on the ground space dimension? 

Already from the first proofs of the 1D area
law\cc{ref:Hastings2007, ref:ALV2012-AL, ref:Arad2013-1DAL}, it was
evident that as long as the ground state degeneracy is constant, one
can find a basis of ground states that satisfy an area-law (see also
\cRefs{ref:Huang2014-alg, ref:Chubb2015-computing}). This result was
further strengthened in \cRef{ref:arad2017rigorous} and then in
\cRef{ref:abrahamsen2020deg-AGSP} for ground spaces with higher
degeneracy. There it was shown that if the ground space degeneracy
is $r=\dim V_{gs}$, then for every $\ket{\psi}\in V_{gs}$ the
bi-partite entanglement entropy across any cut is upper bounded by
$O(\log r)$ (where we have taken the spectral gap and the local
Hilbert dimension to be $O(1)$). It is easy to verify that the $r$
scaling of this bound is optimal: for example, one can construct a
1D classical local Hamiltonian with $r=2^{O(n)}$ and find within
this subspace states with entanglement entropy of $O(n)$ across a
cut in the middle of the system.

The above discussion implies that in the high-degeneracy regime, not
all ground states necessarily obey an area law. However, of all the
ground states, there is one state of central importance, which is
$\gs$ --- the maximally mixed ground state.  In this paper, we
extend the AGSP (Approximate Ground Space Projector)
framework\cc{ref:ALV2012-AL, ref:Arad2013-1DAL,
ref:arad2017rigorous, ref:abrahamsen2020deg-AGSP,
ref:anshu2022-2DAL}, which is used to prove area-laws for pure ground states, to the case
of maximally mixed ground states. Specifically, we show that if
there exists a good AGSP for the Hamiltonian, in the sense of a 
favorable scaling between its closeness to the exact ground state
projector and its Schmidt rank (this shall be defined precisely in
\Def{def:AGSP} below), then, \emph{the maximally mixed ground state
satisfies an area-law in the mutual information regardless of the
ground space degeneracy}.  Good AGSPs are known to exist for
gapped 1D systems\cc{ref:Arad2013-1DAL}, as well as for 2D systems
that are frustration-free and locally
gapped\cc{ref:anshu2022-2DAL}.

Our results are in fact stronger; we show that when a good AGSP exists, then for
every contiguous set of qudits $L$ on the lattice, the
\emph{$\eps$-smoothed maximum information $\maxinfeps{L}{L^c}$},
which is closely related to the mutual information (see precise
definition in \Sec{sec:overview}) also satisfies an area-law.
Finally, we use that result to show that in 1D, for every
$\eps>0$, there exists a state $\gs_\eps$ such that $\norm{\gs -
\gs_\eps}_1\le \eps$ and $\mySR(\gs_\eps) =
O\big(\poly(|L|/\eps)\big)$ where $\mySR(\cdot)$ is the Schmidt rank
of $\gs_\eps$ (see Definition~\ref{def:op-SR} for an exact
statement) and $|L|$ is the size of the set $L$. Using this, we
construct a tensor-network approximation $\Psi$ with bond dimension
$\poly(n/\eps)$ for the maximally-mixed ground state in 1D,
for which $\norm{\Psi-\gs}_1\le \eps$.

The structure of the highly degenerate maximally-mixed ground-state
is interesting in several aspects. 
First, as $\gs$ is the zero temperature Gibbs state, for which our result establishes an area law using a good AGSP, it is natural to ask whether such AGSP-based techniques could be extended to derive area laws for Gibbs states at arbitrary temperatures. While mutual information area laws for finite-temperature Gibbs states of local Hamiltonians are already known\cc{ref:Cirac2008-Gibbs-AL}, it is unclear whether the existence of a good AGSP is sufficient to imply an area law at \textbf{all} temperatures, particularly for general gapped Hamiltonians. Second, the maximally-mixed state
is important from an information-theoretic point of view. It is
proportional to the ground-state projector, and therefore if it
satisfies an area-law, this implies non-trivial locality properties
of that operator. In particular, in 1D, our results can be phrased
as saying that the existence of a good AGSP --- which is essentially
a low entanglement operator that is a good approximation to the
ground state projector in the $L_\infty$ norm --- implies a good
matrix-product-operator (MPO) approximation in $L_1$ norm to the
maximally-mixed state that corresponds to that
ground state projector.
Finally,
exponentially degenerate ground spaces appear naturally in
Hamiltonian quantum complexity. For example, frustration-free
Hamiltonians that satisfy the conditions of the quantum Lovász local
lemma\cc{ref:Ambians2012-QLLL} have an exponential ground-state
degeneracy. In addition, such ground spaces might be relevant also
for understanding the structure of ground states in 2D
frustration-free systems. For such systems, we may consider the
partial Hamiltonian on a row, or on a column. Its ground space will
generally have an exponential degeneracy, and the global ground
state will be the intersection of these spaces. Understanding the
locality of the projectors into ground spaces of these partial
Hamiltonian (which, as we noted are proportional to their
corresponding maximally mixed state) can be useful for understanding
the global ground states.

Our proof enhances the AGSP framework\cc{ref:ALV2012-AL,ref:Arad2013-1DAL} for proving ground
state area laws with powerful tools from quantum information. In
\cRefs{ref:ALV2012-AL, ref:Arad2013-1DAL} an AGSP was used inside a
simple bootstrapping argument: it was shown that if there exists a
good AGSP, then there exists a product state with a large
overlap with the ground state --- a small
$\dmin{\gs}{\sigma_L\otimes \sigma_{L^c}}$ in the quantum
information terminology. Here we use the AGSP in a more elaborate
bootstrapping argument to upper-bound $\maxinf{L}{L^c}_{\gs'}$ of a
state $\gs'$ that is $\eps$-close to $\gs$. This provides us with a
bound on the maximal smooth information $\maxinfeps{L}{L^c}_\gs$ of
the maximally mixed ground state, from which the bound on
$\mutinf{L}{L^c}_\gs$ can be deduced using the continuity of the
mutual information and the fact that the max information upperbounds
it.

The structure of this paper is as follows. In \Sec{sec:results} we
give an exact statement of our results, together with the definition
of necessary measures of quantum information on which they rely. In
\Sec{sec:overview} we give an overview of our proof. In
\Sec{sec:background} we provide the necessary mathematical
background and preliminary results for the proofs. Finally, in
\Sec{sec:proofs} we give the full proof.

\section{Statement of the results}
\label{sec:results}

We consider a geometrically local Hamiltonian system $H=\sum_i h_i$,
defined on a finite-dimensional lattice. We assume the system is
made of $n$ qudits (spins) of local dimension $d$.  We let $V_{gs}$
denote the ground space of $H$, and $\Pi_{gs}$ be the projector into
$V_{gs}$. Finally, we denote the maximally-mixed state in $V_{gs}$
(i.e., the maximally-mixed ground state) by $\gs \EqDef \Pi_{gs}/r$,
where $r\EqDef \tr(\Pi_{gs}) = \dim(V_{gs})$ is the degeneracy of
the ground space.

To state our main result, we need the notion of an 
\emph{Approximate Ground Space Projector} (AGSP), which is a key
ingredient in many recent area-law proofs\cc{ref:ALV2012-AL,
ref:Arad2013-1DAL, ref:arad2017rigorous, ref:anshu2022-2DAL,
ref:abrahamsen2020deg-AGSP, abrahamsen2020sub}. Intuitively, by
working with an AGSP, we trade the accuracy of our approximation
for a good control of its locality. This translates to a tradeoff
between how close we approach the ground space and how much
entanglement we create on the way there. These two quantities are
characterized by the parameters $D$ and $\Delta$ that constitute a
$(D,\Delta)$-AGSP\footnote{Throughout this work, ‘’$\preceq$’’ denotes the standard operator ordering (see \Sec{sec:background}).}:
\begin{definition}[A $(D,\Delta)$-AGSP]
\label{def:AGSP}
  Let $H$ be a local Hamiltonian defined on some finite dimensional
  lattice with a ground state projector $\Pi_{gs}$, and let $L\cup
  L^c$ be a bi-partition of this lattice. For an integer $D\ge 1$
  and parameter $\Delta \in [0,1]$, an operator $K$ is called a
  $(D,\Delta)$-approximate ground state projector (AGSP) for
  $\Pi_{gs}$ with respect to the bi-partitioning $L\cup L^c$ if
  \begin{enumerate}
    \item \label{AGSP:bul1} $K\Pi_{gs} = K^\dagger\Pi_{gs}= \Pi_{gs}$.
  
    \item \label{AGSP:bul2} $K(\Id-\Pi_{gs}) K^\dagger  
      \preceq \Delta (\Id -\Pi_{gs})$.
    
    \item \label{AGSP:bul3} $K$ can be written as $K=\sum_{i=1}^D X_i 
      \otimes Y_i$, where $X_i \in \mcL(\mcH_L)$ and $Y_i
      \in \mcL(\mcH_{L^c})$. 
  \end{enumerate}
\end{definition}
Intuitively, $D$, which is the Schmidt rank of $K$, characterizes
how much entanglement it creates, and $\Delta$ tells us how quickly
it takes us to the ground space. Note that in \cRefs{ref:ALV2012-AL,
ref:Arad2013-1DAL} the condition on $\Delta$ was formulated as
$\norm{K\ket{\gs^\perp}}^2\le \Delta$ for every normalized vector
$\ket{\gs^\perp}$ that is perpendicular to the ground space. It is
easy to see that this, combined with \ref{AGSP:bul1}, is equivalent
to the condition $K(\Id-\Pi_{gs}) K^\dagger \preceq \Delta (\Id
-\Pi_{gs})$ above.

In \cRefs{ref:ALV2012-AL, ref:Arad2013-1DAL} it was shown that in
the case of a unique ground state, the existence of a
$(D,\Delta)$-AGSP with $D\cdot \Delta<1/2$ implies an upper-bound
$O(\log D)$ on the entanglement entropy. This was called the
\emph{bootsrapping lemma}. The problem of proving an area-law was
therefore reduced to the task of finding a ``good AGSP'' in which
$D\cdot \Delta <1/2$ and $\log(D) = O(|\partial L|)$, where
$\partial L$ is the boundary between $L, L^c$. Such AGSPs were found
for general local 1D systems with a global gap\cc{ref:ALV2012-AL,
ref:Arad2013-1DAL} and more recently also for 2D frustration-free
systems that are locally gapped\cc{ref:anshu2022-2DAL}. To a large
extent, our main result is a bootstrapping lemma for the maximally
mixed ground state, which shows how a good AGSP implies an area-law
for that state.

To state our results, we will also need to define some
generalizations of the notion of quantum relative entropies and
mutual information, which are commonly referred to as ``min-max
relative entropies''\cc{ref:Datta2009-max-ent}. To this aim, we
shall denote the set of quantum states over a Hilbert space $\mcH$
by $\mcD(\mcH)$, which is the convex set of Hermitian operators
$\rho$ over $\mcH$ with $\tr(\rho)=1$ and $\rho\succcurlyeq 0$. We
will also let $\mcD_-(\mcH)$ denote the set of \emph{sub-states},
for which the $\tr(\rho)=1$ requirement is relaxed to $\tr(\rho)\in
(0,1]$. For any (sub-)state $\rho$ we let $\image(\rho)$ denote its
image subspace and $\Pi_\rho$ the projector into that subspace. 
The min-max relative entropies are defined as follows
\begin{definition}[min-max relative entropies]
  Let $\rho,\sigma\in \mcD(\mcH)$ such that $\image(\rho)\subseteq
  \image(\sigma)$. We define,
  \begin{enumerate}
  \item \label{def:ent} Entropy of $\rho$: 
      \begin{align*}
      \ent{\rho} \EqDef - \Tr{\rho\log(\rho)}.
  \end{align*}
    \item \label{def:rel-ent} Relative entropy of $\rho$ with respect 
      to $\sigma$:
      \begin{align*} 
        \drel{\rho}{\sigma} \EqDef \Tr{\rho\log(\rho)} 
          - \Tr{\rho\log(\sigma)}.
      \end{align*}
                
      \item \label{def:max-rel-ent} Max relative entropy of $\rho$ 
        with respect to $\sigma$:
        \begin{align*}
          \dmax{\rho}{\sigma} 
            \EqDef \min \{\log t\in \mathbb R \; ; \; \rho 
              \preceq t \sigma \}.
        \end{align*}

      \item \label{def:min-rel-ent} Min relative entropy of $\rho$ 
        with respect to $\sigma$:
        \begin{align*} 
          \dmin{\rho}{\sigma} \EqDef 
            -\log\left(\Tr{\Pi_\rho \cdot \sigma }\right).
         \end{align*} 
\end{enumerate}
\end{definition}
We note that definitions \ref{def:max-rel-ent} and
\ref{def:min-rel-ent} above can be naturally generalized
to the case where $\rho \in \mcD_-(\mcH)$. For more information, see
\cRefs{ref:Datta2009-max-ent,ref:Berta2011quantum}. 

With these definitions at hand, we
define the corresponding mutual information measures as
follows:
\begin{definition}
  Let $\mcH = \mcH_L \otimes \mcH_R $ and $\rho_{LR} \in \mcD(\mcH)$. 
  Define,
  \begin{enumerate}
    \item \label{def:MI} Mutual information:
      \begin{align*} 
        \mutinf{L}{R}_{\rho} \EqDef \min_{\substack{\sigma_L \in
    \mcD(\mcH_L) \\ \sigma_R \in \mcD(\mcH_R)}}
            \drel{\rho}{\sigma_L\otimes \sigma_R}.
      \end{align*}
      
     \item \label{def:max-MI} Max mutual information:
       \begin{align*} 
         \maxinf{L}{R}_{\rho}\EqDef \min_{\substack{\sigma_L \in
    \mcD(\mcH_L) \\ \sigma_R \in \mcD(\mcH_R)}}
             \dmax{\rho}{\sigma_L\otimes \sigma_R}.
        \end{align*}

      \item \label{def:eps-max-MI} $\eps$-smoothed max mutual 
        information:
        \begin{align*} 
          \maxinfeps{L}{R}_\rho
            \EqDef \min_{\eta\in B_\eps(\rho)}
              \maxinf{L}{R}_\eta ,
        \end{align*}
        where $B_\eps(\rho)$ is the trace-norm ball around $\rho$,
    defined by:
        \begin{align*}
            B_\eps(\rho) \EqDef\{\eta\in \mcD_-(\mcH)\; ; \; 
                \norm{\rho-\eta}_1 \le \eps \} .
        \end{align*}
    \end{enumerate}
\end{definition}
Note that in the definition of $\maxinfeps{L}{R}_\rho$, the
minimization is over \emph{sub-states} that are $\eps$-close to
$\rho$. 
It is also worth noting that there are several ways to define the max mutual information, depending on whether the minimization is performed over $L$ or $R$, while fixing the other register as the marginal. However, these definitions are in fact equivalent (see \cRef{ref:Renner2013smooth}). The definition presented above is the suitable choice for the purpose of this work.
Also note that in the definition of mutual information the
minimum is obtained by taking $\sigma_L\otimes\sigma_R =
\rho_L\otimes\rho_R$, which yields the familiar formula
$\mutinf{L}{R}_{\rho}=\drel{\rho}{\rho_L\otimes \rho_R} =
\ent{\rho_L} + \ent{\rho_R} - \ent{\rho_{LR}}$. The same relation,
however, does not hold for $\maxinf{L}{R}_\rho$. Finally, also here
definitions~\ref{def:max-MI} and \ref{def:eps-max-MI} allow for
$\rho\in\mcD_-(\mcH)$.

We are now ready to state our main result, which is a bootstrapping
result for the $\eps$-smoothed max mutual-information.
\begin{theorem}[Area law bootstrapping for the $\eps$-smoothed maximum
  information]\ \label{thm:bootstrapping} 
  
  Let $H=\sum_i h_i$ be a local Hamiltonian on some lattice with a
  bi-partition $L\cup L^c$, and $d_L$ denote the Hilbert space dimension of
  subsystem $L$.
  Let $\gs$ denote its maximally-mixed
  ground-state. 
  Given an $\eps>0$, assume that there exists a
  $(D,\Delta)$-AGSP with respect to the $L\cup L^c$ bi-partitioning
  such that $D^2\cdot \Delta \le c_0\cdot \left(\frac{\eps}{\log
  d_L}\right)^8$, with $c_0=10^{-16}$. Then,
  \begin{align} \label{eq:max-AL}
    \maxinfeps{L}{L^c}_{\gs} \le 2\log D 
      + 12\log \Big(\frac{\log d_L}{\eps}\Big) + c_1 ,
  \end{align}
  where $c_1\approx 76$ is a universal constant.
\end{theorem}
Taking $\eps=(\log d_L)^{-1}$ and using the continuity of mutual
information (Fact~\ref{fact:cont-of-D}), we can turn the above
result into the following bound on the mutual information
\begin{corol}[Bootstrapping for the mutual information]
\label{corol:mutual-info}
  Under the same conditions in \Thm{thm:bootstrapping}, if there
  exists a $(D,\Delta)$-AGSP $K$ with $D^2\cdot \Delta \le c_0\cdot
  (\log d_L)^{-16}$ then the mutual information in the maximally
  mixed ground state between $L$ and $L^c$ is upperbounded by
  \begin{align}
    \mutinf{L}{L^c}_{\gs} \le 2\log D + 24\log \log d_L + O(1) .
  \end{align}
\end{corol}

Using our bootstrapping results together with the 1D and 2D AGSP
constructions of \cRef{ref:Arad2013-1DAL} and
\cRef{ref:anshu2022-2DAL} (see \Sec{sec:AGSP}), we get the following area laws
\begin{corol}[Area law for the maximally mixed ground state in 1D]
\label{corol:1D-AL} 
  Let $H=\sum_{i=1}^{n-1} h_i$ be a 1D local Hamiltonian over
  qudits with a spectral gap $\gamma$, and let
  $\gs$ be its maximally-mixed ground-state.  For every contiguous
  segment $L$ in the 1D lattice and for every $\eps>0$ 
  such that\footnote{As expected from gapped local Hamiltonians on qudits of constant dimension.}
  $\gamma^{-1}\cdot\log^3(d/\gamma) = \bigO{\log(|L|\log (d)/\eps)} $, 
  \begin{align}\label{eq:1D-AL}
  \log(D) & = \bigO{\gamma^{-1/3}\cdot\log(|L|\log (d)/\eps)},\\
    \maxinfeps{L}{L^c}_{\gs} 
      & = \bigO{\gamma^{-1/3}\cdot\log(|L|\log (d)/\eps)},\\
      \mutinf{L}{L^c}_{\gs} & = \bigO{\gamma^{-1/3}
        \cdot\log(|L|\log (d))} .
  \end{align}
  where $|L|$ denotes the number of qudits in $L$
  and $D$ is the Schmidt rank of an AGSP that suites the conditions in \Thm{thm:bootstrapping}. 
  If we have $\gamma^{-1}\cdot\log^3(d/\gamma) = \bOmega{\log(|L|\log (d)/\eps)}$,
  then we would get that $\log(D)$, $\maxinfeps{L}{L^c}_{\gs}$ and $\mutinf{L}{L^c}_{\gs}$ are $\bigO{\gamma^{-1}\cdot \log^3(d/\gamma)}$.
\end{corol}

\begin{corol}[Area law for the maximally mixed ground state in 2D]
\label{corol:2D-AL} 
  Let $H$ be a 2D frustration-free local Hamiltonian on a
  rectangular lattice of $d$-dimensional qudits with $d=\bigO 1$ and
  an $O(1)$ local spectral gap, and let $\gs$ be its maximally mixed
  ground state.  Then for every bi-partitioning of the system
  $L:L^c$ along a vertical arc $\partial L$ and for every $\eps>0$,
  \begin{align}
    \maxinfeps{L}{L^c}_{\gs} & 
     = \bigO{\log(1/\eps) 
       \cdot |\partial L|^{1+\bigO{\log^{-1/5}|\partial L|}}},
  \end{align} 
  where $|\partial L|$ denotes the length of the
  boundary line $\partial L$. In addition,
  \begin{align}
    \mutinf{L}{L^c}_{\gs} = \bigO{ 
      |\partial L|^{1+\bigO{\log^{-1/5}|\partial L|}}} .
  \end{align}
\end{corol}
We remark that if a better AGSP is discovered in the future 
for 2D Hamiltonians so that $D^2\Delta\leq 1/2$ and
$\log(D)=\bigO{|\partial L|}$ instead of current results (given in
\Eq{def:2D-D} in \Sec{sec:AGSP}), our bootstrapping theorem would
yield $\maxinfeps{L}{L^c}_{\gs} = \bigO{\log(|L|/\eps) \cdot
{|\partial L|}}$ and $\mutinf{L}{L^c}_{\gs} = \bigO{ {|\partial
L|}\cdot\log|L|}$.

{~}

In addition to area-law bounds on the mutual information, we can
also use the $\eps$-smoothed maximum information bound to show the
existence of a low Schmidt-rank approximation for the
maximally-mixed ground state. 

We show that under the same
settings as in \Thm{thm:bootstrapping}, one can obtain an
approximation to the maximally-mixed ground state with a low
operator Schmidt rank (see \Def{def:op-SR}).
In fact, the tools that we introduce
enable us to prove an even stronger result: the maximally-mixed
ground state can be purified on a larger system for which there is a low
Schmidt-rank approximation due to any cut.
\begin{theorem}[Low Schmidt-rank approximation]
\label{thm:LowSR}  
   Let $\eps>0$, and let $H=\sum_i h_i$ be a local Hamiltonian on
  some lattice of qudits with a maximally-mixed ground state $\gs$.
   Suppose that for any bi-partition of the lattice
  $L\cup L^c$, there exists a $(D,\Delta)$-AGSP such that $D^2\cdot
  \Delta \le c_0\cdot \left(\frac{\eps}{\log d_L}\right)^8$ where 
  $d_L=\dim(\mcH_L)$ and $c_0$ is the universal constant from
  \Thm{thm:bootstrapping}. Then there exists an auxiliary system
  $E$ and a purification $\gs_{A}\mapsto \ket \gs_{A\tilde A E}$ 
  such that
  $\gs_A=\Ptr{\ketbra \gs \gs} {\tilde A E}$, and for any bi-partition of the lattice
  $A=L\cup L^c$, there is a state $\ket {\psi^{(L)}}_{A\tilde A E}$ for which:
  1.~$\norm{\ket{\gs}-\ket{\psi^{(L)}}}^2\le\eps$.
  2. The Schmidt rank of $\ket{\psi^{(L)}}_{A\tilde A E}$ with respect to the $L\tilde L: L^c\tilde{L^c} E$  bi-partition satisfies 
  \begin{align*}
      \mySR (\ket {\psi^{(L)}})\le 
    	  49 D^2 \cdot \Big(\frac{\log d_L}{\eps}\Big)^2.
  \end{align*}
\end{theorem}

Our final result is restricted to scenarios where the underlying lattice is $1D$. We derive an MPO
(matrix-product-operator) approximation to the maximally mixed
ground state. To construct such a tensor network, one needs to
project onto the largest Schmidt-states with respect to any cut in
the 1D lattice, while controlling the truncation error resulting
from each of these. The analysis of these sequential projections is
best suited to $L_2$ norm rather than the $L_1$ norm we have used so
far. For this reason, we prefer to work with the purification of the
state, given in \Thm{thm:LowSR}, instead of the original density operator.  Choosing
$\eps'=\eps/n$ and truncating sequentially with each cut, we get a
1D tensor network structure, with bond dimension which is
$\poly(n/\eps)$ as guaranteed by \Thm{thm:LowSR}. As a result, we
obtain a MPS (matrix-product-state) tensor network approximation to
the purification of the ground state, which results in an MPO after
tracing out the auxiliary systems (see \Fig{fig:MPO}).

\begin{corol}[An MPO
  approximation for the maximally-mixed ground-state in 1D]
  \label{corol:TN} Let $H=\sum_{i=1}^{n-1} h_i$ be a 1D local
  Hamiltonian of qudits with $d=\bigO{1}$ and an $O(1)$ spectral gap, and let $\gs$ be its
  maximally-mixed ground-state and $\eps>0$. Then there is a
  matrix-product-operator (MPO) state $\Psi$ with
  $\poly(n/\eps)$-bond dimension such that $\norm{\gs-\Psi}_1\le
  \eps$.
\end{corol}

The proofs of our main bootstrapping theorem and its following
corollaries are given in \Sec{sec:proofs}. \Thm{thm:LowSR} and
corresponding tensor-network is derived in \Sec{Sec:approx}.
\begin{remark}
  Note that the actual local Hamiltonian is never used
  directly in our proofs, but only
  implicitly for deriving a good AGSP (see Facts~\ref{fact:1D-AGSP}
  and \ref{fact:2D-AGSP}). One can 
  therefore state our results in terms of a normalized projector
  $\gs=\Pi/\Tr{\Pi}$ where $\Pi$ admits a good
  $(D,\Delta)$-approximation (as in \Def{def:AGSP}).
\end{remark}

\section{Overview of the proof of \Thm{thm:bootstrapping}}
\label{sec:overview}

Let us present the idea of the proof in the following scenario. Let
$\gs$ be the maximally-mixed ground state of a local Hamiltonian system
on a lattice of qubits, and consider a bi-partition of the system
into two parts, $L$ and $R$\footnote{We are changing the notation
here from $L\cup L^c$ to $L\cup R$ --- but this is merely to reduce
the clutter in our notation.}. By definition of the maximum
information, there exists a product state $\sigma_L\otimes\sigma_R$
such that $\gs \preceq t \sigma_L\otimes \sigma_R$, where
$t=2^{\maxinf{L}{R}_{\gs}}$ is the \emph{minimal} factor that is
needed to upperbound $\gs$ by a product state. Our goal is to
upperbound $t$. 

We now assume that there exists a $(D,\Delta)$-AGSP $K$, for which
$D^2\cdot\Delta \le 1/2$ (see
\Def{def:AGSP} in \Sec{sec:results} for a
formal statement) and apply it on both sides of the inequality $\gs \preceq
t \sigma_L\otimes \sigma_R$, to obtain
\begin{align}
\label{eq:overview-I}
    \gs = K\gs K^\dagger  \preceq K\sigma_L\otimes \sigma_R K^\dagger.
\end{align}
We now perform a procedure in the spirit of the bootstrapping lemma
from \cRef{ref:Arad2013-1DAL} adapted to mixed states and max
information. Using \Lem{lem:SR-Dmax} and the fact that the Schmidt
rank of $K$ is $D$, we upperbound the maximum information of
$K\sigma_L\otimes\sigma_R K^\dagger$ using a
product state $\tau_L\otimes \tau_R$ such that
\begin{align}
\label{eq:overview-II}
  K\sigma_L\otimes \sigma_R K^\dagger 
    \preceq \tr(K\sigma_L\otimes\sigma_R K^\dagger)\cdot D^2
      \cdot \tau_L\otimes \tau_R.
\end{align}
Using the fact that $K$ approximates the ground state projector, we
can upperbound the trace by decomposing
$K\sigma_L\otimes\sigma_R K^\dagger$ to the ground state part and 
 orthogonal part: $\tr(K\sigma_L\otimes\sigma_R K^\dagger) =
\tr(\Pi_{gs}K\sigma_L\otimes\sigma_R K^\dagger) +
\Tr{(1-\Pi_{gs})K\sigma_L\otimes\sigma_R K^\dagger}$. Since the
orthogonal part is shrunk by the AGSP by a factor of $\Delta$ and
$\Pi_{gs}$ is fixed by the AGSP, we get from \eqref{eq:overview-I}
and \eqref{eq:overview-II} that
\begin{align*}
  \gs \preceq t\cdot\big(\Tr{\Pi_{gs}\sigma_L\otimes\sigma_R}
    + \Delta\big)\cdot D^2 \cdot \tau_L\otimes \tau_R.
\end{align*}
The final step is to note that $t$ is the minimal factor that is
needed to upperbound $\gs$ by a product state, and therefore
necessarily $t\le t\cdot\big(\Tr{\Pi_{gs}\sigma_L\otimes\sigma_R}
+\Delta\big) \cdot D^2$. Assuming that $K$ is a ``good AGSP'' with
$D^2\Delta\le 1/2$, we conclude that
$\tr(\Pi_{gs}\sigma_L\otimes\sigma_R) \ge 1/2D^2$. 

To finish the proof we make the following crucial
assumption: suppose that $\sigma_L\otimes\sigma_R$ is a \emph{flat
state}, i.e. it is proportional to a projector on its support. Then
Fact~\ref{fact:Dmax_flat} tells us that
$\tr(\Pi_{gs}\sigma_L\otimes\sigma_R)=1/t$, which implies that 
$1/t \ge 1/2D^2$ and therefore $t\le 2D^2$.

Note, however, that the assumption of $\sigma_L\otimes \sigma_R$
being flat is hard to justify. Instead, another technique is
required to relate the ground state overlap of
$\sigma_L\otimes\sigma_R$ to the maximum information. For this, we
use the so-called ``brothers extension''\cc{ref:Anurag-Rahul-2016}
which extends $\gs$ and $\sigma_L\otimes\sigma_R$ to a larger
Hilbert space where the brothers extension of
$\sigma_L\otimes\sigma_R$ becomes flat, as specified in
\Lem{lem:flat}. This creates another problem though: the brothers
extension requires projecting out from $\gs$ contributions from
the small spectrum of $\sigma_L\otimes\sigma_R$, and hence
results in a density matrix $\rho$ that is $\delta$-close to $\gs$, but not
$\gs$ itself. To handle this, we generate (using a similar yet more
complicated procedure) a sequence of states $\{\srho k\}_k$ which are
in the $\eps$ ball of $\gs$, together with a sequence of positive
numbers $\st k$ and product states $\ssigma{k}$ such that $\srho
k\preceq \st k \ssigma k$. 
Using various techniques like the brothers (flat) extension and the quality of the AGSP,
one can relate the change in $\st{k}\mapsto\st{k+1}$ with the maximum information of $\srho k$ (\Eq{eq:next-t}).
Due to a saturation argument of the maximum information 
(\Lem{lem:Imax_bound}), we conclude that sequence $\{\st k\}$ should accumulate,
resulting in an upper bound on the maximum information within the $\eps$-ball around $\gs$.

\section{Preliminaries and mathematical background} 
\label{sec:background}

This section provides the information-theoretic preliminaries,
notations, definitions, facts, and lemmas needed to prove our main
result.
\subsection{States} We denote the Hilbert space of a system
$A$ with $\mcH_A$ and the dimension of $\mcH_A$ as $d_A$. Let the
set of linear operators on $\mcH_A$ be $\mcL(\mcH_A)$; the set of
states (density operators) on $\mcH_A$ be $\mcD(\mcH_A)$, and the
set of sub-states be $\mcD_-(\mcH_A) \EqDef \{\rho\in
\mcL(\mcH_A)\;|\; \rho\succeq 0, \Tr{\rho}\in (0,1] \}$. Let
$\norm{M}$ denote the operator (spectral) norm of the operator $M$,
and $\norm{M}_1$ denote the trace norm, i.e. $\norm{M}_1 \EqDef
\Tr{\sqrt{M^\dagger M}}$. Let $\Spec(M)$ denote the set of its
distinct eigenvalues. Let $\image(M)$ represent the image of an
operator $M$, $d_M \EqDef \dim(\im(M))$, and $\Pi_M$ represent the
projector onto $\image(M)$. Let $\Id$ represent the identity
operator. For $M \in \mcL(\mcH_L \otimes \mcH_R)$, its Schmidt rank
across the $L:R$ cut is denoted $\SR{L}{R}_M$. For operators $M, N$,
we write $M \succeq N$ to represent that $M-N \succeq 0$, that is $M-N$ is
positive semi-definite. 
We now extend the definition of Schmidt rank to operators:
\begin{definition}[Operator Schmidt rank]
\label{def:op-SR} 
  Let $X$ be an operator on a bi-partitioned system with a
  Hilbert space $\mcH_{AB}=\mcH_A\otimes \mcH_B$. Then the Schmidt
  rank of $X$ with respect to the $(A,B)$ bi-partition is defined by the minimal number of product operators
  needed to express it:
  \begin{align}
    \mySR(X) \EqDef \min \Big\{R \; ; \; \exists \{A_i, B_i\}
  \ \text{s.t.,}\  X = \sum_{i=1}^R A_i\otimes B_i \Big\} .
  \end{align}
\end{definition}

The following are three basic facts about states and measurements that
we shall use later in our proof.

\begin{fact} 
\label{fact:norm} 
  Let $\ket{v}, \ket{w}$ be unit vectors. Then,
  \begin{align*}
    \norm{\big(\ketbra{v}{v}-\ketbra{w}{w}\big)}_1
      = 2\sqrt{1 - |\braket{v|w}|^2}.
  \end{align*}
\end{fact}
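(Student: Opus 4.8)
The plan is to use that $M \EqDef \ketbra{v}{v} - \ketbra{w}{w}$ is a Hermitian, traceless operator whose support is contained in the at-most-two-dimensional subspace $\Span\{\ket{v},\ket{w}\}$. Restricted to that subspace, $M$ is a traceless Hermitian matrix, so (by the spectral theorem) its nonzero eigenvalues form a pair $\pm\lambda$ with $\lambda\ge 0$, where $\lambda=0$ exactly when $\ket{v}$ and $\ket{w}$ are linearly dependent. Consequently $\norm{M}_1 = 2\lambda$, and it only remains to identify $\lambda$.

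To pin down $\lambda$ without diagonalizing explicitly, I would compute $\Tr{M^2}$: for a traceless Hermitian operator with eigenvalues $\pm\lambda$ one has $\Tr{M^2}=2\lambda^2$. A direct expansion using cyclicity of the trace and $\braket{v|v}=\braket{w|w}=1$ gives $\Tr{M^2} = \Tr{\ketbra{v}{v}} + \Tr{\ketbra{w}{w}} - 2\Tr{\ketbra{v}{v}\,\ketbra{w}{w}} = 2 - 2\abs{\braket{v|w}}^2$. Hence $\lambda = \sqrt{1-\abs{\braket{v|w}}^2}$ and $\norm{M}_1 = 2\lambda = 2\sqrt{1-\abs{\braket{v|w}}^2}$, as claimed. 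Alternatively, one can write $\ket{w} = \alpha\ket{v} + \beta\ket{v^\perp}$ with $\alpha=\braket{v|w}$, $\abs{\alpha}^2+\abs{\beta}^2=1$, and $\ket{v^\perp}\perp\ket{v}$ a unit vector, then read off the $2\times 2$ matrix of $M$ in the basis $\{\ket{v},\ket{v^\perp}\}$; its determinant is $-\abs{\beta}^2 = -(1-\abs{\alpha}^2)$, giving the same eigenvalues $\pm\sqrt{1-\abs{\braket{v|w}}^2}$.

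There is essentially no serious obstacle here; the only point requiring a word of care is the degenerate case $\abs{\braket{v|w}}=1$, where $\ket{v}$ and $\ket{w}$ differ by a phase, $M=0$, and both sides of the identity vanish, so the formula still holds. The whole argument relies only on the spectral theorem for Hermitian matrices and elementary trace identities.
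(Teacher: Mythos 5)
Your proof is correct. The paper states this as a standard fact and does not supply a proof of its own, so there is nothing to compare against; your argument via the traceless Hermitian restriction to $\Span\{\ket{v},\ket{w}\}$, computing $\Tr{M^2}=2-2\abs{\braket{v|w}}^2$ to extract the eigenvalues $\pm\sqrt{1-\abs{\braket{v|w}}^2}$, is the standard derivation, and your handling of the degenerate case $\abs{\braket{v|w}}=1$ is appropriate.
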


\begin{fact}[Theorem III.4.4 in \cRef{ref:Bhatia-book}]
\label{fact:bhatia}
  Let $\rho, \sigma$ be states. Then,
  \begin{align*}
    \norm{\Eig^\downarrow(\rho)-\Eig^\downarrow(\sigma)}_1
      \le \norm{\rho - \sigma}_1,
  \end{align*}
  where $\Eig^\downarrow(\cdot)$ is the vector of non-increasing
  eigenvalues. 
\end{fact}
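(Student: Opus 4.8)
The plan is to avoid majorization machinery altogether and reduce the inequality to one-dimensional bookkeeping on sorted eigenvalues. Write $\Delta\defeq\rho-\sigma$, a traceless Hermitian operator, and use the spectral theorem to split it into positive and negative parts, $\Delta=\Delta_+-\Delta_-$, where $\Delta_\pm\succeq 0$ have mutually orthogonal supports; then $|\Delta|=\Delta_++\Delta_-$, so $\norm{\rho-\sigma}_1=\Tr{\Delta_+}+\Tr{\Delta_-}$. Next I would introduce the ``join'' $\mu\defeq\sigma+\Delta_+=\rho+\Delta_-$; since $\mu-\rho=\Delta_-\succeq 0$ and $\mu-\sigma=\Delta_+\succeq 0$, this single operator dominates \emph{both} $\rho$ and $\sigma$ in the positive-semidefinite order.

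First I would invoke Weyl's monotonicity principle — an immediate consequence of the Courant–Fischer min–max formula $\lambda_i^{\downarrow}(M)=\max_{\dim V=i}\min_{0\ne v\in V}\langle v,Mv\rangle/\langle v,v\rangle$ — to conclude that $\lambda_i^{\downarrow}(\mu)\ge\lambda_i^{\downarrow}(\rho)$ and $\lambda_i^{\downarrow}(\mu)\ge\lambda_i^{\downarrow}(\sigma)$ for every index $i$. Combining this with the elementary scalar inequality $|a-b|\le(c-a)+(c-b)$, valid whenever $c\ge\max(a,b)$ (it reduces to $2\max(a,b)\le 2c$), gives, for each $i$,
\[
  \bigl|\lambda_i^{\downarrow}(\rho)-\lambda_i^{\downarrow}(\sigma)\bigr|
  \;\le\;\bigl(\lambda_i^{\downarrow}(\mu)-\lambda_i^{\downarrow}(\rho)\bigr)
  +\bigl(\lambda_i^{\downarrow}(\mu)-\lambda_i^{\downarrow}(\sigma)\bigr),
\]
a sum of two nonnegative terms.

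Finally I would sum over $i$. Using $\sum_i\lambda_i^{\downarrow}(M)=\Tr{M}$ for Hermitian $M$, the right-hand side telescopes to $(\Tr{\mu}-\Tr{\rho})+(\Tr{\mu}-\Tr{\sigma})=\Tr{\Delta_-}+\Tr{\Delta_+}=\norm{\rho-\sigma}_1$, while the left-hand side is exactly $\norm{\Eig^{\downarrow}(\rho)-\Eig^{\downarrow}(\sigma)}_1$, which closes the argument. Note that nowhere is $\Tr{\rho}=\Tr{\sigma}=1$ used, so the identical proof covers arbitrary Hermitian operators and, in particular, the sub-states $\mcD_-(\mcH)$ appearing elsewhere in the paper.

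The only substantive ingredient is monotonicity of each sorted eigenvalue under the PSD order, so the (very mild) obstacle is making the Courant–Fischer step precise — handling eigenvalue degeneracies cleanly and, crucially, using that $\mu$ dominates \emph{both} $\rho$ and $\sigma$, which is what makes the pointwise bound above legitimate; everything after that is trace bookkeeping. As an alternative one could simply cite this as the classical Mirsky / Lidskii–Wielandt perturbation bound (Theorem III.4.4 of the referenced text), of which the above is a short self-contained derivation.
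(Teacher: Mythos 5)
Your proof is correct, and it supplies a genuine argument for a statement the paper does not prove at all: Fact~\ref{fact:bhatia} is simply cited as Theorem~III.4.4 of Bhatia, where it appears as Mirsky's (Lidskii--Wielandt) trace-norm perturbation bound, and Bhatia's route to it passes through Lidskii's majorization theorem for the eigenvalue difference vector. Your argument is more elementary. Introducing the ``join'' $\mu\defeq\sigma+\Delta_+=\rho+\Delta_-$ (where $\Delta=\rho-\sigma=\Delta_+-\Delta_-$ is the Jordan decomposition) reduces everything to Weyl's monotonicity principle, $A\succeq B\Rightarrow\lambda_i^\downarrow(A)\ge\lambda_i^\downarrow(B)$, which is an immediate consequence of Courant--Fischer, plus the scalar fact that $|a-b|\le(c-a)+(c-b)$ whenever $c\ge\max(a,b)$, followed by a telescoping trace computation. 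Each step checks out: $\mu\succeq\rho$ and $\mu\succeq\sigma$, the per-index bound $|\lambda_i^\downarrow(\rho)-\lambda_i^\downarrow(\sigma)|\le\bigl(\lambda_i^\downarrow(\mu)-\lambda_i^\downarrow(\rho)\bigr)+\bigl(\lambda_i^\downarrow(\mu)-\lambda_i^\downarrow(\sigma)\bigr)$ holds with both summands nonnegative, and $\sum_i\lambda_i^\downarrow(M)=\Tr{M}$ collapses the sum to $\Tr{\Delta_-}+\Tr{\Delta_+}=\norm{\rho-\sigma}_1$. You are also right that normalization is never used, so the bound extends verbatim to Hermitian operators and in particular to the sub-states in $\mcD_-(\mcH)$ to which the paper actually applies this fact (e.g.\ inside the proofs of Lemma~\ref{lem:flat} and Claim~\ref{clm:quantities}). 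What your shortcut gives up relative to Bhatia's majorization proof is the stronger statement valid for \emph{every} unitarily invariant norm simultaneously, but the paper needs only the trace-norm case, so nothing relevant is lost.
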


\begin{fact}[Gentle measurement Lemma\cc{ref:Winter-1999-gentle, 
ref:Ogawa2007-gentle}]
\label{fact:gentle} 
  Let $\rho \in \mcD_{-}(\mcH)$ and $\Pi$ be an orthogonal
  projection onto a subspace of $\mcH$. Then, 
  \begin{align*}
      \norm{\rho - \Pi \rho \Pi}_1 
        \le 2\sqrt{\Tr{(\Id - \Pi) \rho}}.
  \end{align*}
\end{fact}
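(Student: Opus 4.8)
The plan is to reduce the whole claim to the Cauchy--Schwarz (Hölder) inequality for Schatten norms, $\norm{AB}_1 \le \norm{A}_2\norm{B}_2$, applied after factoring $\rho = \sqrt{\rho}\sqrt{\rho}$. Write $\eps \defeq \Tr{(\Id-\Pi)\rho}$ and $Q \defeq \Id - \Pi$, so that the target inequality reads $\norm{\rho - \Pi\rho\Pi}_1 \le 2\sqrt{\eps}$.

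First I would record the algebraic identity $\rho - \Pi\rho\Pi = Q\rho + \Pi\rho Q$, which follows immediately by expanding $Q\rho + \Pi\rho Q = \rho - \Pi\rho + \Pi\rho - \Pi\rho\Pi$. It is important to use this two-term splitting rather than the naive four-term expansion $\rho = \Pi\rho\Pi + \Pi\rho Q + Q\rho\Pi + Q\rho Q$: the latter only yields the weaker bound $2\sqrt{\eps} + \eps$, since the leftover term $Q\rho Q \succeq 0$ contributes $\Tr{Q\rho Q} = \eps$ in trace norm. By the triangle inequality for the trace norm it then suffices to prove $\norm{Q\rho}_1 \le \sqrt{\eps}$ and $\norm{\Pi\rho Q}_1 \le \sqrt{\eps}$.

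Since $\rho \succeq 0$, I would take its Hermitian positive square root and write $Q\rho = (Q\sqrt{\rho})(\sqrt{\rho})$ and $\Pi\rho Q = (\Pi\sqrt{\rho})(\sqrt{\rho}Q)$. Hölder then gives $\norm{Q\rho}_1 \le \norm{Q\sqrt{\rho}}_2\,\norm{\sqrt{\rho}}_2$ and $\norm{\Pi\rho Q}_1 \le \norm{\Pi\sqrt{\rho}}_2\,\norm{\sqrt{\rho}Q}_2$. Using cyclicity of the trace together with $\Pi^2 = \Pi$, $Q^2 = Q$, $\Pi^\dagger=\Pi$, $Q^\dagger = Q$, one computes $\norm{Q\sqrt{\rho}}_2^2 = \Tr{\sqrt{\rho}Q\sqrt{\rho}} = \Tr{Q\rho} = \eps$, and similarly $\norm{\sqrt{\rho}Q}_2^2 = \Tr{Q\rho Q} = \Tr{Q\rho} = \eps$, while $\norm{\sqrt{\rho}}_2^2 = \Tr{\rho} \le 1$ and $\norm{\Pi\sqrt{\rho}}_2^2 = \Tr{\Pi\rho} \le \Tr{\rho} \le 1$ (using $\Pi \preceq \Id$, $\rho \succeq 0$, and that $\rho$ is a sub-state). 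Substituting, $\norm{Q\rho}_1 \le \sqrt{\eps}$ and $\norm{\Pi\rho Q}_1 \le \sqrt{\eps}$, and the triangle inequality closes the argument.

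There is no real obstacle here — this is a short, standard estimate. The only points requiring care are choosing the lossless two-term decomposition of $\rho - \Pi\rho\Pi$ rather than the four-term one, and verifying that each Schatten-$2$ norm equals exactly the stated trace, which rests only on idempotency of the projectors and cyclicity of the trace; note also that $\Tr{\rho}\le 1$ (rather than $=1$) is all that is used, so the statement indeed holds for sub-states.
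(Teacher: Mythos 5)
Your proof is correct: the two-term decomposition $\rho-\Pi\rho\Pi=(\Id-\Pi)\rho+\Pi\rho(\Id-\Pi)$, followed by H\"older applied to each factorization through $\sqrt{\rho}$, gives exactly the stated constant $2$, and you rightly note that only $\Tr{\rho}\le 1$ is needed, so the bound holds for sub-states. The paper states this as a cited fact without proof, and your argument is essentially the standard one from the cited references (specialized to the case where the measurement operator is a projector, which slightly simplifies the usual $\sqrt{M}$ manipulations), so there is nothing to add.
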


\subsection{The \texorpdfstring{$\ensuremath{\mathsf{vec}}$}{vec} map} \label{sec:Vec}

Consider the map $\ensuremath{\mathsf{vec}}: \mcL(\mcH_A)
\rightarrow \mcH_A \otimes \mcH_{\tilde A}$:
\begin{align*}
  \forall v, w: \myvec{\ket{v}\bra{w}} \EqDef \ket{v} 
    \otimes \overline{\ket{w}},
\end{align*}
where $\overline{\ket{w}}$ is the entry-wise conjugate of $\ket{w}$
in the standard basis. The map satisfies the following properties.
\begin{fact}[see \cRef{ref:watrous-book}] 
\label{fact:vec}\ 
  \begin{enumerate}
    \item $\myvec{X + Y} = \myvec{X} + \myvec{Y}$.
    
     \item $\Tr{X^\dagger Y} = \myvec{X}^\dagger \cdot \myvec{Y}$.
     
     \item $\Ptr{\myvec{X} \myvec{X}^\dagger}{\tilde A} = X X ^\dagger.$
     
     \item For every non-negative $X$, the vector $\ket{\psi} 
       = \myvec{\sqrt{X}}$ is a purification of $X$.
   \end{enumerate}
\end{fact}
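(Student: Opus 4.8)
The plan is to reduce all four items to a single computation in a fixed basis and then extend by linearity. First I would fix an orthonormal standard basis $\{\ket{i}\}$ of $\mcH_A$; since the entry-wise conjugate of a standard basis vector is itself, the defining rule gives $\myvec{\ket{i}\bra{j}} = \ket{i}\otimes\ket{j}$, so $\mathsf{vec}$ sends the operator basis $\{\ket{i}\bra{j}\}_{i,j}$ of $\mcL(\mcH_A)$ bijectively onto the product basis $\{\ket{i}\otimes\ket{j}\}_{i,j}$ of $\mcH_A\otimes\mcH_B$. The one point worth being careful about is well-definedness: the formula $\myvec{\ket v\bra w}=\ket v\otimes\overline{\ket w}$ is consistent with a \emph{linear} extension in $X$ precisely because both $\ket v\bra w$ and $\ket v\otimes\overline{\ket w}$ are conjugate-linear in $\ket w$. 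Once this is noted, item~1 is immediate.

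For item~2 I would check the two inner products on basis elements, $\Tr{(\ket i\bra j)^\dagger\ket k\bra l}=\delta_{ik}\delta_{jl}=(\bra i\otimes\bra j)(\ket k\otimes\ket l)$, which shows that $\mathsf{vec}$ is an isometry from the Hilbert--Schmidt inner product on $\mcL(\mcH_A)$ to the standard inner product on $\mcH_A\otimes\mcH_B$; expanding $X$ and $Y$ in the operator basis then gives $\Tr{X^\dagger Y}=\myvec{X}^\dagger\myvec{Y}$. For item~3 I would expand $\myvec{X}\myvec{X}^\dagger$ in the product basis, trace out $B$ (which annihilates every term whose two $\mcH_B$-indices differ), and recognize the surviving sum $\sum_{i,k}\big(\sum_j X_{ij}\overline{X_{kj}}\big)\ketbra{i}{k}$ as the matrix of $XX^\dagger$ in the chosen basis.

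Item~4 is then a one-line consequence of item~3: for $X\succeq 0$ the principal square root $\sqrt X$ is Hermitian, so $\sqrt X\,\sqrt X^\dagger=\big(\sqrt X\big)^2=X$, and applying item~3 to $\sqrt X$ yields $\Ptr{\ketbra{\psi}{\psi}}{B}=X$ for $\ket\psi=\myvec{\sqrt X}$, which is exactly the statement that $\ket\psi$ purifies $X$ (item~2 moreover gives the consistent normalization $\braket{\psi|\psi}=\Tr{X}$). I do not expect any real obstacle here: the whole statement is elementary linear algebra, and the only thing that needs genuine care is the complex-conjugation bookkeeping built into $\mathsf{vec}$, which is what makes the adjoint in item~2 and the partial trace in item~3 land with the conjugates in the right slots.
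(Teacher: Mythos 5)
Your proof is correct, and it is essentially the standard basis-computation argument that the paper's cited reference (Watrous's book) uses for the vec map; the paper itself gives no proof and simply cites that source. The only subtle point — well-definedness of the linear extension, resting on the fact that both $\ket{v}\bra{w}$ and $\ket{v}\otimes\overline{\ket{w}}$ are conjugate-linear in $\ket{w}$ — you identify and handle correctly, and the remaining items then reduce to the index bookkeeping you carry out.
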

Through the manuscript, we use the notation $\dket{A}$ and $\myvec{A}$ interchangeably. 

We end with the following remark that relates distance between states and their matrix square roots:
\begin{fact} [Lemma 3.34 in \cRef{ref:watrous-book}] \label{fact:sqrt}
    Let $P_1, P_2$ be positive semi-definite operators, and let $\sqrt{P_1},\sqrt{P_2}$ be their respective matrix square roots. Then
    \begin{align*}
        \norm{\sqrt{P_1}-\sqrt{P_2}}_2^2 \le \norm{P_1-P_2}_1.
    \end{align*}
\end{fact}

\subsection{Entropies and information} 

Most of the entropic functions we use in the proof are defined in 
\Sec{sec:results}, where we present our exact result.  Here we
present some well-known facts and lemmas about these functions. 

We begin with the continuity of the mutual information, whose proof
can be found, e.g., in \cRef{ref:watrous-book}.
\begin{fact}[Continuity of mutual information]
\label{fact:cont-of-D} Let $\rho,\sigma\in\mcD(\mcH_{LR})$. Then,
  \begin{align*}
    |\mutinf{L}{R}_\rho-\mutinf{L}{R}_\sigma|
      \le \frac{3}{2} \cdot \log(d_L)\cdot\norm{\rho-\sigma}_1 + 3.
  \end{align*}
\end{fact}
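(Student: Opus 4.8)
The plan is to reduce the claim to two standard sharp uniform-continuity bounds: the Fannes--Audenaert inequality for the von Neumann entropy, and the Alicki--Fannes--Winter inequality for the conditional von Neumann entropy. Write $T \EqDef \norm{\rho-\sigma}_1$. Starting from the identity (noted already in \Sec{sec:results}) $\mutinf{L}{R}_\rho = \ent{\rho_L}+\ent{\rho_R}-\ent{\rho_{LR}}$, regroup it as $\mutinf{L}{R}_\rho = \ent{\rho_L}-\ent{L \mid R}_\rho$, where $\ent{L \mid R}_\rho \EqDef \ent{\rho_{LR}}-\ent{\rho_R}$ denotes the conditional entropy. The triangle inequality then gives
\[
  \abs{\mutinf{L}{R}_\rho - \mutinf{L}{R}_\sigma}
    \le \abs{\ent{\rho_L}-\ent{\sigma_L}}
      + \abs{\ent{L \mid R}_\rho - \ent{L \mid R}_\sigma} .
\]
The essential point is this particular grouping: one must \emph{not} estimate $\ent{\rho_{LR}}$ on its own, since its continuity bound scales with $\log(d_L d_R)$ and here $R=L^c$ may carry an exponentially large dimension. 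The conditional entropy, by contrast, admits a continuity bound involving only $\log d_L$.

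For the first term, the partial trace is a trace-norm contraction, so $\norm{\rho_L-\sigma_L}_1 \le T$, and the Fannes--Audenaert inequality gives
\[
  \abs{\ent{\rho_L}-\ent{\sigma_L}}
    \le \tfrac{T}{2}\log(d_L-1) + h(T/2)
    \le \tfrac{T}{2}\log d_L + 1 ,
\]
where $h$ is the binary entropy (in the same base as $\log$ throughout), using $h\le 1$ and $T/2\le 1$. For the second term, apply the Alicki--Fannes--Winter continuity bound with $\eps \EqDef \tfrac12\norm{\rho-\sigma}_1 = T/2 \in [0,1]$:
\[
  \abs{\ent{L \mid R}_\rho - \ent{L \mid R}_\sigma}
    \le 2\eps\log d_L + (1+\eps)\,h\!\left(\tfrac{\eps}{1+\eps}\right)
    \le T\log d_L + 2 ,
\]
using $h\le 1$ and $1+\eps\le 2$. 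Summing the two estimates,
\[
  \abs{\mutinf{L}{R}_\rho - \mutinf{L}{R}_\sigma}
    \le \tfrac{T}{2}\log d_L + 1 + T\log d_L + 2
    = \tfrac32\log(d_L)\,\norm{\rho-\sigma}_1 + 3 ,
\]
which is exactly the claimed inequality; the constants $\tfrac32$ and $3$ are not optimized.

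If a self-contained argument is wanted in place of the citation, the only non-routine ingredient is the Alicki--Fannes--Winter bound, which is obtained via a classically-flagged coupling of $\rho$ and $\sigma$ --- one extends $\mcH_L\otimes\mcH_R$ by a flag qubit so that the two flagged states share a common marginal on $\mcH_L\otimes\mcH_R$ --- combined with the chain rule and the dimension bound $\ent{L\mid R}\ge -\log d_L$; see the cited references for the details.

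The main obstacle is exactly this conditional-entropy continuity step: it is what keeps the bound free of a spurious $\log d_R$ contribution, and it is substantially less elementary than the single-system Fannes--Audenaert estimate. Everything else is bookkeeping with the binary entropy.
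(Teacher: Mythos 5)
Your proof is correct, and the constants come out exactly as stated. The paper does not actually prove this fact --- it is stated with a citation to the literature --- and your derivation (splitting $\mutinf{L}{R}_\rho = \ent{\rho_L} - \ent{L\mid R}_\rho$ and applying Fannes--Audenaert to the marginal and Alicki--Fannes--Winter to the conditional entropy, precisely so that only $\log d_L$ and never $\log d_R$ enters) is the standard route such a reference would take; you have correctly identified the one non-trivial point, namely that estimating $\ent{\rho_{LR}}$ directly would ruin the bound.
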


In addition, we need the following bounds on the relative and mutual
information from their maximal counterparts. 
\begin{fact} \label{clm:D_inequality}
  Let $\rho,\sigma\in \mcD(\mcH)$ such that $\image(\rho)\subseteq
  \image(\sigma)$. We have,
  \begin{align*} 
    \drel{\rho}{\sigma} \le \dmax{\rho}{\sigma} ,\\
    \mutinf{L}{R}_\rho \le \maxinf{L}{R}_\rho .
\end{align*}
\end{fact}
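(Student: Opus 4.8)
The plan is to prove the first inequality $\drel{\rho}{\sigma}\le\dmax{\rho}{\sigma}$ directly from the definitions, and then derive the second one for free by taking minima over product states on both sides. For the first inequality, set $t\EqDef 2^{\dmax{\rho}{\sigma}}$, so that $\rho\preceq t\sigma$ by definition of the max-relative entropy. The analytic ingredient is the \emph{operator monotonicity} of $x\mapsto\log x$ on $(0,\infty)$: for positive definite operators $A\preceq B$ one has $\log A\preceq\log B$. I would apply this with $A=\rho+\varepsilon\Id$ and $B=t\sigma+\varepsilon\Id$, which are positive definite and still satisfy $A\preceq B$, to obtain $\log(\rho+\varepsilon\Id)\preceq\log(t\sigma+\varepsilon\Id)$. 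Since the difference of these two operators is positive semidefinite and $\rho\succeq0$, the trace of the product is nonnegative, i.e. $\Tr{\rho\log(\rho+\varepsilon\Id)}\le\Tr{\rho\log(t\sigma+\varepsilon\Id)}$.

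Next I would let $\varepsilon\to0^{+}$. Diagonalising $\rho$ gives $\Tr{\rho\log(\rho+\varepsilon\Id)}\to\Tr{\rho\log\rho}$ under the convention $0\log0=0$. Diagonalising $\sigma$ gives $\Tr{\rho\log(t\sigma+\varepsilon\Id)}\to\Tr{\rho\log(t\sigma)}$: the eigenvectors of $\sigma$ with positive eigenvalue $q_j$ contribute $\log(tq_j)\bra{j}\rho\ket{j}$ in the limit, while the eigenvectors spanning $\ker\sigma\subseteq\ker\rho$ — here is where $\image(\rho)\subseteq\image(\sigma)$ enters — are annihilated by $\rho$ and contribute nothing even though $\log\varepsilon\to-\infty$. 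Hence $\Tr{\rho\log\rho}\le\Tr{\rho\log(t\sigma)}=(\log t)\Tr{\rho}+\Tr{\rho\log\sigma}=\log t+\Tr{\rho\log\sigma}$, using $\Tr{\rho}=1$, and rearranging yields $\drel{\rho}{\sigma}=\Tr{\rho\log\rho}-\Tr{\rho\log\sigma}\le\log t=\dmax{\rho}{\sigma}$.

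For the mutual-information bound, note that for every $\sigma_L\in\mcD(\mcH_L),\sigma_R\in\mcD(\mcH_R)$ we have $\drel{\rho}{\sigma_L\otimes\sigma_R}\le\dmax{\rho}{\sigma_L\otimes\sigma_R}$: when $\image(\rho)\subseteq\image(\sigma_L\otimes\sigma_R)$ this is the inequality just proved, and otherwise the right-hand side (and, under the usual convention, the left-hand side as well) equals $+\infty$. Taking the minimum over $\sigma_L,\sigma_R$ on both sides gives $\mutinf{L}{R}_\rho\le\maxinf{L}{R}_\rho$.

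The only delicate point — a mild one — is the $\varepsilon\to0$ limit together with the support bookkeeping: one must check that both traces converge to their intended finite values, which is exactly where the hypothesis $\image(\rho)\subseteq\image(\sigma)$ is used. An equivalent route that avoids the regularisation is to restrict all operators to the subspace $\image(\sigma)\supseteq\image(\rho)$, on which $\sigma$ is positive definite, and then deal with the possible rank-deficiency of $\rho$ there; I would note this as an alternative but take the $\varepsilon$-argument as the main line.
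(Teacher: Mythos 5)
Your proof is correct and follows essentially the same route as the paper's: take $t=2^{\dmax{\rho}{\sigma}}$, use operator monotonicity of $\log$ to get $\log\rho\preceq\log(t\sigma)$, trace against $\rho$, and rearrange. The one difference is that you add an $\varepsilon$-regularization to handle the zero eigenvalues of $\rho$ and $\sigma$ rigorously, a technical point the paper's terse proof glosses over; this is a welcome refinement rather than a different approach.
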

\begin{proof}
  Let $t=2^{\dmax{\rho}{\sigma}}$. Then $\rho \preceq t\sigma$ and by
  the monotonicity of the operator logarithm\cc{ref:Bhatia-book},
  \begin{align*}
    \log(\rho)\preceq \log(t\sigma)=\log(t)\Id+\log(\sigma).
  \end{align*}
   This implies,
   \begin{align*}
        \drel{\rho}{\sigma}  &= \Tr{\rho\log(\rho)}
          -\Tr{\rho\log(\sigma)} \\
        &\le \Tr{\rho(\log(\sigma)+\Id\log(t))}
          -\Tr{\rho\log(\sigma)} \\
        &=\log(t),
   \end{align*}
   proving the first inequality. The second inequality follows
   from the first inequality and the definitions of mutual
   information and max mutual information.
\end{proof}

Additionally, we will make use of the following lemma.
\begin{fact}[Lemma B10 in \cRef{ref:Berta2011quantum}]
    \label{lem:Imax_bound}
    Let $\rho_{LR}$ be a sub-state in $\mcD_-(\mcH_{LR})$, then
    \begin{align*}
        \maxinf{L}{R}_\rho \le 2 \log \min\{d_L,d_R\}.
    \end{align*}
    where $d_L$ and $d_R$ are the dimensions of $\mcH_L$ and
    $\mcH_R$, respectively.
\end{fact}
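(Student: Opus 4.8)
\textbf{Proof plan for Fact~\ref{lem:Imax_bound}.}

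The statement to prove is a generic dimension bound: for any sub-state $\rho_{LR}\in\mcD_-(\mcH_{LR})$, we have $\maxinf{L}{R}_\rho \le 2\log\min\{d_L,d_R\}$. By the definition of $\maxinf{L}{R}_\rho$ as a minimum over product states $\sigma_L\otimes\sigma_R$, it suffices to exhibit one convenient choice of $\sigma_L$ and $\sigma_R$ and show that $\dmax{\rho}{\sigma_L\otimes\sigma_R}\le 2\log\min\{d_L,d_R\}$, i.e., that $\rho \le \min\{d_L,d_R\}^2\cdot \sigma_L\otimes\sigma_R$ as operators. Without loss of generality assume $d_L\le d_R$, so the target factor is $d_L^2$.

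The plan is to take $\sigma_L = \Id_L/d_L$, the maximally mixed state on $\mcH_L$, and $\sigma_R = \rho_R/\tr(\rho_R)$, the (normalized) reduced sub-state of $\rho$ on $R$ (one can assume $\tr(\rho_R)=\tr(\rho)\le 1$ and handle the degenerate case $\tr\rho=0$ trivially; strictly one should note $\image(\rho)\subseteq\image(\sigma_L\otimes\sigma_R)$ since $\sigma_L$ is full rank and $\image(\rho_R)\supseteq$ the $R$-marginal of $\image(\rho)$). Then $\sigma_L\otimes\sigma_R = \Id_L\otimes \rho_R / (d_L\tr\rho_R)$, and the desired inequality $\rho\le d_L^2\,\sigma_L\otimes\sigma_R$ becomes $\rho \le \tfrac{d_L}{\tr\rho_R}\,\Id_L\otimes\rho_R$, hence (since $\tr\rho_R\le1$) it is enough to show the cleaner bound
\begin{align*}
  \rho_{LR} \le d_L\cdot \Id_L\otimes \rho_R .
\end{align*}
The main step, and the one small obstacle, is proving this operator inequality. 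I would do it via the vec/purification machinery already set up in Fact~\ref{fact:vec}: write a purification $\ket{\psi}_{LRR'}$ of $\rho_{LR}$ and observe that $\rho_{LR}=\Ptr{\ketbra{\psi}{\psi}}{R'}$, then use that for any unit vector $\ket{\phi}$ on $\mcH_L\otimes(\mcH_R\otimes\mcH_{R'})$ one has $\Ptr{\ketbra{\phi}{\phi}}{RR'} \le \Id_L$ trivially, and more usefully that $\Ptr{\ketbra{\phi}{\phi}}{L} \le \Id_R$; combining Schmidt-decomposition bounds gives $\Ptr{\ketbra{\phi}{\phi}}{R'} \le (\text{rank on }L)\cdot \Id_L\otimes \Ptr{\ketbra{\phi}{\phi}}{LR'}$ — more precisely, for a bipartite pure state $\ket{\phi}_{XY}$ one has $\ketbra{\phi}{\phi}_{XY}\le d_X\,\Id_X\otimes\rho_Y$ where $\rho_Y=\Ptr{\ketbra{\phi}{\phi}}{X}$, because in the Schmidt basis $\ket{\phi}=\sum_i\sqrt{p_i}\ket{i}_X\ket{i}_Y$ and $\ketbra{\phi}{\phi}=\sum_{ij}\sqrt{p_ip_j}\ketbra{i}{j}_X\otimes\ketbra{i}{j}_Y \le d_X \sum_i p_i \ketbra{i}{i}_X\otimes\ketbra{i}{i}_Y \le d_X\,\Id_X\otimes\rho_Y$ (the middle inequality is the standard fact $\sum_{ij}a_{ij}A_i\otimes B_j\le d\sum_i a_{ii}A_i\otimes B_i$ for psd block structure, or just apply Cauchy–Schwarz on each $2\times2$ minor). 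Applying this with $X=L$, $Y=RR'$ to the purification and then tracing out $R'$ (which preserves the operator inequality and sends $\Ptr{\ketbra{\psi}{\psi}}{L}$ to $\rho_R$) yields exactly $\rho_{LR}\le d_L\,\Id_L\otimes\rho_R$.

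Putting the pieces together: with the above choices, $\dmax{\rho}{\sigma_L\otimes\sigma_R}\le \log(d_L^2) = 2\log d_L = 2\log\min\{d_L,d_R\}$, so $\maxinf{L}{R}_\rho\le 2\log\min\{d_L,d_R\}$. The symmetric choice $\sigma_R=\Id_R/d_R$, $\sigma_L=\rho_L/\tr\rho_L$ handles the case $d_R<d_L$ identically, and taking the better of the two gives the $\min$. The only genuinely non-routine ingredient is the bipartite pure-state operator inequality $\ketbra{\phi}{\phi}_{XY}\le d_X\,\Id_X\otimes\Ptr{\ketbra{\phi}{\phi}}{X}$; everything else is bookkeeping with definitions and Fact~\ref{fact:vec}. (An alternative route avoiding purifications entirely: diagonalize $\rho$ in a Schmidt-like basis per eigenvector, but the purification argument is cleanest.)
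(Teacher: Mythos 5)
The paper does not prove this statement at all --- it is imported verbatim as Lemma~B7 of \cRef{ref:Berta2011quantum} --- so there is nothing internal to compare against; what matters is whether your argument stands on its own, and it does. Your reduction is sound: with $\sigma_L=\Id_L/d_L$ and $\sigma_R=\rho_R/\tr(\rho_R)$ the bound $2\log d_L$ follows once you establish $\rho_{LR}\preceq d_L\,\Id_L\otimes\rho_R$, and your pure-state lemma $\ketbra{\phi}{\phi}_{XY}\preceq d_X\,\Id_X\otimes\Ptr{\ketbra{\phi}{\phi}}{X}$ is correct: writing $\ket{\phi}=\sum_{i=1}^{r}\ket{u_i}$ with $r\le d_X$ orthogonal-on-$X$ terms, Cauchy--Schwarz gives $\ketbra{\phi}{\phi}\preceq r\sum_i\ketbra{u_i}{u_i}\preceq d_X\sum_i p_i\ketbra{i}{i}_X\otimes\ketbra{i}{i}_Y\preceq d_X\,\Id_X\otimes\rho_Y$, and this is insensitive to the sub-normalization $\sum_i p_i\le 1$, so the sub-state case needs no separate treatment; partial trace over the purifying register then preserves the operator inequality. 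Two minor points of hygiene: the support condition $\image(\rho)\subseteq\image(\Id_L\otimes\rho_R)$ is itself a consequence of the inequality you prove, so you need not argue it separately; and the intermediate sentence invoking ``$\Ptr{\ketbra{\phi}{\phi}}{R'}\le(\text{rank on }L)\cdot\Id_L\otimes\Ptr{\ketbra{\phi}{\phi}}{LR'}$'' is garbled and should simply be deleted in favor of the clean Schmidt-basis computation you give right after it. Note also that the paper's own Appendix Lemma~\ref{lem:boundSR} would not substitute here, since bounding the operator Schmidt rank of $\sqrt{\rho}$ by dimension only yields $4\log\min\{d_L,d_R\}$; your direct route recovers the sharp constant $2$ and matches the standard proof in the cited reference.
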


\begin{lemma} [Small $\mathrm{D_{\max}}$ implies short distance]
\label{lem:short-dist}
  Let $\rho\in\mcD_-(\mcH)$ and $\sigma\in\mcD(\mcH)$ such that
  \begin{align*}
      \rho \preceq (1+\delta)\sigma.
  \end{align*}
  Then, $\norm{\rho-\sigma}_1\le 2\delta + (1-\Tr{\rho})$.
\end{lemma}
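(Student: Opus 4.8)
The plan is to reduce the statement to a one-line application of the triangle inequality for the trace norm, after rewriting the difference $\rho-\sigma$ in terms of a single positive semidefinite operator whose trace norm is easy to control.

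First I would introduce the operator
\begin{align*}
  M \EqDef (1+\delta)\sigma - \rho ,
\end{align*}
which is positive semidefinite precisely by the hypothesis $\rho\preceq(1+\delta)\sigma$, and whose trace is therefore $\Tr{M} = (1+\delta) - \Tr{\rho}$. The reason for isolating $M$ is that, being positive semidefinite, its trace norm coincides with its trace, $\norm{M}_1 = \Tr{M}$, so the otherwise awkward quantity $\norm{\rho-\sigma}_1$ becomes accessible once we express $\rho-\sigma$ through $M$.

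Next I would use the algebraic identity
\begin{align*}
  \rho - \sigma = \delta\,\sigma - M ,
\end{align*}
which is immediate from the definition of $M$. Applying the triangle inequality for $\norm{\cdot}_1$, and using $\norm{\sigma}_1 = \Tr{\sigma} = 1$ (since $\sigma$ is a state) together with $\norm{M}_1 = \Tr{M}$, we obtain
\begin{align*}
  \norm{\rho - \sigma}_1 \;\le\; \delta\,\norm{\sigma}_1 + \norm{M}_1
    \;=\; \delta + \big((1+\delta) - \Tr{\rho}\big)
    \;=\; 2\delta + \big(1 - \Tr{\rho}\big) ,
\end{align*}
which is exactly the claimed bound.

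There is essentially no obstacle here; the only point worth flagging is that $\rho$ is merely a sub-state, so $\Tr{\rho}$ need not equal $1$, and it is precisely this slack that produces the additive term $1-\Tr{\rho}$ — had $\rho$ been a genuine state the estimate would collapse to the cleaner $\norm{\rho-\sigma}_1\le 2\delta$. One could alternatively argue spectrally, comparing the ordered eigenvalue vectors via Fact~\ref{fact:bhatia}, but the operator-inequality route above is shorter and sidesteps any diagonalization.
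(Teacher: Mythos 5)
Your proof is correct, and it takes a slightly different route from the paper's. The paper uses the Jordan decomposition $\rho-\sigma = (\rho-\sigma)_+ - (\rho-\sigma)_-$, derives $p_+ - p_- = \Tr{\rho}-1$ from the traces, and then controls $p_+$ by testing the rearranged operator inequality $(\rho-\sigma)_+ \preceq \delta\sigma + (\rho-\sigma)_-$ against the projector onto the positive part, using orthogonality of the supports. You instead isolate the single positive operator $M = (1+\delta)\sigma - \rho$, use $\norm{M}_1 = \Tr{M}$, rewrite $\rho-\sigma = \delta\sigma - M$, and finish with one application of the triangle inequality. Both arguments are elementary and of comparable length, but yours avoids the Jordan decomposition and the support-orthogonality observation entirely, trading them for a purely algebraic identity; this makes the mechanism producing the $2\delta$ and the $1-\Tr{\rho}$ terms more transparent at a glance. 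The paper's route, on the other hand, exposes the positive/negative mass bookkeeping explicitly, which can be handy if one wants sharper variants (e.g.\ a one-sided bound on $p_+$ alone). Either way the bound and hypotheses are used identically, so both proofs are tight in the same places.
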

\begin{proof}
  Decompose $\rho-\sigma$ into positive and negative parts,
  $\rho-\sigma=(\rho-\sigma)_+-(\rho-\sigma)_-$, and define
  $p_\pm\EqDef \Tr{(\rho-\sigma)_\pm}$ such that
  $\norm{\rho-\sigma}_1=p_+ + p_-$. Note that 
  \begin{align*}
      p_+-p_-=\Tr{\rho-\sigma}=\Tr{\rho}-1,
  \end{align*}
  i.e. $p_-=p_+ + (1-\Tr{\rho})$. Now, subtract $\sigma$ from the
  operator inequality to achieve $\rho-\sigma \preceq \delta
  \sigma$, and therefore
  \begin{align}
    (\rho-\sigma)_+ \preceq \delta \sigma + (\rho-\sigma)_- .
  \label{eq:rho-sigma}
  \end{align}
  Let $P_+$ be the projection into the support of $(\rho-\sigma)_+$.
  Then $P_+\cdot (\rho-\sigma)_+\cdot P_+ = (\rho-\sigma)_+$ and 
  $P_+\cdot (\rho-\sigma)_-\cdot P_+ = 0$. Therefore multiplying 
  \eqref{eq:rho-sigma} by $P_+$ from both sides, we get
  \begin{align*}
    (\rho-\sigma)_+ \preceq \delta P_+ \sigma P_+ \quad
	\Rightarrow\quad p_+ \le \delta \Tr(P_+ \sigma) \le \delta .
  \end{align*}
  Therefore, 
  \begin{align*}
    \norm{\rho-\sigma}_1 = p_+ + p_-  = 2p_+ + (1-\Tr{\rho})
	\le 2\delta +(1-\Tr{\rho}).
  \end{align*}
\end{proof}

\subsection{Flat states}

Central objects in our proofs are \emph{flat states}, defined as
follows.
\begin{definition}[Flat state] \label{def:flatstate}
  We call a state $\tau$ flat if it is uniform in its support, that 
  is all its non-zero eigenvalues are identical.  Alternatively,
  it can be written as
  \begin{align*}
    \tau = \frac{\Pi_\tau}{d_\tau}, \qquad d_\tau = \tr(\Pi_\tau) .
  \end{align*}
\end{definition}
Flat states possess the following relation between the maximum
relative entropy and minimum relative entropy.  
\begin{fact} \label{fact:Dmax_flat}
  Let $\rho$ and $\sigma$ be flat states such that
  $\im(\rho)\subseteq \im(\sigma)$. Then
  \begin{align} \label{eq:Dmax_flat}
    \log \frac{1}{ \Tr{\Pi_\rho\sigma}} = \dmin{\rho}{\sigma}
      = \dmax{\rho}{\sigma} 
      = \log\left(\frac{d_\sigma}{d_\rho} \right).
  \end{align}
\end{fact}
\begin{proof}
  Since $\rho,\sigma$ are flat states, they can be written as 
  \begin{align*}
    \rho = \frac{1}{d_\rho} \Pi_{\rho} \quad ; \quad 
    \sigma = \frac{1}{d_\sigma} \Pi_{\sigma}.
  \end{align*}
  Consider,
  \begin{align*}
    \Tr{\Pi_\rho\sigma} = \frac{1}{d_\sigma}
      \mathrm{Tr}\big(\underbrace{\Pi_\rho\Pi_\sigma}_{=\Pi_\rho}
        \big) = \frac{d_\rho}{d_\sigma} = 2^{-\dmax{\rho}{\sigma}}.
  \end{align*}
\end{proof}

In our work, we use a slightly stronger version of this fact.
\begin{lemma} \label{lem:Dmax_flat_2}
    Let $\sigma$ be a flat state and $\rho$ be a sub-state such that $\im(\rho)\subseteq \im(\sigma)$.
    Then
  \begin{align*} 
      \dmax{\rho}{\sigma} 
      = \log\left(d_\sigma \cdot \norm{\rho} \right).
  \end{align*}
\end{lemma}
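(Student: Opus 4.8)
The plan is to unwind the definition of $\dmax{\cdot}{\cdot}$ directly, exploiting the flatness of $\sigma$ to reduce the defining operator inequality to a scalar condition on the top eigenvalue of $\rho$. Recall $\dmax{\rho}{\sigma} = \log\left(\min\{t : \rho \le t\sigma\}\right)$, so it suffices to show that the minimal admissible $t$ equals $d_\sigma\cdot\norm{\rho}$.

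First I would use \Def{def:flatstate} to write $\sigma = \Pi_\sigma/d_\sigma$, so that the condition $\rho \le t\sigma$ becomes $\rho \le \tfrac{t}{d_\sigma}\Pi_\sigma$. Next I would invoke the hypothesis $\im(\rho)\subseteq\im(\sigma)$, which gives $\rho = \Pi_\sigma\rho\Pi_\sigma$; hence both sides of the inequality are supported on $\im(\sigma)$, and the inequality on all of $\mcH$ is equivalent to the inequality restricted to the subspace $\im(\sigma)$, where $\Pi_\sigma$ restricts to the identity. Thus $\rho \le \tfrac{t}{d_\sigma}\Pi_\sigma$ holds if and only if $\rho \le \tfrac{t}{d_\sigma}\Id_{\im(\sigma)}$, i.e.\ if and only if every eigenvalue of $\rho$ is at most $t/d_\sigma$, i.e.\ $\norm{\rho} \le t/d_\sigma$.

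Therefore the set of admissible $t$ is exactly $\{t : t \ge d_\sigma\norm{\rho}\}$, its minimum is $d_\sigma\norm{\rho}$, and taking logarithms yields $\dmax{\rho}{\sigma} = \log\left(d_\sigma\cdot\norm{\rho}\right)$, as claimed. I would also remark that the argument goes through verbatim when $\rho$ is a sub-state rather than a state, since $\tr(\rho)\le 1$ is never used — only positivity and the image containment — and that when $\rho$ is itself flat with $\norm{\rho}=1/d_\rho$ this recovers \mbox{Fact~\ref{fact:Dmax_flat}}.

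The proof is essentially mechanical; the only point requiring a modicum of care is the reduction to the subspace $\im(\sigma)$, where one must check that neither direction of the equivalence loses information — in particular that admitting test vectors outside $\im(\sigma)$ does not impose extra constraints, which is exactly what the hypothesis $\im(\rho)\subseteq\im(\sigma)$ guarantees (for such vectors both quadratic forms vanish). There is no substantive obstacle beyond this bookkeeping.
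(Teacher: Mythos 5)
Your proof is correct. It does, however, take a different route from the paper's. The paper's proof is a one-liner that cites Lemma~B1 of~\cite{ref:Berta2011quantum}, which gives the closed form $\dmax{\rho}{\sigma}=\log\norm{\sigma^{-1/2}\rho\sigma^{-1/2}}$ (with $\sigma^{-1/2}$ the pseudo-inverse on $\im(\sigma)$); flatness then gives $\sigma^{-1/2}\rho\sigma^{-1/2}=d_\sigma\,\Pi_\sigma\rho\Pi_\sigma=d_\sigma\rho$ and hence the claim. You instead unwind the definition of $\dmax{\cdot}{\cdot}$ from scratch, reduce the operator inequality $\rho\le(t/d_\sigma)\Pi_\sigma$ to the subspace $\im(\sigma)$ using $\rho=\Pi_\sigma\rho\Pi_\sigma$, and conclude that the minimal admissible $t$ is $d_\sigma\norm{\rho}$. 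What the paper's route buys is brevity by offloading the work to a standard reference; what yours buys is self-containedness, and it also makes transparent exactly where the hypothesis $\im(\rho)\subseteq\im(\sigma)$ is used (namely, to ensure that test vectors outside $\im(\sigma)$ impose no constraints). In effect you are inlining the proof of the cited Berta lemma, which in the flat case collapses to the short argument you give. Your remark that the argument never uses $\tr(\rho)\le 1$ and that the flat-$\rho$ special case recovers Fact~\ref{fact:Dmax_flat} is also correct.
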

\begin{proof}
    We write $\sigma=\frac{\Pi_\sigma}{d_\sigma}$ and use Lemma B4 in \cRef{ref:Berta2011quantum} to write $\dmax{\rho}{\sigma}=\log(\norm{\sigma^{-1/2}\rho\sigma^{-1/2}})=
    \log(d_\sigma\norm{\rho})$.
\end{proof}

\subsection{Local Hamiltonians and Approximate Ground State Projectors}
\label{sec:AGSP}

In this work we consider local Hamiltonians defined on finite
dimensional lattices. Formally, let $\Lambda$ denote the sites of a
finite dimensional lattice, and assume that at each sites there is a
$d$-dimensional qudit (a spin) so that the total Hilbert space of
the system is $\mcH = \big(\mathbb{C}^d\big)^{\otimes |\Lambda|}$. A
$k$-body local Hamiltonian on $\Lambda$ is an operator of the form
$H=\sum_x h_x$ on $\mcH$ where the summation is over all
geometrically local subsets $x\subset \Lambda$ with $|x|\le k$. The
operators $\{h_x\}$ are hermitian and act non-trivially only on the
qudits in $x$, i.e., $h_x = \hat{h}_x\otimes\Id_{rest}$, where
$\hat{h}_x$ acts on the Hilbert space of the qudits in $x$.
Throughout this work we shall assume that $\norm{h_x}\le J$ for all
$x$ for some fixed energy scale $J$. In such cases we can always
pass to a dimensionless setup and assume without loss of generality
that $0\preceq h_x\preceq \Id$ for all $x$. 

Given a local Hamiltonian $H = \sum_x h_x$, we denote its
eigenvalues by $E_0\le E_1\le E_2\le \ldots$, and their
corresponding eigenspaces projectors by $\Pi_0, \Pi_1, \Pi_2,
\ldots$ so that $H = \sum_{i\ge 0} E_i\cdot \Pi_i$. The eigenvalues
$E_i$ are called \emph{energy levels}. The lowest eigenvalue of $H$
is called the \emph{ground energy} of $H$ and its corresponding
eigenspace is called the \emph{ground space} of $H$.  Every state in
the ground space is called a \emph{ground state}. In what follows,
we will denote the ground space projector by $\Pi_{gs}$.

The ground states of a local Hamiltonian are of a great interest for
physicists and chemists, as they determine important low temperature
properties of the underlying system. An important factor is the
\emph{spectral gap} of the system, $\gamma \EqDef E_1-E_0$, which is
the difference between the first excited energy level and the ground
energy. The presence of a large spectral gap can be associated with
a decay of correlations in ground states of the
system\cc{ref:Hastings2006-exp-decay}, as well as with area-law
bounds on the entanglement entropy of the ground
states\cc{ref:Eisert2008AL, ref:Bombelli1986-AL, Srednicki1993-AL,
ref:Audenaert2002-AL, ref:Vidal2003-AL, ref:Hastings2007,
ref:ALV2012-AL, ref:Arad2013-1DAL}. Very often, when we consider
a local Hamiltonian system, we actually consider a \emph{family} of
such systems with an increasing size $|\Lambda_n|$. In such case it
is customary to use the notation $\gamma=\Omega(1)$ to describe a
situation in which the spectral gap is lower bounded by a constant as
the system increases.

\emph{Frustration-free} local Hamiltonians are an important
sub-class of local Hamiltonians $H=\sum_x h_x$ in which the ground
space of $H$ is also a ground space of every individual $h_x$ term,
i.e, $h_x\Pi_{gs} = E^{(x)}_0\Pi_{gs}$ for all $x$. Frustration-free
Hamiltonians are important to our settings because they naturally
give rise to highly degenerate ground spaces. Moreover, the AGSP
results we import below are much simpler to present in the
frustration-free case in 1D\cc{ref:Arad2013-1DAL}, while in 2D they
are still lacking for frustrated Hamiltonians. As a final remark,
our results may even be generalized to cases where there are many
low-energy states with exponentially-close energy levels, which are
separated by a gap from the rest of the spectrum, as was studied in
\cRef{ref:arad2017rigorous}. We leave this intriguing possibility
for future work.

A powerful framework to prove an area-law in ground states of local
Hamiltonian is the so-called \emph{approximate ground state
projector} (AGSP) framework\cc{ref:ALV2012-AL, ref:Arad2013-1DAL,
ref:arad2017rigorous, ref:anshu2022-2DAL,
ref:abrahamsen2020deg-AGSP, abrahamsen2020sub}, which is formally
defined in \Def{def:AGSP}. As its name suggests, an AGSP is an
operator that approximates the actual ground space projector
$\Pi_{gs}$. It is usually characterized by two parameters $D,\Delta$
that upperbound the amount of entanglement it creates and its
closeness to the actual ground space projector. In
\cRefs{ref:ALV2012-AL, ref:Arad2013-1DAL} it was shown that when the
system has a unique ground state, the existence of an AGSP with
$D\cdot \Delta<1/2$ implies a bound of $O(\log D)$ on the
ground-state entanglement-entropy. Finding such ``good AGSP'', in
which $\log D$ scales like the boundary of $L$ is therefore
sufficient to prove an area-law. Our main result is a similar
condition for the maximally-mixed ground-state. To use it for
proving area-laws, we would need the following results about the
good AGSPs for 1D and 2D systems, which were used to show area-laws
in the unique ground state case.
\begin{fact}[A good 1D AGSP, Lemmas~4.1,~4.2 from \cRef{ref:Arad2013-1DAL}] 
\label{fact:1D-AGSP} 
  Let $H$ be a 2-body local Hamiltonian on a 1D lattice $\Lambda$
  with $d$-dimensional qudits and a spectral gap $\gamma>0$. Then
  for every bi-partition of the lattice into two contiguous regions
  $L\cup L^c$, there exist a family of $(D,\Delta)$-AGSPs
  $\{K(\ell,s)\}$ for integers $\ell,s$ such that
  \begin{align}
  \label{def:1D-AGSP}
    \Delta &= e^{-\bOmega{\ell \cdot(\gamma/s)^{1/2} }}, &
    D &= e^{\bigO{\log(d\ell)\cdot\max\{\ell/s, \sqrt{\ell}\}}} .
  \end{align}
\end{fact}

{~}

For the 2D case, we will use the following AGSP construction of
\cRef{ref:anshu2022-2DAL}
\begin{fact}[A good 2D AGSP, Theorem~4.4 from
  \cRef{ref:anshu2022-2DAL}] \label{fact:2D-AGSP} 
  
  Let $H$ be a frustration-free local Hamiltonian on a 2D
  rectangular lattice, defined over qudits of dimension
  $d=\bigO{1}$ and with a \emph{local} spectral gap
  $\gamma=\Omega(1)$. Then for every bi-partitioning of the system
  $L\cup L^c$ along a vertical cut of length $|\partial L|$ there
  exists a $(D,\Delta)$-AGSP with $D\cdot \sqrt{\Delta}<\frac{1}{2}$
  \footnote{In \cRef{ref:anshu2022-2DAL} the $\Delta$ parameter of
  the AGSP is defined by $\Delta=\norm{K-\Pi_{gs}}$, which
  translates to $\sqrt{\Delta}$ in our AGSP
  definition~\eqref{def:AGSP}. Therefore a $(D,\Delta)$-AGSP in
  \cRef{ref:anshu2022-2DAL} is actually a $(D,\sqrt{\Delta})$ in our
  convention.} and
  \begin{align}
  \label{def:2D-D}
    \log D = |\partial L|^{1 + O(\log^{-1/5}|\partial L|)} .
  \end{align}
\end{fact}

{~}

\subsection{Technical lemmas that are needed in the main proof}

We start with the following lemma about bounding the number of
distinct eigenvalues. 
\begin{lemma}[Spectrum discretization] 
\label{lem:discretiztion} Let $\eps>0$ be a small number. Let $\rho$
  be a sub-state and $\sigma=\sigma_L\otimes \sigma_R$ be a product
  state. Assume that $\rho\preceq t\sigma$ and $t\le d_L^2$, where
  $d_L$ is the Hilbert space dimension of subsystem $L$. There
  exists a state $\hat{\sigma}_L\in\mcD(\mcH_L)$ and a sub-state
  $\hat{\rho}$ such that $|\Spec(\hat{\sigma}_L)| \le
  7\log(d_L/\eps)$, $\norm{\hat{\rho}-\rho}_1 \le 2(\eps/d_L)^2$ and
  
  \begin{align*}
    \hat{\rho} \preceq 2 t
      \cdot(\hat{\sigma}_L\otimes \sigma_R).
  \end{align*}
  In addition, $\lambda_{max}(\hat{\rho})\le \lambda_{max}(\rho)$.
\end{lemma}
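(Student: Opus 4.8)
We want to discretize the spectrum of $\sigma_L$ so that it has few distinct eigenvalues, while only slightly degrading the operator inequality $\rho \preceq t\sigma_L\otimes\sigma_R$ and only slightly moving $\rho$ in trace norm. The plan is to leave $\sigma_R$ untouched and build $\tilde\sigma_L$ from $\sigma_L$ by a two-step surgery: first truncate the tiny eigenvalues of $\sigma_L$, then round the surviving eigenvalues onto a geometric grid.

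First I would throw away the small spectrum of $\sigma_L$. Let $\Pi$ be the spectral projector of $\sigma_L$ onto eigenvalues $\ge \tau$ for a threshold $\tau$ to be fixed (of order $(\eps/d_L)^{O(1)}$). Set $\tilde\rho \EqDef (\Pi\otimes\Id)\,\rho\,(\Pi\otimes\Id)$. Since $\sigma_L \preceq \Pi\sigma_L\Pi + \tau\Id_L$, and since $\rho \preceq t\sigma_L\otimes\sigma_R \preceq t\,d_L^2\,\tau\,\Id$ on the complementary block gives control of $\Tr{(\Id-\Pi\otimes\Id)\rho}$ — more carefully, $\Tr{((\Id_L-\Pi)\otimes\Id)\rho} \le t\,\Tr{(\Id_L-\Pi)\sigma_L\otimes\sigma_R} \le t\,d_L\,\tau$ — the gentle measurement lemma (Fact~\ref{fact:gentle}) gives $\norm{\rho-\tilde\rho}_1 \le 2\sqrt{t\,d_L\,\tau} \le 2\sqrt{d_L^3\,\tau}$, using $t\le d_L^2$. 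Choosing $\tau$ on the order of $(\eps/d_L)^4/d_L^3$ makes this $\le 2(\eps/d_L)^2$ as required. Moreover, conjugating the operator inequality by $\Pi\otimes\Id$ preserves it, so $\tilde\rho \preceq t\,(\Pi\sigma_L\Pi)\otimes\sigma_R$; note the support of the left factor now has all eigenvalues in $[\tau,1]$, a range of multiplicative width $1/\tau = \poly(d_L/\eps)$.

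Second I would round. On the truncated block, replace each eigenvalue $\lambda$ of $\sigma_L$ by the next power of $(1+\eps)$ above it, i.e. build $\sigma_L' = \sum_j \lceil \lambda_j\rceil_\eps \Pi_j$ where $\lceil\cdot\rceil_\eps$ rounds up to the grid $\{(1+\eps)^{-m}\}$. Then $\Pi\sigma_L\Pi \preceq \sigma_L' \preceq (1+\eps)\Pi\sigma_L\Pi$, so $\tilde\rho \preceq t(1+\eps)\,\sigma_L'\otimes\sigma_R$, and the number of distinct eigenvalues of $\sigma_L'$ is at most the number of grid points in $[\tau,1]$, which is $O\big(\log(1/\tau)/\log(1+\eps)\big) = O\big(\log(d_L/\eps)/\eps\big)$. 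Finally normalize: $\Tr{\sigma_L'} \in [\,\Tr{\Pi\sigma_L\Pi},\,(1+\eps)\,] \subseteq [1-O(d_L\tau),\,1+\eps]$, so $\tilde\sigma_L \EqDef \sigma_L'/\Tr{\sigma_L'}$ is a genuine state with the same number of distinct eigenvalues, and dividing by $\Tr{\sigma_L'} \le 1+\eps$ only helps the inequality on the lower side while dividing by $\Tr{\sigma_L'}\ge 1-O(d_L\tau)$ costs another $(1+O(d_L\tau))$ factor that can be absorbed into the $(1+\eps)$ slack (adjust constants in $\tau$). This yields $\tilde\rho \preceq t(1+\eps)\,\tilde\sigma_L\otimes\sigma_R$ with $|\Spec(\tilde\sigma_L)| = O\big((\log(d_L/\eps))/\eps\big)$.

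The step I expect to be the main obstacle is the bookkeeping around the truncation: one must verify that discarding the small eigenvalues of $\sigma_L$ really does remove only a small-weight part of $\rho$, and the only leverage for that is the inequality $\rho\preceq t\sigma_L\otimes\sigma_R$ together with $t\le d_L^2$ — which is exactly why that hypothesis is present. Getting the threshold $\tau$ to simultaneously satisfy the trace-norm bound $2(\eps/d_L)^2$ and keep the distinct-eigenvalue count at $O((\log(d_L/\eps))/\eps)$ requires $\tau$ polynomial in $\eps/d_L$, which is consistent, but the constants have to be tracked with some care; the renormalization at the end is a minor additional nuisance that is folded into the same slack.
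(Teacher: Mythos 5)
Your proposal is correct and follows essentially the same two-step surgery as the paper's proof: truncate the small eigenvalues of $\sigma_L$ below a threshold of order $\eps^4/d_L^7$ (using $\rho\preceq t\sigma$ with $t\le d_L^2$ plus the gentle measurement lemma to control the trace distance), then round the surviving eigenvalues up onto a geometric $(1+\eps)$-grid and renormalize. One small remark: in the renormalization step you only need the upper bound $\Tr{\sigma_L'}\le 1+\eps$, which directly gives $\sigma_L'=\Tr{\sigma_L'}\,\tilde\sigma_L\preceq(1+\eps)\tilde\sigma_L$; the lower bound $\Tr{\sigma_L'}\ge 1-O(d_L\tau)$ you worry about would push $\tilde\sigma_L$ up, not down, so there is no extra $(1+O(d_L\tau))$ factor to absorb.
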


\begin{proof}
  We begin by projecting out the small
  eigenvalues of $\sigma_L$ from both sides of $\rho \preceq
  t\sigma$ and then discretize the spectrum of $\sigma_L$. Let
  $\Pi_L$ be the projector onto the eigenspace of $\sigma_L$ with
  eigenvalues greater than $\eps^4/d_L^7$, and set $\Pi \EqDef \Pi_L
  \otimes \Id_R$. Consider,
  \begin{align*}
    \Tr{(\Id-\Pi)\rho} & \leq  t\Tr{(\Id-\Pi)\sigma} 
    & \text{$(\rho \preceq t \sigma )$} \\
    &\leq d_L^2\Tr{(\Id-\Pi)\sigma} 
    & \text{$(t \le d_L^2)$} \\
    & = d_L^2\Tr{(\Id-\Pi_L)\sigma_L} 
    & \text{(definition of $\Pi$)} \\
    &\leq (\eps/d_L)^4. & \text{(definition of $\Pi_L$)} 
  \end{align*}
  Defining $\hat{\rho} \EqDef \Pi\rho\Pi$, we deduce from the gentle
  measurement lemma (Fact~\ref{fact:gentle}) that
  $\norm{\hat{\rho}-\rho}_1\le 2 (\eps/d_L)^2$. Additionally, this
  definition trivially satisfies the assertion
  $\lambda_{max}(\hat{\rho})\le \lambda_{max}(\rho)$.
  
  Next, consider the spectral decomposition of $\Pi_L\sigma_L\Pi_L$,
  \begin{align*}
    (\Pi_L\sigma_L\Pi_L) &=\sum_i \ell_i \ketbra{\ell_i}{\ell_i},
    \end{align*}
  Let $N$ be the smallest integer for which $2^N\cdot
  \frac{\eps^4}{d_L^7}\ge 1$. It is easy to see that $N
  \le 7\log\big(d_L/\eps\big)$. For every $n \in \{
  0,1,\ldots,N\}$ let $\lambda_n \EqDef 2^n\cdot
  \frac{\eps^4}{d_L^7}$ so that $\lambda_N\ge 1$, and set
  \begin{align*}
    \sigma'_L &\EqDef \sum_{n=1}^N
    \lambda_n \sum_{\ell_i\in (\lambda_{n-1},\lambda_n]} 
    \ketbra{\ell_i}{\ell_i}, &
  \hat{\sigma}_L \EqDef \sigma'_L/\tr(\sigma'_L) .
  \end{align*}
  That is, we define $\sigma_L’$ by rounding each eigenvalue to the nearest upper discretized 
  threshold $\lambda_n$ and then renormalize to get a valid state. Clearly, 
  $|\Spec(\hat{\sigma}_L)| = |\Spec(\sigma'_L)| = N\le
  7\log(d_L/\eps)$. In addition, for each $i:
  \ell_i\le \lambda_n\le 2\ell_i$ and therefore,
  \begin{align*}
    \Pi_L\sigma_L\Pi_L\preceq \sigma'_L 
    &\preceq 2\cdot \Pi_L \sigma_L \Pi_L,
  \end{align*}
  and so $\tr(\sigma'_L) \le 2$. Applying
  $\Pi=\Pi_L\otimes\Id_R$ on both sides of $\rho\preceq t\sigma $,
  we get
  \begin{align*} 
    \tilde{\rho} 
    &\preceq t\cdot(\Pi_L\sigma_L\Pi_L)\otimes(\sigma_R ) \\
    &\preceq t \cdot (\sigma'_L \otimes \sigma_R) \\
    &\preceq t \cdot 2\cdot(\hat{\sigma}_L\otimes\sigma_R).
  \end{align*}
\end{proof}

The following lemma, adapted from \cRef{ref:Anurag-Rahul-2016}, formalizes the brothers extension -- a key technical tool that we tailor to our framework.
\begin{lemma}[The brothers extension, adapted from 
  \cRef{ref:Anurag-Rahul-2016}] 
\label{lem:flat}
  Let $\delta >0$, $\rho_A= \sum_i a_i \ketbra{a_i}{a_i}$  be a sub-state expressed in its eigenbasis,
  and $\tau_A$ be a state
  on $\mcH_A$ and let $\log t\EqDef \dmax{\rho_A}{\tau_A}$ so that
  $\rho_A \preceq t\cdot \tau_A$. There exists an auxiliary Hilbert space
  $\mcH_B$ (the brothers space of dimension $d_B$), together with $\rho'_{AB}\in
  \mcD_-(\mcH_{A} \otimes \mcH_B)$ and a \emph{flat}
  $\sigma_{AB}\in\mcD(\mcH_{A}
  \otimes \mcH_B)$ obeying the bound 
  
  $$\norm{\rho_{A}
  - \rho'_{A}} _1 \le \norm{\rho_{AB}
  - \rho'_{AB}}_1 \le \delta,$$ and
  \begin{align*}
    \rho'_{AB} \preceq t\cdot \frac{32}{\delta^2}\cdot \sigma_{AB} 
      \qquad \Leftrightarrow \qquad
  \dmax{\rho'_{AB}}{\sigma_{AB}}
      \le \dmax{\rho_A}{\tau_A} + \log\big(32/\delta^2\big),
  \end{align*}
  where 
\begin{align*}
  P_\rho &\EqDef \sum_i \ketbra{a_i}{a_i} 
      \otimes \sum_{m=1}^{d_B a_i}\ketbra{m}{m}, & \quad \rho_{AB} \EqDef \frac{P_{\rho}}{d_B}.
  \end{align*}
  
  In addition we get: 
  \begin{enumerate}
    \item If $\tau_A$ is a product state $\tau_L\otimes
        \tau_R$, then $\sigma_{AB}$ is separable (with respect to
        the cut $L:RB$) with Schmidt rank at most $|\Spec(\tau_L)|$.
  
  \item The image of $\rho'_{AB}$ is contained in $\mcH_A \otimes \Span\{\ket 1,\ldots,\ket{d_B\lambda_{\max}(\rho_A)} \}_B$.
  Moreover, $\lambda_{\max}(\rho'_{AB})=1/d_B$.

  \item The statement remains valid if we change $d_B \mapsto d_B \cdot k$ for an integer $k\in\mathbb N$.
  \end{enumerate}

\end{lemma}

\begin{proof}
  Consider the spectral decomposition,
  \begin{align*}
    \rho_A &= \sum_i a_i \ketbra{a_i}{a_i}, &
    \tau_A &= \sum_j b_j \ketbra{b_j}{b_j}.
  \end{align*}
  Note that $\Span\{\ket{a_i}\}\subseteq \Span\{\ket{b_j}\}$. We
  assume that all $a_i$ and $t \cdot b_j$ are rational numbers (this can
  be assumed with an arbitrarily small perturbation of the states).
  Let $d_B$ be the smallest common multiple of their denominators\footnote{The choice of least common multiple is not crucial for the proof.
  Any common multiple will also work. Therefore we can choose any multiple of the $d_B$ and the proof will work the same.}.
  Let $\mcH_B$ be a Hilbert space with $\dim(\mcH_B)=d_B$. Suppose
  that $\delta>0$ is a small rational number and
  $p=\frac{\delta^2}{32\;t}$.  Define two projectors on $\mcH_A
  \otimes \mcH_B$ as follows,
  \begin{align}
  \label{def:projectors}
    P_\rho &\EqDef \sum_i \ketbra{a_i}{a_i} 
      \otimes \sum_{m=1}^{d_B a_i}\ketbra{m}{m}, &
    P_\sigma &\EqDef \sum_i \ketbra{b_i}{b_i} 
      \otimes\sum_{m=1}^{\min\{d_B b_i/p ,d_B\}} \ketbra{m}{m} .
  \end{align}
  Let $\rho_{AB} \EqDef \frac{P_{\rho}}{d_B}$ and $\sigma_{AB}
  \EqDef\frac{P_\sigma}{\Tr{\tilde P_\sigma}}$. Note that
  $\sigma_{AB}$ is a flat state. At this point it is easy to verify
  from the definition of $P_\sigma$ that if $\tau_A$ is a product
  state $\tau_A=\tau_L\otimes \tau_R$ then $\sigma_{AB}$ is
  separable (with respect to the cut $L:RB$) with Schmidt rank at
  most $|\Spec(\tau_L)|$.
  
  Next, we would like to upperbound $\rho_{AB}$ by $\sigma_{AB}$.
  Unfortunately, we do not have that
  $\supp(\rho_{AB})\subseteq\supp(\sigma_{AB})$. For this purpose,
  we truncate the projection of the small eigenvectors of $\sigma$
  from each $\ket{a_i}$. We start by introducing the following
  necessary fact from \cRef{ref:Anurag-Rahul-2016}:
  \begin{fact}[Claim 3.3 in \cRef{ref:Anurag-Rahul-2016}]
    \label{fact:Anurag}
    Let $p>0$. Then,
    \begin{align*}
        \forall i : \quad \sum_{j\;;\; b_j\le p a_i} |\braket{b_j|a_i}|^2 \le t\cdot p .
    \end{align*}
  \end{fact}
  For each $i$, let $\ket{a_i} = \sum_j \alpha_{ij} \ket{b_j}$ and
  define,
  \begin{align*} 
    \ket{\tilde a _i}  \EqDef \sum_{j\;;\; b_j\geq p a_i} \alpha_{ij} \ket{b_j} \quad ; \quad 
    \ket{\hat a _i}  \EqDef \frac{\ket{\tilde a _i}}{\norm{\ket{\tilde a _i}}}.
  \end{align*}
  That is, we project out from each eigenvector $\ket{a_i}$ the
  component of $\ket{b_j}$ with small eigenvalues and then
  renormalize the resulting state. From Fact~\ref{fact:Anurag} and
  choice of $p$ we get $|\braket{a_i|\hat a_i}|^2\geq
  1-\delta^2/16$. Define,
  \begin{align} 
    \label{eq:Prhotilde} 
    \tilde P_\rho \EqDef \sum_i \ketbra{\hat a_i}{\hat a_i} \otimes \sum_{m=1}^{d_B a_i}\ketbra m m  \quad ; \quad \tilde \rho_{AB} = \frac{\tilde P_\rho}{d_B}.
  \end{align}
  Note that indeed $\im(\tilde\rho_{AB})\subseteq\mcH_A \otimes \Span\{\ket 1,\ldots,\ket{d_B\cdot\lambda_{\max}(\rho_A)} \}_B$.
  We have the following claim. 
  \begin{claim} 
  \label{clm:brothers}
    \begin{enumerate}
        \item $\norm{\rho_{AB}-\tilde{\rho}_{AB}}_1\le \delta/2$.
        \item $\supp(\tilde\rho_{AB})\subseteq \supp(\sigma_{AB})$.
        \end{enumerate}
  \end{claim}
  
  \begin{proof}
    \begin{enumerate}
      \item Consider,
        \begin{align*}
          &\norm{\rho_{AB}-\tilde\rho_{AB}}_1 \\
          &\le \frac{1}{d_B}\sum_i\big\|\left(\ketbra{a_i}{a_i}
            - \ketbra{\tilde a_i}{\tilde a_i}\right)\otimes 
                \sum_{m=1}^{d_B a_i}\ketbra{m}{m}\big\|_1
                  & \text{(triangle inequality)}\\
          &= \frac{1}{d_B}\sum_i\big\|\left(\ketbra{a_i}{a_i}
            - \ketbra{\tilde a_i}{\tilde a_i}\right)\big\|_1
              \cdot \big\|\sum_{m=1}^{d_B a_i}\ketbra{m}{m}\big\|_1\\
          &= \sum_i a_i \norm{\left(\ketbra{a_i}{a_i} 
            - \ketbra{\tilde a_i}{\tilde a_i}\right)}_1 \\
          &= \sum_i a_i \cdot 2\sqrt{1-|\braket{a_i|\hat a_i}|^2} 
            & \text{(Fact~\ref{fact:norm})}\\
          & \le \sum_i a_i \cdot 2\sqrt{\delta^2/16} 
            & \text{($|\braket{a_i|\hat a _i}|^2 \ge 1-\delta^2/16$)}\\
          &\le \delta/2.
        \end{align*}
        The first equality follows from the multiplicativity of the
        trace norm under tensor products.  
      \item Let $\ket{\tilde a_i,m}$ be a basis element in the support 
        of $\tilde{P}_\rho$. Note in this case $m\le d_B a_i$.
        Recall that 
        \begin{align*}
          \ket{\tilde a_i,m } \propto \sum_{j\;;\; b_j \geq p a_i}  \alpha_{ij} \ket{b_j,m}.
        \end{align*}
        Note that each of the $\ket{b_j,m}$ in the summation above is
        in the support of $\tilde{P}_\sigma$ (from \Eq{eq:Prhotilde}).
        This is because $a_i \le \frac{b_j}{p}$ implies that
        $d_B\frac{b_j}{p} \geq d_Ba_i\geq m$.
      \end{enumerate}
  \end{proof}
  Let $\tilde\rho_{AB} = \sum_i c_i \ketbra{c_i}{c_i} $ be the
  spectral decomposition. From Fact~\ref{fact:bhatia} and
  Claim~\ref{clm:brothers},
  \begin{align} 
  \label{eq:eigenbound}
    \norm{\Eig^\downarrow(\tilde\rho_{AB}) - 
      \Eig^\downarrow(\rho_{AB})}_1 
      = \sum_i |c_i - 1/d_B| \le \norm{\tilde \rho_{AB}
        - \rho_{AB}}_1\le \delta/2.
  \end{align}
  Define,
  \begin{align*}
    \rho'_{AB} \EqDef \sum_{i} 
      \min\{c_i, \frac{1}{d_B}\} \cdot \ketbra{c_i}{c_i}.
  \end{align*}
  From \Eq{eq:eigenbound} and Claim~\ref{clm:brothers},
  \begin{align} \label{ineq:brothers-bound}
    \norm{\rho'_{A} - \rho_{A}}_1 
      \le \norm{\rho'_{AB} - \rho_{AB}}_1
      \le \norm{\rho'_{AB} - \tilde \rho_{AB}}_1
      + \norm{\tilde\rho _{AB} - \rho_{AB}}_1 \le \delta .
  \end{align}
  Since $\sigma_{AB}$ is a flat state and $\supp(\rho_{AB}')
  =\supp(\tilde\rho_{AB}) \subseteq \supp(\sigma_{AB})$, we use \Lem{lem:Dmax_flat_2} to write
  \begin{align} \label{eq:final-Dmax-brothers}
    \dmax{\rho'_{AB}}{\sigma_{AB}}
    = \log(\tr(P_\sigma)\cdot \lambda_{\max}(\rho'_{AB}))
    = \log(\tr(P_\sigma)\cdot 1/d_B).
  \end{align}
  We finish by noting that $\tr(P_\sigma)\le \frac{32 d_B
  t}{\delta^2} $.  
\end{proof}

{~}

The following are two results about bounded Schmidt-rank operators. 
\begin{lemma} 
\label{lem:SR-Dmax}
  Let $\rho_{LR} = \sum_{i=1}^D p_i\rho^i_L\otimes\rho^i_R$ be a state,
  where $\{p_i\}$ is a probability distribution and 
  $\{\rho^i_L\}_i, \{\rho^i_R\}_i$ are states.
  Then there exist  states $\theta_L, \theta_R$ such that
  \begin{align*}
    \rho_{LR} \preceq D^2\cdot \theta_L\otimes\theta_R.
  \end{align*}
\end{lemma}

\begin{proof} \ \\
  Define $\theta_L \EqDef \frac{1}{D}\sum_{i=1}^D \rho_L^i$ and
  $\theta_R \EqDef \frac{1}{D}\sum_{j=1}^D \rho_R^j$. Then,
  \begin{align*}
    \theta_L\otimes\theta_R 
          = \frac{1}{D^2}\sum_{i,j}\rho_L^i\otimes\rho_R^j
          \succeq \frac{1}{D^2}\sum_i\rho_L^i\otimes\rho_R^i
          \succeq \frac{1}{D^2}\sum_ip_i\rho_L^i\otimes\rho_R^i
          = \frac{1}{D^2}\rho_{LR} .
  \end{align*}  
\end{proof}

\begin{lemma} 
\label{lem:boundSR}
  Let $ M = \sum_{i=1}^D \alpha_i (L_i \otimes R_i)$ where
  $\|M\|_2=1$. There exists states $\tau_L, \tau_R$ such that,
  \begin{align*}
    M M^\dagger \preceq D^2 (\tau_L \otimes \tau_R).
  \end{align*}
  In other words, 
  \begin{align*}
    \maxinf{L}{R}_{MM^\dagger} \le 2  \cdot \log \SR{L}{R}_{M}.
  \end{align*}
\end{lemma}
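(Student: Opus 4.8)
The plan is to realise $MM^\dagger$ as the reduced state of a bipartite pure state whose Schmidt rank we control, and then dominate that pure state by a multiple of the tensor product of the projectors onto its local supports --- the same ``purify, then bound by the local supports'' move already used in Lemma~\ref{lem:SR-Dmax}. Since $\norm{M}_2=1$ we have $\Tr{MM^\dagger}=1$, so $\rho\EqDef MM^\dagger$ is a genuine state. Introduce conjugate copies $\mcH_{\tilde L},\mcH_{\tilde R}$ of $\mcH_L,\mcH_R$ and put $\ket\psi\EqDef\myvec{M}\in\mcH_L\otimes\mcH_{\tilde L}\otimes\mcH_R\otimes\mcH_{\tilde R}$. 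By Fact~\ref{fact:vec} we have $\braket{\psi|\psi}=\Tr{M^\dagger M}=1$ and $\Ptr{\ketbra{\psi}{\psi}}{\tilde L\tilde R}=MM^\dagger$; moreover $\myvec{L_i\otimes R_i}$ equals $\myvec{L_i}_{L\tilde L}\otimes\myvec{R_i}_{R\tilde R}$ after regrouping the four tensor factors, so linearity of $\myvec{\cdot}$ gives
\begin{align*}
  \ket\psi=\sum_{i=1}^D\alpha_i\,\myvec{L_i}_{L\tilde L}\otimes\myvec{R_i}_{R\tilde R},\qquad\text{hence}\qquad \SR{L\tilde L}{R\tilde R}_{\ket\psi}\le D .
\end{align*}

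Next I would pass to the Schmidt decomposition $\ket\psi=\sum_{j=1}^{D'}\mu_j\ket{u_j}_{L\tilde L}\otimes\ket{v_j}_{R\tilde R}$ with $D'\le D$ and $\{\ket{u_j}\}_j,\{\ket{v_j}\}_j$ orthonormal, and set $\Pi_{L\tilde L}\EqDef\sum_{j=1}^{D'}\ketbra{u_j}{u_j}$, $\Pi_{R\tilde R}\EqDef\sum_{j=1}^{D'}\ketbra{v_j}{v_j}$, each a projector of rank $D'\le D$. Since $\ket\psi$ is a unit vector lying in the range of the projector $\Pi_{L\tilde L}\otimes\Pi_{R\tilde R}$,
\begin{align*}
  \ketbra{\psi}{\psi}\preceq\Pi_{L\tilde L}\otimes\Pi_{R\tilde R}=(D')^2\,\tau_{L\tilde L}\otimes\tau_{R\tilde R}\preceq D^2\,\tau_{L\tilde L}\otimes\tau_{R\tilde R},
\end{align*}
where $\tau_{L\tilde L}\EqDef\Pi_{L\tilde L}/D'$ and $\tau_{R\tilde R}\EqDef\Pi_{R\tilde R}/D'$ are (flat) states. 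Applying the partial trace over $\tilde L\tilde R$, which is order-preserving, and using $\Ptr{\ketbra{\psi}{\psi}}{\tilde L\tilde R}=MM^\dagger$ gives $MM^\dagger\preceq D^2\,\tau_L\otimes\tau_R$ with $\tau_L\EqDef\Ptr{\tau_{L\tilde L}}{\tilde L}$ and $\tau_R\EqDef\Ptr{\tau_{R\tilde R}}{\tilde R}$ legitimate states --- the first assertion. The ``in other words'' is then immediate from the definitions: $\tau_L\otimes\tau_R$ is a product state with $MM^\dagger\preceq D^2(\tau_L\otimes\tau_R)$, so $\maxinf{L}{R}_{MM^\dagger}\le\dmax{MM^\dagger}{\tau_L\otimes\tau_R}\le 2\log D$, and one may take the given expression for $M$ to be its operator Schmidt decomposition, so that $D=\SR{L}{R}_M$. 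Equivalently, the whole lemma is the $K=M$, $\rho=\Id/(d_Ld_R)$ instance of Lemma~\ref{lem:SR-Dmax}, Bullet~2, whose Schmidt rank is $1$.

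I do not anticipate a real obstruction here; the step to handle with care is the $\myvec{\cdot}$ bookkeeping --- specifically the fact that it is $\myvec{M}$ itself, rather than $\myvec{\sqrt{MM^\dagger}}$, whose Schmidt rank across $L\tilde L:R\tilde R$ is $\le D$ (this is exactly what keeps the bound at $D^2$ instead of something scaling with $d_L,d_R$) --- together with the elementary facts that a unit vector in the range of a projector $P$ satisfies $\ketbra{\psi}{\psi}\preceq P$ and that partial-tracing a normalised support projector over an ancilla returns a density operator.
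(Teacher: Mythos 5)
Your proof is correct and follows essentially the same route as the paper's: vectorize $M$ to a pure state $\myvec{M}$ on $L\tilde L:R\tilde R$, observe its Schmidt rank is at most $D$, dominate $\ketbra{\psi}{\psi}$ by the tensor product of the rank-$\le D$ local support projectors, and partial-trace. The only cosmetic difference is that you project onto the Schmidt vectors of $\ket\psi$ rather than onto $\Span\{\myvec{L_i}\}$ and $\Span\{\myvec{R_i}\}$ directly as the paper does, but both give rank-$\le D$ local projectors and the identical $D^2$ bound; your closing remark that the lemma is the $K=M$, $\rho=\Id/(d_L d_R)$ specialisation of Lemma~\ref{lem:SR-Dmax} (Bullet~2) is also a valid alternative derivation.
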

\begin{proof}
  Let,
  \begin{align*}
    \ket{v}_{L \tilde{L} R \tilde{R}} = \myvec{M} = \sum_{i=1}^D \alpha_i \cdot ( \myvec{L_i}_{L\tilde{L}} \otimes \myvec{R_i}_{R\tilde{R}}).
  \end{align*}
  From Fact~\ref{fact:vec}, $\Ptr{\ketbra{v}{v}}{\tilde{L}
  \tilde{R}} = M M^\dagger$. Let $\Pi_{L \tilde{L}}$ be the
  projector onto $\mathsf{span}\{\myvec{L_i}\}$ and $\tau_{L
  \tilde{L}} = \frac{\Pi_{L \tilde{L}}}{\Tr{\Pi_{L \tilde{L}}}}$.
  Similarly let $\Pi_{R \tilde{R}}$ be the projector onto 
  $\mathsf{span}\{\myvec{R_i}\}$ and $\tau_{R \tilde{R}} =
  \frac{\Pi_{R \tilde{R}}}{\Tr{\Pi_{R \tilde{R}}}}.$ Note
  $\Tr{\Pi_{L \tilde{L}}} \le D$ and $\Tr{\Pi_{R \tilde{R}}} \le D$.
  Consider,
  \begin{align*}
    \ketbra{v}{v} &\preceq \Pi_{L \tilde{L}} \otimes \Pi_{R \tilde{R}} \preceq D^2 (\tau_{L \tilde{L}} \otimes \tau_{R \tilde{R}}) ,\\
    \Rightarrow M M^\dagger &\preceq D^2(\tau_{L} \otimes \tau_{R}). & \mbox{(monotonicity of partial trace)}
  \end{align*} 
\end{proof}

\section{Proof of the main results} 
\label{sec:proofs}

In this section we present the proof of our main area-law
bootstrapping result, \Thm{thm:bootstrapping}, as well as the
proofs of Corollaries~\ref{corol:mutual-info},\ref{corol:1D-AL},
\ref{corol:2D-AL}.

\subsection{Proof of \Thm{thm:bootstrapping}}
\label{sec:main-AL-proof}

As in the overview of the proof, we slightly change the notation and
denote the bi-partition of the lattice by $L\cup R$, instead of
$L\cup L^c$. We let $d_L$ denote the dimension of the Hilbert space
of subsystem $L$, e.g. $d_L= d^{|L|}$ for $d$-dimensional qudits.

Given a $\eps>0$, our goal is to find a state $\rho'$ such that
\begin{align*}
    \norm{\rho'-\rho}_1\le \eps \qquad \text{and} \qquad \maxinf{L}{R}_{\rho'}\le 2\log D
+ 12\log \left(\frac{\log d_L}{\eps}\right) + O(1).
\end{align*} Our strategy is
to construct a sequence of (sub-)states $\gs=\rho^{(0)} \to
\rho^{(1)} \to \rho^{(2)} \to \ldots$, together with corresponding
product states $\tau^{(k)}=\tau_L^{(k)}\otimes\tau^{(k)}_R$ and
bounds $t^{(k)}$ such that $\rho^{(k)} \preceq
t^{(k)}\cdot\tau^{(k)}_L\otimes\tau^{(k)}_R$. This implies that
$\maxinf{L}{R}_{\rho^{(k)}}\le \log(t^{(k)})$. On a very high level,
every $\rho^{(k)}, \tau^{(k)}$ are obtained from $\rho^{(k-1)},
\tau^{(k-1)}$ by first ``discretizing'' and truncating their
eigenvalues, and then applying an AGSP. Our construction guarantees
that consecutive $\rho^{(k)}$ are close to each other, and are
therefore close to $\gs$. If all the $t^{(k)}$ are decreasing
rapidly enough, then at some point we will get a $\rho^{(k)}$ with
sufficiently low $\maxinf{L}{R}_{\rho^{(k)}}$, which, in turn will
imply a bound on $\maxinfeps{L}{L^c}_{\gs}$. On the other hand, if
not all the $t^{(k)}$ are decreasing rapidly, then for some $k$ it
must be that $t^{(k+1)}\ge t^{(k)}/2$. This condition, together with
the fact that the states $\rho^{(k+1)}$, $\tau^{(k+1)}$ are obtained
from $\rho^{(k)}$, $\tau^{(k)}$ using a ``good AGSP'' will enable us
to get an upper bound on $t^{(k)}$ --- which will yet again imply an
upper bound on $\maxinfeps{L}{L^c}_{\gs}$.

We begin with the definition of the sequence of states
$\{\rho^{(k)}\}$ and $\{\tau^{(k)}\}$, which are defined by
induction. For $k=0$, we define $\rho^{(0)}\EqDef \gs$, and let
$\tau^{(0)}=\tau^{(0)}_L\otimes\tau^{(0)}_R$ be a product state
such that $\gs \preceq 2^{\maxinf{L}{R}_\gs}\cdot \tau^{(0)}$.
Setting $t^{(0)} \EqDef 2^{\maxinf{L}{R}_\gs}$, we obtain
\begin{align*}
  \rho^{(0)} \preceq t^{(0)} \cdot \tau^{(0)} .
\end{align*}

Let us now define $\rho^{(k+1)}$, $\tau^{(k+1)}$ from
$\rho^{(k)}$, $\tau^{(k)}$. For brevity, we write $\rho=\srho{k}$,
$t=\st{k}$ and $\tau = \stau k$. Our construction consists of 4
steps.

{~}

\noindent\textbf{Step I: Discretization: $\rho^{(k)} \to
\hat{\rho}$, $\tau^{(k)} \to \hat{\tau}$} \ \\

We begin by defining a small parameter
\begin{align}
\label{def:delta}
  \delta \EqDef \Big(\frac{\eps}{50 \log (d_L)}\Big)^2 
\end{align}  
and applying \Lem{lem:discretiztion} on $\rho \preceq t
(\tau_L\otimes\tau_R)$ with $\epsilon ' = \epsilon/50$. This
produces a new sub-state $\hat \rho$ and a product state $\hat{\tau}
= \hat{\tau}_L\otimes \tau_R$ such that\footnote{The assumption $\st
0 \le d_L^2$ is promised by Fact~\ref{lem:Imax_bound}. Also for
$k\geq 1: \st k \le \st 0$ as will be later shown.} 
\begin{align} \label{eq:mixed-first-ineq}
    \hat{\rho} &\preceq 2 t \hat{\tau}
\end{align}
with 
\begin{align*}
  |\Spec(\hat{\tau}_L)|  = 7\log(50 d_L/\eps) 
  < (50 \log(d_L)/\eps)^2 = \frac{1}{\delta}
\end{align*}
and 
\begin{align}
\label{eq:distance-to-hat}
  \norm{\hat{\rho} -\rho}_1 \le 2\big(\frac{\eps}{50 d_L}\big)^2
    < \delta.
\end{align}

{~}

\noindent\textbf{Step II: Brothers extension: $\hat{\rho}
\to\rho_{AB}'$, $\hat{\tau}\to \sigma_{AB}$} \ \\
  
Let $\mcH_A = \mcH_L \otimes \mcH_R$. We now use a mapping known
as the `brothers extension', in which we introduce an auxiliary
Hilbert space $\mcH_B$, known as the `brothers space', and extend
$\hat{\rho} \to \rho_{AB}'$ and $\hat{\tau}\to \sigma_{AB}$. The
brothers extension should be viewed as a purely mathematical
tool, without any direct physical meaning. Its purpose is
transforming the product state $\hat{\tau}$ to a {flat state}
$\sigma_{AB}$, which, in turn will enable us to relate its min
and max entropies. The brothers extension is done by invoking
\Lem{lem:flat} with the parameter $\delta$ and $\rho_A =
\hat{\rho}, \tau_A = \hat{\tau}$. Recalling that
$\hat{\rho}\preceq 2t\hat{\tau}$, we obtain a sub-state
$\rho'_{AB}$ and a flat state $\sigma_{AB}$ such that 
\begin{align} 
\label{eq:mixed-second-ineq}
  \rho'_{AB} &\preceq 2t\cdot
    (32/\delta^2)\cdot\sigma_{AB} \\
 \label{eq:distance-to-prime}
  \norm{\rho'_A-\hat{\rho}}_1 &\le \delta, \\
   \SR{L}{RB}_{\sigma} &\le |\Spec(\hat{\tau}_L)| \le 1/\delta.
\end{align}
Defining $f(\delta)\EqDef 64/\delta^2$,
\Ineq{eq:mixed-second-ineq} implies
\begin{align}
\label{eq:f-ineq}
  \rho'_{AB} &\preceq t\cdot f(\delta)\cdot\sigma_{AB} .
\end{align}

{~}

\noindent\textbf{Step III: Applying the AGSP: $\rho'_{AB}\to
\tilde{\rho}_{AB}$, $\sigma_{AB}\to \theta_L\otimes\theta_{RB}$} \ \\

The next step would be to apply our $(D,\Delta)$-AGSP on both
sides of the above inequality. However, our $(D,\Delta)$-AGSP $K$
acts on $\mcH_A$, while the operators act in the extended space
$\mcH_A\otimes\mcH_B$. We therefore extend $K$ to act on
$\mcH_A\otimes\mcH_B$: we let $r \EqDef
\dim(V_{gs})$ (i.e., $r$ is the ground space degeneracy) and
then define 
\begin{align}     \label{eq:extended-AGSP}
  \Pi_r\EqDef \sum_{m=1}^{d_B/r}\ketbra m m \quad , \quad
    \Pi'_{gs}\EqDef \Pi_{gs}\otimes \Pi_r \quad , \quad 
    \gs_{AB}\EqDef \frac{\Pi'_{gs}}{\tr(\Pi'_{gs})} 
      \quad , \quad  K_{AB} \EqDef K \otimes \Pi_r .
\end{align}
Note that $K_{AB}$ is a $(D,\Delta)$-AGSP for (the extended ground
space) $\Pi_{gs}'$ and $\gs = \tr_B(\gs_{AB})$. Applying
$K_{AB}$ on both sides of \Ineq{eq:f-ineq}, we get
\begin{align*}
    \tilde \rho_{AB} \EqDef K_{AB}\rho'_{AB}K_{AB}^\dagger
    \preceq t \cdot f(\delta) \cdot  K_{AB}  \sigma_{AB} K_{AB}^\dagger.
\end{align*}
As $\sigma_{AB}$ is a flat state, it follows that
$\sqrt{\sigma_{AB}} \propto \sigma_{AB}$ and therefore
\begin{align*}
  \SR{L}{RB}_{\sqrt{\sigma_{AB}}} =\SR{L}{RB}_{\sigma_{AB}}
    \le 1/\delta.
\end{align*}
Moreover, since $\SR{L}{RB}_{K_{AB}}\le D$, we get
$\SR{L}{RB}_{K_{AB}\sqrt{\sigma_{AB}}} \le D/\delta$. Invoking
\Lem{lem:boundSR} with $M=K_{AB}\sqrt{\sigma_{AB}}$, we find that
there exists a product state $\theta_L\otimes\theta_{RB}$ such
that
\begin{align} \label{eq:mixed-fourth-ineq}
    \tilde \rho_{AB}  \preceq t  \cdot f(\delta)\cdot \delta^{-2}
      \cdot D^2 \cdot\Tr{K_{AB}  \sigma_{AB} K_{AB}^\dagger} 
      \cdot (\theta_L\otimes \theta_{RB}).
\end{align}

{~}

\noindent\textbf{Step IV: Tracing out and truncation:
$\tilde{\rho}_{AB}\to \tilde{\rho}_A \to \rho^{(k+1)}$, 
$\theta_L\otimes\theta_{RB} \to \theta_L\otimes\theta_{R} 
=\tau^{(k+1)}$}  \\

Once we applied the AGSP on the extended space, we return to the
original space $\mcH_A=\mcH_L\otimes \mcH_R$ by tracing out the
brothers space:
\begin{align*}
  \tilde{\rho}_A \EqDef \Ptr{\tilde{\rho}_{AB}}{B} .
\end{align*}
The final step is to round each eigenvalue of $\tilde{\rho}_A$
which is larger than $\lambda_{\max}(\srho{k})$. Formally, let 
$\tilde{\rho}_A = \sum_i \lambda_i \ketbra{\psi_i}{\psi_i}$
be the spectral decomposition of $\tilde{\rho}_A$. Then we define
\begin{align}
  \label{def:next-rho}
	\rho^{(k+1)} &\EqDef \sum_i \lambda'_i \ketbra{\psi_i}{\psi_i}, &
	\lambda'_i &\EqDef \min(\lambda_i, \lambda_{\max}(\rho^{(k)}) .
\end{align}
This ensures 
\begin{align}
\label{eq:lambdamax} 
  \lambda_{\max}(\srho{k+1}) \le \lambda_{\max}(\srho{k}).  
\end{align}
In addition, we define $\tau^{(k+1)}=\tau^{(k+1)}_L\otimes
\tau^{(k+1)}_R$ by
\begin{align}
  \label{eq:next-tau}
	\tau^{(k+1)}_L &\EqDef \theta_L, &
	\tau^{(k+1)}_R &\EqDef \theta_R = \tr_B\theta_{RB} .
\end{align}
By definition, $\rho^{(k+1)} \preceq \tilde{\rho}_A$ and so by
tracing out the brothers space in \eqref{eq:mixed-fourth-ineq}, we
obtain
\begin{align} \label{eq:mixed-fourth-ineq-A}
  \rho^{(k+1)}  \preceq t\cdot f(\delta)\cdot \delta^{-2}
      \cdot D^2 \cdot\Tr{K_{AB}  \sigma_{AB} K_{AB}^\dagger} 
      \cdot \tau^{(k+1)}_L\otimes\tau^{(k+1)}_R.
\end{align}
To define $t^{(k+1)}$ we will use the following claim, whose proof
we defer to later. 
\begin{claim} \label{clm:quantities} \ 
  \begin{enumerate}

    \item \label{clm:quantities-1} 
      \begin{align*}
        \Ptr{K_{AB}\, \rho'_{AB}\, K_{AB}^\dagger}{B} 
        = K_A \rho'_A K^\dagger_A.
      \end{align*}
       
    \item \label{clm:quantities-2} For every $k$, assuming 
    $\Delta\le \delta$,
      \begin{align*}
        \norm{\rho^{(k+1)}-\rho^{(k)}}_1\le 20 \sqrt\delta.
      \end{align*}

    \item \label{clm:quantities-3}  
      \begin{align*}
         \tr\Big(K_{AB}\,\sigma_{AB}\, K_{AB}^\dagger\Big)
         \le \Delta + \frac{1}{\delta^2} 
         \cdot 2^{-{\maxinf{L}{R}}_{\rho_A'}} .
      \end{align*}
      
  \end{enumerate}
\end{claim}

Using Bullet~\ref{clm:quantities-3} of the claim, 
\Ineq{eq:mixed-fourth-ineq-A} becomes
\begin{align*} 
  \rho^{(k+1)}  &\preceq t \cdot f(\delta)\cdot 
     \delta^{-2}\cdot D^2\cdot 
         \left( \Delta + \frac{1}{\delta^2}\cdot 
        2^{-\maxinf{L}{R}_{\rho_A'}}\right) \cdot
     \tau^{(k+1)}_L\otimes\tau^{(k+1)}_R.
\end{align*}
We now use our main structural
assumption on the AGSP, namely, $D^2\cdot \Delta \le c_0(\eps/\log
d_L)^8$, and choose 
\begin{align}
  c_0 \EqDef 10^{-16}.
\end{align}
Using the definition of $\delta$ in \eqref{def:delta} and
the definition of $f(\delta)=64/\delta^2$, it is easy to verify that
such $c_0$ guarantees that $f(\delta)\cdot \delta^{-2}\cdot
D^2\cdot\Delta\le 1/4$ and therefore,
\begin{align*}
  \rho^{(k+1)} &\preceq t\left(
    \frac{1}{4}+ 2^{-\maxinf{L}{R}_{\rho'_A}}
      \cdot f(\delta)\cdot D^2/\delta^4\right)
        \cdot\tau^{(k+1)}_L\otimes\tau^{(k+1)}_R.
\end{align*}
Recalling that $t=t^{(k)}$, we define
\begin{align}
\label{eq:next-t}
  t^{(k+1)} &\EqDef t^{(k)}\cdot 
    \left(\frac{1}{4} + 2^{-\maxinf{L}{R}_{\rho_A'}}
      \cdot f(\delta)\cdot D^2/\delta^4\right),  
\end{align}
and obtain $\rho^{(k+1)} \preceq t^{(k+1)}\cdot \tau^{(k+1)}$ as
required.

{~}  

Now that we have defined our sequence $\rho^{(k)} \preceq
t^{(k)}\cdot \tau^{(k)}_L\otimes\tau^{(k)}_R$, let us understand why
it implies a bound on $\maxinfeps{L}{R}_\gs$.  We first observe that
there must be an integer $k\le 2\log d_L$ such that $t^{(k+1)} \ge
t^{(k)}/2$. Otherwise, for $\ell=\lceil 2\log d_L\rceil$,
\begin{align*}
    t^{(\ell)} < \frac{t^{(\ell-1)}}{2}
    < \frac{t^{(\ell-2)}}{2^2}
    < \cdots 
    < \frac{t^{(0)}}{2^\ell} .
\end{align*}
But since $t^{(0)} \le d_L^2$ (Fact~\ref{lem:Imax_bound}), we get
that $t^{(\ell)}<1$, which is a contradiction. 

Let us then take $k\le 2\log d_L$ to be an integer for which
$t^{(k+1)} \geq t^{(k)}/2$. From the definition of $t^{(k+1)}$, we
get
\begin{align*}
  \frac{t^{(k)}}{2} & \le  \st{k}  \cdot \left(\frac{1}{4} 
    + 2^{-\maxinf{L}{R}_{\rho_A'}} 
      \cdot f(\delta)\cdot D^2/\delta^4 \right).
\end{align*}
Dividing both sides by $t^{(k)}$ and re-grouping the terms, we get
\begin{align*}
  2^{\maxinf{L}{R}_{\rho_A'}} 
    &\le 4D^2 \cdot f(\delta)/\delta^4 
    = 256D^2/\delta^6
      = 256D^2\cdot\left(\frac{50\log d_L}{\eps}\right)^{12}.
\end{align*}
Then taking $\log$ on both sides shows that
\begin{align} 
\label{eq:final_I_max}
  \maxinf{L}{R}_{\rho_A'}  = 2\log D + 12\log(\log d_L/\eps) + O(1).
\end{align}

Finally, we need to show that $\norm{\rho_A'-\gs}_1\le \eps$.  By
Claim~\ref{clm:quantities} Bullet \ref{clm:quantities-2}, we get
that for every $\ell=0, 1, \ldots, k \le 2\log(d_L)$
\begin{align*}
  \big\|\rho^{(\ell+1)}-\rho^{(\ell)}\big\|_1\le 20\sqrt \delta ,
\end{align*}
and therefore by a telescopic argument,
\begin{align*}
  \big\|\rho^{(k)}-\gs\big\|_1 
= \big\|\rho^{(k)}-\rho^{(0)}\big\|_1
  \le 20\cdot k \cdot \sqrt{\delta}
  \le 20\cdot 2\log d_L\cdot \frac{\eps}{50\log d_L} \le \frac{4}{5}\eps .
\end{align*}
In our notation, $\rho^{(k)}=\rho$, and so by inequalities 
\eqref{eq:distance-to-hat} and \eqref{eq:distance-to-prime} we
get
\begin{align*}
  \norm{\rho'_A-\rho^{(k)}}_1 
  \le \norm{\rho'_A -\hat{\rho}}_1 
     + \norm{\hat{\rho} -\rho^{(k)} }_1
  \le \delta + \delta = 2\delta ,
\end{align*}
which brings us to 
\begin{align*}
  \norm{\rho'_A - \gs}_1 \le \frac{4}{5}\eps + 2\delta =
    \frac{4}{5}\eps + 2\Big(\frac{\eps}{50\log d_L}\Big)^2 \le \eps.
\end{align*}

{~}  

We finish the proof by proving Claim~\ref{clm:quantities}.

\begin{proof}[ of Claim~\ref{clm:quantities}] 
  For brevity denote $\rho = \rho^{(k)}, \eta =\rho^{(k+1)}$.
  \begin{enumerate}
    \item By inequality \eqref{eq:lambdamax}, we get
      $\lambda_{\max}(\rho^{(k)})\le \lambda_{\max}(\rho^{(0)}) =
      1/r$, where $r$ is the degeneracy of the ground space. In
      addition, as promised by \Lem{lem:discretiztion}, moving
      from $\rho$ to $\hat{\rho}$ does not increase the largest
      eigenvalue of $\hat{\rho}$ and so
      $\lambda_{\max}(\hat{\rho}) \le 1/r$. As promised from
      \Lem{lem:flat}, 
      \begin{align*}
        \image(\rho'_{AB}) \subseteq \mcH_A \otimes \Span(\ket 1,\ldots , \ket{d_B\cdot\lambda_{\max}(\rho)})
        \subseteq \mcH_A \otimes \Span(\ket 1,\ldots , \ket{d_B/r}).
      \end{align*}
      That is the image of $\rho_{AB}'$ in system $B$ is
      completely contained in the image of $\Pi_r$. As a result
      $K_{AB}=K\otimes \Pi_r$ acts as identity on the $B$ part;
      that is,
      \begin{align*}
         K_{AB}\rho _{AB}' K_{AB}^\dagger =
            (K\otimes \Id_B)\rho _{AB}' (K^\dagger \otimes \Id_B),
      \end{align*}
      and Bullet~\ref{clm:quantities-1} is achieved.
        
    \item Recall, to get from $\rho_A$ to $\eta_A$ we perform 
      the following steps: 
      \begin{enumerate}
        \item Obtain $\hat{\rho}$ from $\rho$ using 
          \Lem{lem:discretiztion}.
    
        \item Obtain $\rho'_{AB}$ using \Lem{lem:flat}, and
          then $\rho_A'=\tr_{B}\rho'_{AB}$.
    
        \item Obtain $\tilde{\rho}_A =
          \tr_B(K_{AB}\rho'_{AB}K^\dagger_{AB}) = K\rho'_A K^\dagger$.
    
        \item Truncate the eigenvalues of $\tilde \rho_A$ which 
          exceed $\lambda_{\max}(\rho_A)$ to
          $\lambda_{\max}(\rho_A)$.
      \end{enumerate}
      We upper bound the trace distance introduced by each of
      these steps.  
      \begin{enumerate}
        \item \Lem{lem:discretiztion} promises that 
          $\norm{\rho-\hat{\rho}}_1\le \delta$.
    
        \item  \Lem{lem:flat} ensures that $\norm{\hat{\rho} 
          -\rho'_A}_1\le \delta$.
        
        \item We first show that $\Tr{(\Id_A-\Pi_{gs})\rho}\le \Delta$.
          Recall that $\rho=\rho^{(k)}$ and therefore
          \begin{align*}
            \rho\preceq \tilde{\rho}_A^{(k-1)} 
            = \tr_B \tilde{\rho}_{AB}^{(k-1)}
            = \tr_B K_{AB}(\rho'_{AB})^{(k-1)} K_{AB}^\dagger.
          \end{align*}
          Then by Bullet~\ref{clm:quantities-1} $\tr_B\Big(
          K_{AB}\cdot(\rho'_{AB})^{(k-1)}\cdot K_{AB}^\dagger\Big)
          = K\cdot(\rho'_A)^{(k-1)}\cdot K^\dagger$ and therefore
          $\rho\preceq K(\rho'_A)^{(k-1)}K^\dagger$. From this
          inequality, we conclude
          \begin{align*}
            \Tr{(\Id_A-\Pi_{gs})\rho} \le 
              \Tr{(\Id_A-\Pi_{gs})K(\rho_A')^{(k-1)}K^\dagger}\le \Delta,
          \end{align*}
          where we used properties of the AGSP $K$ from
          Definition~\ref{def:AGSP}. Thus,
          \begin{align*}
            \Tr{(\Id_A-\Pi_{gs})\rho_A'}\le \Tr{(\Id_A-\Pi_{gs})\rho} 
            + \norm{\rho_A-\rho'_A}_1\le \Delta+2\delta \le 4\delta .
          \end{align*}
          In the last inequality, we used the fact that $\Delta\le
          \delta$. This can be seen from the AGSP condition
          $D^2\cdot \Delta \le c_0\big(\eps/|L|\big)^8$, together
          with the definition of $\delta$ in \eqref{def:delta} and
          our choice of $c_0=10^{-16}$. Using the gentle
          measurement lemma (Fact~\ref{fact:gentle}) we deduce,
          \begin{align*}
            \norm{\rho'_A- \Pi_{gs}\rho'_A \Pi_{gs}}_1 
              \le 4\sqrt{\delta}.
          \end{align*}
			
          Next, using the fact that $K$ commutes with $\Pi_{gs}$,
          together with the fact 
          \begin{align*}
              \tilde{\rho}_A =
          \tr_B(K_{AB}\rho'_{AB}K^\dagger_{AB}) = K\rho'_A
          K^\dagger,
          \end{align*}
          we deduce that
          $\Pi_{gs}\tilde{\rho}_A\Pi_{gs} =
          \Pi_{gs}\rho'_A\Pi_{gs}$. Therefore,
          \begin{align*}
            \norm{\tilde \rho_A-\rho_A'}_1 &
            \le \norm{\tilde{\rho}_A-\Pi_{gs}\tilde{\rho}_A\Pi_{gs}}_1 
              + \norm{\rho_A'-\Pi_{gs}\rho'_A\Pi_{gs}}_1 
                 & \text{(triangle inequality)}\\
            &= \norm{K(\rho_A'-\Pi_{gs}\rho'_A\Pi_{gs})K^\dagger}_1 
              +\norm{\rho_A'-\Pi_{gs}\rho'_A\Pi_{gs}}_1\\
            & \le  4\sqrt{\delta} + 4\sqrt{\delta} 
              & \text{($K$ is a contractive map)}\\
            &= 8\sqrt{\delta},
          \end{align*}
          where in the second inequality we used the fact
          that\footnote{This follows from the fact that $K$ fixes
          ground states and shrinks the orthogonal part, i.e.
          decomposing $\ket\psi=
          \ket{\psi_{gs}}+\ket{\psi_{gs}^\bot}$, we get
          $\norm{K\ket \psi}^2=\norm{\ket{\psi_{gs}}}^2 +
          \norm{K\ket{\psi_{gs}^\bot}}^2\leq \norm{\ket
          \psi}^2\leq \norm{\ket{\psi_{gs}}}^2 +
          \norm{\ket{\psi_{gs}^\bot}}^2=\norm{\ket \psi}^2$ for
          any $\ket \psi$.} $\norm{K}=\norm{K^\dagger}=1$, hence
          using Holder inequality we have for any operator
          $O\in\mcL(\mcH)$, $\norm{KOK^\dagger}_1\leq
          \norm{O}_1$. 
          
        \item Combining the previous, we get,
          \begin{align*}
            \norm{\tilde{\rho}_A -\rho_A}_1 
              \le 8\sqrt{\delta} + 2\delta
              \le 10\sqrt{\delta} .
          \end{align*}
          From Fact~\ref{fact:bhatia}, we get
          \begin{align}  \label{corr:Bhatia}
          \begin{split}
            \norm{\Eig^\downarrow(\tilde{\rho}_A) 
               -\Eig^\downarrow(\rho)}_1 & 
               = \sum_{i\;;\;\lambda_i^\downarrow(\tilde \rho_A)\geq 
               \lambda_i^\downarrow(\rho)}
               \big(\lambda_i^\downarrow(\tilde \rho_A) 
                 - \lambda_i^\downarrow( \rho)\big)\\
               & + \sum_{i\;;\;\lambda_i^\downarrow(\tilde \rho_A)< 
               \lambda_i^\downarrow(\rho)}
               \big(\lambda_i^\downarrow(\rho)-\lambda_i^\downarrow(\tilde \rho_A)\big)\\
              & \le \norm{\tilde{\rho}_A -\rho}_1 
               \le 10\sqrt\delta .
               \end{split}
          \end{align}
          Recall that $\rho^{(k+1)}$ was obtained from $\tilde
          \rho_A$ by rounding down each eigenvalue
          $\lambda_i^\downarrow(\tilde \rho_A)$ which is larger
          than $\lambda^\downarrow_0(\rho)$ to
          $\lambda^\downarrow_0(\rho)$. Therefore
          \begin{align*}
            \norm{\rho^{(k+1)}-\tilde{\rho}_A}_1 
            = \sum_{\lambda_i^\downarrow(\tilde \rho_A)\geq 
              \lambda_0^\downarrow(\rho)}\!\!\!\!
              \big[ \lambda_i^\downarrow(\tilde \rho_A)
                - \lambda_0^\downarrow(\rho)\big].
          \end{align*}
          Due to the fact that any $\lambda_i^\downarrow(\tilde
          \rho_A)$ is lesser or equal than
          $\lambda_0^\downarrow(\tilde \rho_A)$, the expression
          above is necessarily smaller than the first sum written
          in \Eq{corr:Bhatia}, and therefore
          \begin{align*}
            \norm{\rho^{(k+1)}-\tilde{\rho}_A}_1 
              \leq  \norm{\Eig^\downarrow(\tilde{\rho}_A) 
               -\Eig^\downarrow(\rho)}_1
                &\le 10\sqrt\delta .
          \end{align*}
        \end{enumerate}
        Combining,
        \begin{align*}
            \norm{\rho^{(k)}-\rho^{(k+1)}}_1 \le 
            \norm{\rho^{(k)} - \tilde{\rho}_A}_1 +
            \norm{\tilde{\rho}_A- \rho^{(k+1)}}_1
            \le 20\sqrt{\delta}.
        \end{align*}
        
      \item Consider,
        \begin{align*}
            \tr(K_{AB}\sigma_{AB}K_{AB}^\dagger)
            &= \tr(K_{AB}^\dagger K_{AB}\sigma_{AB})\\
            &= \Tr{K_{AB}^\dagger \Pi_{gs}' K_{AB} \sigma_{AB} }
            + \Tr{K_{AB}^\dagger(\Id-\Pi_{gs}') K_{AB} \sigma_{AB} }\\
            &= \tr(\Pi_{gs}'\sigma_{AB})
             +\Tr{K_{AB}^\dagger(\Id-\Pi_{gs}')K_{AB}\sigma_{AB}}\\
            &\le \tr(\Pi_{gs}'\sigma_{AB}) + \Delta,
        \end{align*}
        where in the third equality we used 
        $K_{AB}^\dagger \Pi_{gs}' K_{AB}=\Pi_{gs}'$
        and in the last inequality we used the fact that $K_{AB}$
        is a $(D,\Delta)$-AGSP and so 
        $K_{AB}(\Id-\Pi_{gs}) K_{AB}^\dagger 
        \preceq \Delta(\Id -\Pi_{gs})$. To upperbound
        $\tr(\Pi_{gs}'\sigma_{AB})$, we use the fact that
        $\sigma_{AB}$ is a flat state and therefore
        \begin{align*}
          \Tr{\Pi_{gs}'\cdot \sigma_{AB}} & =
          \frac{1}{d_{\sigma}}\Tr{\Pi_{gs}'\cdot \Pi_{\sigma}} \\ &
          \le \frac{1}{d_{\sigma}} \Tr{\Pi_{gs}'} \\ & =
          \frac{d_B}{d_{\sigma}}\\ &=
          2^{-\dmax{\rho'_{AB}}{\sigma_{AB}}}. &
          \mbox{(\Eq{eq:final-Dmax-brothers})}
        \end{align*}
        Finally, monotonicity of $\mathrm{D}_{\max}$ under partial
        trace gives
        \begin{align*}
          \frac{1}{2^{\dmax{\rho'_{AB}}{\sigma_{AB}}}} 
            &\le \frac{1}{2^{\dmax{\rho'_A}{\sigma_A}}} \\
            &\le \frac{1}{\delta^2 \cdot 
              2^{\maxinf{L}{R}_{\rho^{\prime}_A}}}.
        \end{align*}

        The last inequality follows from the following arguments.
        From \Lem{lem:flat} part 1 we get that 
        $\sigma_{AB}$ and hence $\sigma_A$ is separable with 
        $\SR{L}{R}_{\sigma_A} \le 1/\delta$. This allows us to invoke
        \Lem{lem:SR-Dmax} to show that 
        \begin{align*}
         \rho'_A \preceq 2^{\dmax{\rho'_{A}}{\sigma_{A}}} 
           \sigma_A \preceq \frac{2^{\dmax{\rho'_{A}}
             {\sigma_{A}}}}{\delta^2}\theta,
        \end{align*}
        for some product state $\theta=\theta_L\otimes\theta_R$.
        Therefore, $2^{\maxinf{L}{R}_{\rho'_A}}\leq
        \frac{2^{\dmax{\rho'_{A}}{\sigma_{A}}}}{\delta^2}$, which
        proves the inequality.  This completes the proof.
    \end{enumerate}
\end{proof}

\vspace{2mm}

\subsection{Proof of Corollary~\ref{corol:mutual-info}
(bootstrapping for the mutual information)} 
\label{sec:regular-AL-proof}

A bound on $\mutinf{L}{L^c}$ can be derived from a bound on
$\maxinfeps{L}{L^c}$ as follows.  We use \Thm{thm:bootstrapping}
with $\eps = (\log d_L)^{-1}$, which is possible under our
assumption that we have an AGSP with $D^2\cdot \Delta \le c_0(\log
d_L)^{-16}$. Let $\rho_\eps\in B_\eps(\rho)$ be the (sub-)state that
minimizes $\maxinfeps{L}{L^c}_\gs$, i.e. $\rho_\eps \preceq t
\sigma_L\otimes \sigma_{L^c}$ where $t\EqDef
2^{\maxinfeps{L}{L^c}_\gs}$. Define the normalized state
$\hat\rho_\epsilon\EqDef \frac{\rho_\eps}{\Tr{\rho_\eps}}$ such that
$\hat \rho_\eps \preceq \frac{t}{\Tr{\rho_\eps}}\sigma_L\otimes
\sigma_{L^c}$. Notice that $\hat \rho_\eps$ is now in
$B_{2\eps}(\Omega)$, which can be shown using triangle inequality.
Note that $\Tr{\rho_\eps}\geq \Tr{\gs}-\norm{\gs-\rho_\eps}_1\geq
1-\eps$, which is in-fact larger than $1/2$ for $|L|> 1$, thus $\hat
\rho_\eps\preceq 2t \sigma_L\otimes \sigma_{L^c}$ and
$\maxinf{L}{L^c}_{\hat\rho_\eps}\leq 1+\maxinf{L}{L^c}_{\rho_\eps}$.
Using the second inequality of Fact~\ref{clm:D_inequality},
\begin{align*}
    \mutinf{L}{L^c}_{\hat \rho_\eps} \le \maxinf{L}{L^c}_{\hat
\rho_\eps} \le \maxinfeps{L}{L^c}_\gs+1 .
\end{align*}
We use the continuity of
mutual information (Fact~\ref{fact:cont-of-D}) to claim that
$|\mutinf{L}{L^c}_\gs - \mutinf{L}{L^c}_{\hat\rho_\eps}|\le 3 \cdot
\eps \cdot \log d_L + 3$, which implies
\begin{align*}
  \mutinf{L}{L^c}_\gs \le \mutinf{L}{L^c} _{\hat\rho_\eps} 
    + 3 \eps \cdot  \log d_L + 3 
  \le \maxinfeps{L}{L^c}_{\gs}  + 3\eps\cdot\log d_L + 4.
\end{align*}
Using \autoref{thm:bootstrapping}, the upper bound becomes 
\begin{align*}
  \mutinf{L}{L^c}_\gs \le 2\log D + 12\log(\log d_L/\eps) 
    + 3\eps\log d_L + O(1) .
\end{align*}
Recalling that $\eps=(\log d_L)^{-1}$, we get $\mutinf{L}{L^c}_\gs
\le 2\log D + 24\log\log d_L+ O(1)$.

\subsection{Proof of Corollary~\ref{corol:1D-AL} --- Area law for 
the maximally-mixed ground-state in 1D}
\label{sec:1D-AL-proof}

Let $\eps>0$, and consider a bi-partition $L\cup L^c$ of the line.  
Using Fact~\ref{fact:1D-AGSP}, we consider an AGSP $K(\ell,s)$ for
this bi-partition and use $\ell=s^2$. Then
\begin{align*}
  \Delta &= e^{-\bOmega{\gamma^{1/2}s^{3/2}}}, &  
  D &= e^{\bigO{s\log (sd)}},
\end{align*}
and so
\begin{align} \label{eq:almost-good-AGSP}
  2\log D + \log\Delta = \bigO{s\log (sd)} 
    - \bOmega{\gamma^{1/2}s^{3/2}} .
\end{align}
To impose the condition $D^2\cdot \Delta \le 
c_0\left(\frac{\eps}{\log d_L}\right)^8 =
c_0\left(\frac{\eps}{|L|\log d}\right)^8$ we need the
RHS of \eqref{eq:almost-good-AGSP} to be at most
$\log(c_0)-8\log\Big(\frac{|L|\log(d)}{\eps}\Big)$. 
For this to hold, it suffices to impose the following two conditions:
\begin{align*}
    \gamma^{1/2}s^{3/2} = \bigO{s\log(sd)}
\end{align*}
and
\begin{align*}
    \gamma^{1/2}s^{3/2} = \bigO{\log\Big(\frac{|L|\log
d}{\eps}\Big)}.
\end{align*}
The first condition is satisfied by
choosing $s=\bigO{\log^2(d/\gamma)/\gamma}$ (see \cRef{ref:Arad2013-1DAL}),
while the second condition is achieved by choosing
$s=\bigO{\frac{\log^{2/3}(|L|\log(d)/\eps)}{\gamma^{1/3}}}$. 
We can therefore satisfy the bootstrapping condition of
\Thm{thm:bootstrapping} by setting $s$ to the larger of
the two values.
If the second choice exceeds the first, that is, when
\begin{align} \label{eq:s-condition}
    \frac{\log^2(d/\gamma)}{\gamma} = \bigO{\frac{\log^{2/3}(|L|\log(d)/\eps)}{\gamma^{1/3}}}
    \tab & \Leftrightarrow &
    \frac{\log^3(d/\gamma)}{\gamma} = \bigO{\log (|L|\log (d)/\eps)} ,
\end{align}
which is expected in gapped Hamiltonians ($\gamma=\bigO 1$) with a constant qudit dimension ($d=\bigO 1$), 
$\log (D)$ becomes
\begin{align*}
    \log (D) & = \bigO{s\log(sd)} \\
    & =\bigO{\frac{\log^{2/3}(|L|\log(d)/\eps)}{\gamma^{1/3}} \cdot \log\left( 
    \frac{d}{\gamma^{1/3}}\cdot\log^{2/3}(|L|\log(d)/\eps)\right)} \\
    & =  \bigO{\frac{\log (d/\gamma)}{\gamma^{1/3}}\cdot\log^{2/3}(|L|\log (d)/\eps)} +
    \frac{1}{\gamma^{1/3}}\bigtO{\log^{2/3}(|L|\log(d)/\eps)},
\end{align*}
where in the last move we rewrote the logarithm of the product as a sum of logarithms and used 
$\log (d/\gamma^{1/3}) \le \log(d/\gamma)$.
We note that RHS of \Eq{eq:s-condition} implies $\log(d/\gamma)= \bigO{\gamma^{1/3}\log^{1/3}(|L|\log(d)/\eps)}$,
which clarifies the resulting expression for $\log (D)$:
\begin{align*}
    \log(D) & = \bigO{\log (|L|\log (d)/\eps)} +
    \frac{1}{\gamma^{1/3}}\bigtO{\log^{2/3}(|L|\log (d)/\eps)}
    =
    \bigO{\gamma^{-1/3}\log (|L|\log (d)/\eps)} .
\end{align*}

By \Thm{thm:bootstrapping}, the $\eps$-smoothed max-mutual
information in the maximally mixed ground state is bounded by
\begin{align*}
  \maxinfeps{L}{L^c}_{\gs} \le 2\log D + 12\log (|L|\log(d)/\eps) + O(1) 
   = \bigO{\gamma^{-1/3}\log (|L|\log (d)/\eps)}.
\end{align*}

Using the same argument, we choose
$\eps=(|L|\log (d))^{-1}$ so that
$D^2\cdot \Delta\le c_0\cdot (|L|\cdot\log(d))^{-16}$ and $\log
D=\bigO{\gamma^{-1/3}\log(|L|\log(d))}$, and by Corollary~\ref{corol:mutual-info},
\begin{align*}
  \mutinf{L}{L^c}_{\gs} = \bigO{\gamma^{-1/3}\cdot\log(|L|\log (d))} .
\end{align*}

Now we consider the case where the first choice for $s$ dominates, namely, 
\begin{align} \label{eq:s-condition-2}
    \frac{\log^2(d/\gamma)}{\gamma} = \bOmega{\frac{\log^{2/3}(|L|\log(d)/\eps)}{\gamma^{1/3}}}
    \tab & \Leftrightarrow &
    \frac{\log^3(d/\gamma)}{\gamma} =  \bOmega{\log (|L|\log (d)/\eps)}.
\end{align}
Here we get
\begin{align*}
    \log (D) & = \bigO{s\log(sd)} 
    = \bigO{\frac{ \log^3(d/\gamma)}{\gamma}},
\end{align*}
and similarly
\begin{align*}
    \maxinfeps{L}{L^c}_{\gs} & \le 2\log D + 12\log (|L|\log(d)/\eps) + O(1) 
   = \bigO{\frac{ \log^3(d/\gamma)}{\gamma}},\\
     \mutinf{L}{L^c}_{\gs} & = \bigO{\frac{ \log^3(d/\gamma)}{\gamma}} .
\end{align*}

\subsection{Proof of Corollary~\ref{corol:2D-AL} --- Area law for 
the maximally mixed ground state in 2D}
\label{sec:2D-AL-proof}

We let $\eps>0$ and consider a vertical bi-partition of the lattice
$L\cup L^c$ such that $\partial L$ is a vertical line.  Using
Fact~\ref{fact:2D-AGSP}, we consider an AGSP $K$ with respect to
this bi-partition such that $D^2\cdot \Delta\leq 1/2$ and $\log(D) =
{|\partial L|}^{1+\bigO{\log^{-1/5}|\partial L|}}$. Let $\ell$ be an
integer such that $2^{-\ell} =
\bTheta{\left(\eps/|L|\right)^8}$, i.e. $\ell = \bTheta{\log(|L|/\eps)}$, and
define a new AGSP by $K_\ell\EqDef K^\ell$ with corresponding
parameters $(D_\ell,\Delta_\ell)$.

We claim that 
\begin{align}
    D_\ell \leq D^\ell \;, &  &\Delta_\ell&\leq \Delta ^\ell.
\end{align}
The first property follows from sub-multiplicativity of the Schmidt
rank, e.g. for $K=\sum_{i=1}^D A_i\otimes B_i$, then $K^\ell =
\sum_{i_1=1}\dots \sum_{i_\ell=1}(A_{i_1}\dots A_{i_\ell})\otimes
(B_{i_1}\dots B_{i_\ell})$, so as \Def{def:op-SR} implies,
$\SR{L}{R}_{K^\ell}\leq D^\ell$. The second property is easily can
be seen by $K^\ell (\Id-\Pi_{gs})(K^\ell)^\dagger \leq \Delta\;
K^{\ell-1}(\Id-\Pi_{gs})(K^{\ell-1})^\dagger \leq \Delta^2 \dots
\leq \Delta^\ell(\Id-\Pi_{gs})$.  Therefore, by our choice of
$\ell$, we find that $D_\ell^2\cdot\Delta_\ell\leq (\Delta\cdot
D)^\ell\leq c_0 (\epsilon/|L|)^8$.

By \Thm{thm:bootstrapping}, the $\eps$-smoothed max-mutual
information in the maximally mixed ground state is bounded by
\begin{align*}
  \maxinfeps{L}{L^c}_{\gs} 
    &\le 2\log D_\ell + 12\log (|L|/\eps) + \bigO{1} \\
    &\le 2\ell \log D + 12\log (|L|/\eps) + \bigO{1} \\
    &= O\big(\log(|L|/\eps) \cdot\log D\big) .
\end{align*} 
 
Following the same argument as
in~Corollary~\ref{corol:1D-AL}, we choose $\eps=1/|L|$ so that
$D^2\cdot \Delta\le c_0\cdot |L|^{-16}$, by
Corollary~\ref{corol:mutual-info} and $\log(D) = {|\partial
L|}^{1+\bigO{\log^{-1/5}|\partial L|}}$. We get
\begin{align}
\label{eq:with-logL}
  \mutinf{L}{L^c}_{\gs} 
    = \bigO{ {|\partial L|}^{1 +
      \bigO{\log^{-1/5}|\partial L|}}\cdot \log|L|} .
\end{align}

For a square lattice where $\log(|L|)\leq 2 \log|\partial L|$, we
get that $\log |L|\leq {|\partial L|}^{\log^{-1/5}|\partial L|}$and
therefore we can absorb the $\log|L|$ factor in
\Eq{eq:with-logL} into $\bigO{ {|\partial L|}^{1 +
\bigO{\log^{-1/5}|\partial L|}}}$ and get
\begin{align*}
  \maxinfeps{L}{L^c}_{\gs} & 
   = \bigO{\log(1/\eps) \cdot {|\partial L|}^{1 + 
     \bigO{\log^{-1/5}|\partial L|}}} ,\\
   \mutinf{L}{L^c}_{\gs} 
     &= \bigO{ {|\partial L|}^{1 
       + \bigO{\log^{-1/5}|\partial L|}}} .
\end{align*}

\section{Low Schmidt rank and tensor network
approximations}
\label{Sec:approx}

This section is divided into two parts.
In the first part, \Sec{sec:lowSR}, we prove \Thm{thm:LowSR},
demonstrating a purification for the maximally mixed ground state with low Schmidt-rank approximations.
In the second part, \Sec{sec:TN}, we show how in one dimensional systems, \Thm{thm:LowSR} can be used to derive a tensor network approximation for the purification.
This, in turn, yields a similar structure for the maximally mixed ground state after tracing out the ancillary system.

\subsection{Proof of \Thm{thm:LowSR} --- Low Schmidt-rank approximation} \label{sec:lowSR}

The idea in the proof is to apply the AGSP
to the product state that saturates the area-law bound derived in
\Eq{eq:max-AL} ($\rho_\eps \preceq t \sigma_L\otimes \sigma_R$ where as before $R=L^c$),
which brings it closer to the ground state $\gs$. 
The low Schmidt rank of the resulting state will follow from choosing a well suited AGSP.
The analysis here is similar to \Thm{thm:bootstrapping}, and involves
the competition between the increasing Schmidt rank and the rate of convergence to the ground state.
Technically, we perform steps that are similar to the ones taken in the proof of \Thm{thm:bootstrapping},
demonstrating how the decrease in norm of the resulting state $K(\sigma_L\otimes \sigma_R)K^\dagger$
 overtakes the maximum information (the pre-factor $t$). 
Therefore, to relate the norm (which now involves GS overlap) and $t$, it is beneficial to work
in the extended space that involves the system + brothers, as done in the proof of
\Thm{thm:bootstrapping}. 
This achieves a low Schmidt-rank state which is close to the maximally mixed ground state.
The key difference from the proof of \Thm{thm:bootstrapping} lies in using a multiplicative symmetrization of the AGSP,
an operator that still satisfies the properties of an AGSP.
Doing this enables us to bound the Schmidt rank of the
\textbf{square root} rather than the state itself.
Finally, we use the fact that vectorizing the square root of 
a density operator yields a purification (see \Sec{sec:Vec}).

The following lemma contains the main technical steps
of the proof of \Thm{thm:LowSR} and,
in particular, establishes the key argument of the theorem
on the square root of the maximally mixed ground state. 

\begin{lemma}[Low Schmidt-rank approximation for the square root]
\label{lem:LowSR}  
  Let $\eps>0$, and let $H=\sum_i h_i$ be a local Hamiltonian on
  some lattice of qudits with a maximally-mixed ground state $\gs$.
   Under the same conditions in \Thm{thm:LowSR}, then there exists a Hilbert space
  $\mcH_B$ and an extension $\gs_{A}\mapsto \gs_{AB}$ such that
  $\gs_A=\Ptr{\gs_{AB}}{B}$, and for any bi-partition of the lattice
  $A=L\cup L^c$, there is a state $\gs_\eps\in \mcD(AB)$ for which:
  1.~$\norm{\gs_{AB}-\gs_\eps}_1\le\eps$. 2. The Schmidt rank of
  $\sqrt{\gs_\eps}$ with respect to the $L:
  L^cB$  bi-partition satisfies 
  \begin{align*} \label{eq:bound-Omega_eps}
      \mySR (\sqrt{\gs_\eps})\le 
    	  49 D^2 \cdot \Big(\frac{\log d_L}{\eps}\Big)^2.
  \end{align*}
\end{lemma}

Our motivation for considering the square root arises from several
key reasons. First, it provides a stronger condition than having low
Schmidt rank for the state itself, which follows from the bound
$\mySR(O)\le\mySR(\sqrt{O})^2$ (see \Def{def:op-SR}). Additionally,
having $\gs$ and $\gs_\eps$ close in $L_1$ norm also implies that
their square roots are close in $L_2$ norm. As the square root of a
state is closely related to its purification (see
Fact~\ref{fact:vec}), \Lem{lem:LowSR} implies the results of \ref{thm:LowSR}, namely, there exists a
purification of the maximally-mixed ground state that can be
approximated by a pure state of low Schmidt rank.  In \Sec{sec:TN}, we will combine this result with the
Young-Eckart theorem, enabling us to truncate the Schmidt rank
with respect to a given cut in the lattice while maintaining
controlled proximity.

\begin{proof}[ of \Thm{thm:LowSR} using \Lem{lem:LowSR}] 
    We apply \Lem{lem:LowSR} with parameter
    $\eps$, to get an extending state $\gs_{AB}$
    such that for any bi-partition $A=L:L^c$, there is a
    state $\gs_{\eps}$ on $AB$ where $\norm{\gs_{\eps}-\gs_{AB}}_1\le \eps$ and whose Schmidt rank satisfies \Ineq{eq:bound-Omega_eps}.
    Recall that for a density matrix $\rho_{AB}$, the vectorized square root 
    $\dket{\sqrt \rho }_{A\tilde A B \tilde B}$ is a purification (Fact~\ref{fact:vec}).
    Moreover, the purification $\dket{\sqrt{\gs_\eps}}_{A\tilde A B \tilde B}$ has bounded Schmidt rank:
    \begin{align*}
        \SR{L\tilde L}{R\tilde R B \tilde B}_{\dket{\sqrt{\gs_\eps}}}
        =\SR{L}{RB}_{\sqrt{\gs_\eps}} .
    \end{align*}
    Now we use Facts~\ref{fact:vec} (bullet 2) and \ref{fact:sqrt} and to claim that
    \begin{align} 
        \norm{\ket{\gs}-\dket{\sqrt{\gs_\eps}}}^2
        = \norm{\sqrt{\gs_{AB}}-\sqrt{\gs_\eps}}_2^2
        \le \eps .
        \end{align}
        Choosing $E= B \tilde B$, 
        $\ket \gs_{A\tilde AE} \EqDef\dket{\sqrt \rho }_{A\tilde A B \tilde B}$, and
        $\ket {\psi^{(L)}}_{A\tilde A E} \EqDef\dket{\sqrt \rho }_{A\tilde A B \tilde B}$ concludes the proof.
\end{proof}

\begin{proof}[ of \Lem{lem:LowSR}]
Let $A=L\cup R$ be a bi-partition of the lattice. Let $\eps>0$ and set $\delta=\eps/44$.
Apply \Thm{thm:bootstrapping} with parameter $\delta$ and the bi-partition $L:R$.
Let $\rho$ and $\sigma=\sigma_L\otimes \sigma_R$ denote
the sub-state and product state, that achieves the smooth max information, respectively, as provided in the theorem, i.e.
\begin{align*}
    \rho  \preceq t\sigma, \tab
     \norm{\rho-\gs}_1  \le \delta \tab
    \log(t)  = \maxinfdel{L}{R}_\gs,
\end{align*}
where it is guaranteed by \Thm{thm:bootstrapping}
that $t=2^{c_1} D^2 \cdot \Big(\frac{\log d_L}{\delta}\big)^{12}$.
Note that here $\gs$ refers to the original maximally-mixed ground state and not the extension of it.
We now perform similar steps as in the proof of \Thm{thm:bootstrapping}.
First, we apply \Lem{lem:discretiztion} on $\rho \preceq t\sigma$ with
parameter $\eps$ to achieve $\hat \rho \preceq 2t\tilde
\sigma_L\otimes \sigma_R$, where
$|\Spec(\tilde\sigma_L)|\le 7 \log(d_L/\eps)$ and
$\norm{\hat \rho-\rho}_1\le 2(\eps/d_L)^2$. Now, we extend the
resulting states to states on a larger Hilbert space using
\Lem{lem:flat} with parameter $\delta$ to achieve 
\begin{align*}
    \rho'_{AB}  \preceq t'\sigma'_{AB},\tab
     \norm{\rho'_{A}-\hat \rho}_1 \le \delta ,\tab
     \SR{L}{RB}_{\sigma'}  \le 7\log(d_L/\eps),
\end{align*}
where $\sigma'_{AB}=\frac{\Pi_{\sigma'}}{d_\sigma}$ is a flat state, and 
$t'=2^{\dmax{\rho'_{AB}}{\sigma'_{AB}}} \le t\cdot \frac{64}{\delta^2}$.

Let 
\begin{align} \label{eq:extended_GS}
    \gs_{AB}\EqDef \frac{1}{d_B}\Pi_{gs}\otimes\Pi_r
\end{align} 
be the extension of the ground state to $AB$ as defined in
\Eq{eq:extended-AGSP}.
Let $K$ be the $(D, \Delta)$-AGSP which was assumed \textit{a priori} in the theorem statement to satisfy the condition
\begin{align*}
    D^2\Delta \leq c_0 \cdot\Big(\frac{\delta}{\log d_L}\Big)^8,
\end{align*}
and consider the extended AGSP $K_{AB}=K_A\otimes \Pi_r$
as defined in \Eq{eq:extended-AGSP}, serving as  an AGSP on the image of $\gs_{AB}$. 
Now we define the following symmetrized version of it
\begin{align*}
    \tilde K _{AB} \EqDef \Pi_{\sigma'} K_{AB}^\dagger K_{AB}
\end{align*}
and apply to both sides of
$\rho'_{AB} \preceq t'\sigma'_{AB}$ to get
\begin{align} \label{Ineq:almost-final-ineq}
    \tilde K_{AB}\rho'_{AB} \tilde K^\dagger_{AB}  \preceq t' \tilde K_{AB} \sigma'_{AB} \tilde K_{AB}^\dagger =
    t' \Tr{\tilde K_{AB} \sigma'_{AB} \tilde K_{AB}^\dagger} \gs_\eps,
\end{align}
where we set $\gs_\eps\EqDef \frac{\tilde K_{AB}
\sigma'_{AB} \tilde K_{AB}^\dagger}{\Tr{\tilde K_{AB} \sigma'_{AB}
\tilde K_{AB}^\dagger} }$. 
We analyze the trace similarly to Bullet~\ref{clm:quantities-3} of Claim~\ref{clm:quantities}:
\begin{align*}
    \Tr{\tilde K_{AB} \sigma'_{AB} \tilde K_{AB}^\dagger}
    \le \Tr{ (K_{AB}^\dagger K_{AB}) \sigma'_{AB} (K_{AB}^\dagger K_{AB} )}
    \le \Tr{\Pi'_{gs} \sigma'_{AB}} + \Delta^2.
\end{align*}
where in the first step we got rid of $\Pi_\sigma$ using the fact
that $\Tr{\Pi \rho}\le \Tr{\rho}$ for any PSD operator $\rho$ and projector $\Pi$, and in the
second we separated the trace to the extended ground state part and the complement as done in Claim±\ref{clm:quantities}.
We adopt the fact that $\sigma_{AB}'$ is flat
 to write
\begin{align*}
    \Tr{\Pi'_{gs} \sigma'_{AB}}
    =\frac{1}{d_\sigma}\Tr{\Pi'_{gs} \Pi_{\sigma'}}
    \le \frac{1}{d_\sigma}\Tr{\Pi'_{gs} }
    = \frac{d_B}{d_\sigma} 
    = \frac{1}{d_\sigma \lambda_{\max}(\rho'_{AB})}
    = 2^{-\dmax{\rho'_{AB}}{\sigma'_{AB}}},
\end{align*}
where the second last move is due to $\lambda_{\max}(\rho'_{AB})=1/d_B$ following \Lem{lem:flat}, 
and the last move is due to \Lem{lem:Dmax_flat_2}. 
Note that the last term is just $1/t'$, so that
\Ineq{Ineq:almost-final-ineq} becomes
\begin{align} \label{def:eta}
   \eta_{AB}\EqDef \tilde K_{AB}\rho'_{AB} \tilde K^\dagger_{AB}  \preceq 
    (1+t' \Delta^2) \gs_\eps
    = (1+\tilde\delta) \gs_\eps
\end{align}
where we defined $\tilde \delta \EqDef t'\Delta^2$.
Combined with \Lem{lem:short-dist}, we get that
\begin{align} \label{eq:for-the-trace}
    \norm{\eta-\gs_\eps}_1 \le 2\tilde\delta+ (1-\Tr{\eta}).
\end{align}
Later, we will verify that $\tilde \delta $ is sufficiently small,
ensuring that \Eq{def:eta} implies closeness of $\gs_\eps$ and $\eta$.

\par To finish the proof, it remains to show two statements: \newline
\indent 1. Show that indeed $\norm{\gs_{AB}-\gs_\eps}_1\le\eps$. \newline
\indent 2. Show that $\sqrt{\gs_\eps}$ has low Schmidt rank. \\

We begin with the first statement; we do this by first showing that $\eta$ is close to $\gs_{AB}$,
and then, together with \eqref{eq:for-the-trace}, use triangle inequality to conclude that $\gs_\eps$ is close to $\gs_{AB}$.
First, we use triangle inequality with $\Pi_{\sigma'}\gs_{AB}\Pi_{\sigma'}$: 
\begin{align*}
    \norm{\eta-\gs_{AB}}_1 & \le
    \norm{\eta - \Pi_{\sigma'}\gs_{AB}\Pi_{\sigma'}}_1 
     + \norm{\gs_{AB}-\Pi_{\sigma'}\gs_{AB}\Pi_{\sigma'}}_1 .
    %
\end{align*}
To handle the first term in the RHS, we insert the definition of $\eta$ from \eqref{def:eta},
and use the fact that $K$ and $ K^\dagger$ fix the ground state $\gs_{AB}$, and
$\norm{K},\norm{\Pi_{\sigma'}}\le 1$ to achieve
\begin{align*}
    \norm{\eta - \Pi_{\sigma'}\gs_{AB}\Pi_{\sigma'}}_1 
    = \norm{\Pi_{\sigma'} K_{AB}^\dagger K_{AB}(\rho'_{AB}-\gs_{AB})K_{AB}^\dagger K_{AB}\Pi_{\sigma'}}_1 
    \le  \norm{\rho'_{AB}-\gs_{AB}}_1 .
\end{align*}
For the second term, we use triangle inequality with $\rho'$ and the fact that $\im(\rho'_{AB})\subseteq \im(\sigma_{AB})$, i.e. $\rho'_{AB} = \Pi_{\sigma'}\rho'_{AB}\Pi_{\sigma'}$, to conclude 
\begin{align*}
    \norm{\gs_{AB}-\Pi_{\sigma'}\gs_{AB}\Pi_{\sigma'}}_1 
    & = \norm{\gs_{AB}-\rho'_{AB}}_1 
    +\norm{\Pi_{\sigma'}\gs_{AB}\Pi_{\sigma'}-\rho'_{AB}}_1 \\
    & =\norm{\gs_{AB}-\rho'_{AB}}_1 
    +\norm{\Pi_{\sigma'}(\gs_{AB}-\rho'_{AB})\Pi_{\sigma'}}_1 \\
    & \le 2\norm{\gs_{AB}-\rho'_{AB}}_1 .
\end{align*}
So we got that 
\begin{align*}
    \norm{\eta-\gs_{AB}}_1 & \le
    3 \norm{\gs_{AB}-\rho'_{AB}}_1.
\end{align*}
Further calculations, that will be presented below, produce the following:
\begin{claim} \label{clm:gs-rho}
    $\norm{\gs_{AB}-\rho'_{AB}}_1 \le 7\delta $.
\end{claim}
Using this claim, we get
$\norm{\eta-\gs_{AB}}_1  \le 3 \cdot 7 \delta=21\delta$,
and thus, using \Ineq{eq:for-the-trace}:
\begin{align*}
    \norm{\gs_\eps-\gs_{AB}}_1
    & \le \norm{\gs_\eps-\eta}_1
    + \norm{\eta- \gs_{AB}}_1 
    \\
    & \le 2\tilde\delta +(1-\Tr{\eta})
    +\norm{\eta - \gs_{AB}}_1 
    \\
    & \le 2\tilde\delta + 2\norm{\eta - \gs_{AB}}_1 \\
    & \le 2\tilde \delta+2\cdot 21\delta ,
\end{align*}
where the in first inequality we used triangle inequality,
in the second we used \eqref{eq:for-the-trace},
and in the third we used inverse triangle inequality $\Tr{\eta}\ge \Tr{\gs_{AB}}-\norm{\eta-\gs_{AB}}_1$.

We conclude by showing that $\tilde  \delta\le\delta$,
resulting in $\norm{\gs_\eps-\gs_{AB}}_1\le 44\delta=\eps$.
This follows from the specific choice of AGSP in
the theorem, for which $D^2\cdot\Delta\le c_0 \Big(\frac{\delta}{\log d_L}\Big)^8$ with $c_0=10^{-16}$,
and from the parameters choice in the proof, $t'\le t\frac{64}{\delta^2}$
and $t= D^2\cdot \Big(\frac{\log d_L}{\delta}\Big)^{12} \cdot 2^{c_1}$ for $c_1\approx 76$.
\begin{align*}
    \tilde \delta & = \Delta^2 t' \le \Delta^2 t\frac{64}{\delta^2}\\
    & = \Delta^2\cdot D^2\cdot \Big(\frac{\log d_L}{\delta} \Big)^{12}2^{c_1}\cdot\frac{64}{\delta^2}  \\ 
    & \le 2^{c_1+6}(\Delta\cdot D^2)^2\cdot \Big(\frac{\log d_L}{\delta} \Big)^{12}\cdot\frac{1}{\delta^2} \\
    & \le 2^{c_1+6}(c_0)^2 \Big(\frac{\delta}{\log d_L} \Big)^{16}\cdot
    \Big(\frac{\log d_L}{\delta} \Big)^{12}\cdot\frac{1}{\delta^2} \\
    & \le 2^{c_1+6}\cdot(c_0)^2 \Big(\frac{\delta}{\log d_L} \Big)^{2}
    \le \delta ,
\end{align*}
where in the first inequality we used $D\ge 1$, then we used the condition on the AGSP,
and finally, $\log d_L\ge 1$ and the fact that $2^{c_1+6}(c_0)^2\ll 1$ and $\delta<1$.

After showing that $\gs_\eps$ is $\eps$-close to $\gs_{AB}$ in trace norm,
we are left to address the Schmidt rank of $\sqrt{\gs_\eps}$.
Recall that 
\begin{align*}
    \Omega_\eps \propto 
    (\Pi_{\sigma'} K_{AB}^\dagger K_{AB})\sigma'( K_{AB}^\dagger K_{AB}\Pi_{\sigma'}) .
\end{align*} 
Considering $\sigma'$ being flat (due to \Lem{lem:flat}), i.e. 
$\sigma'=\Pi_{\sigma'}/d_\sigma$ where $\Pi_{\sigma'}$ is a projector, we get
\begin{align*}
    \Omega_\eps \propto 
    (\Pi_{\sigma'} K_{AB}^\dagger K_{AB}\Pi_{\sigma'})
    (\Pi_{\sigma'} K_{AB}^\dagger K_{AB}\Pi_{\sigma'}) ,
\end{align*}
That is, $\sqrt{\gs_\eps}\propto \Pi_{\sigma'} K_{AB}^\dagger K_{AB}\Pi_{\sigma'}$.
This expression allows us to upper-bound the Schmidt rank with respect to the bi-partition $L:RB$ in the following manner
\begin{align} \label{eq:almost-SR-bound}
    \mySR(\sqrt{\gs_\eps})\le
    \mySR(\Pi_{\sigma'})^2\cdot \mySR(K_{AB})
    \mySR(K_{AB}^\dagger) .
\end{align}
which is given due to the sub-multiplicativity of the operator Schmidt rank.
Using $K_{AB}=K_A \otimes\Pi_r$, we get
\begin{align*}
    \SR{L}{R B}_{K_{AB}}= 
    \SR{L}{R}_{K_{A}}
    = D.
\end{align*}
Recalling that $\sigma'\propto \Pi_{\sigma'}$, $\mySR(\Pi_{\sigma'})=\mySR(\sigma')
\le 7\log( d_L/\eps)$, and the desired upper-bound on the Schmidt rank is obtained from \Eq{eq:almost-SR-bound}.

To complete the proof of \Thm{lem:LowSR}, it remains to show that
the extension is independent of the choice of bi-partition $L\cup R$.
In the proof, we fixed a bi-partition and then applied \Lem{lem:flat} tailored 
specifically to it. As a result, the dimension of
$\mcH_B$, and correspondingly the 
extension of the ground state (given in~\Eq{eq:extended_GS}) may vary for different bi-partitions. 
We overcome this problem by referring to Bullet 3 of~\Lem{lem:flat}, which tells us
that given an extension with $d_B=\dim(\mcH_B)$, 
one can also consider an extension with $\tilde d_B$ which is a multiple of $d_B$.
Thus, we unify all extensions by replacing each $d_B=\dim(\mcH_B)$ 
associated with a given bi-partition to the least common multiple of all $\{d_B\}_{A=L\cup R}$,
i.e., the smallest common multiple of all $d_B$ arising from different bi-partitions.
Doing so will not change the proof, as guaranteed by \Lem{lem:flat}, nor the results, that are independent of $d_B$.
Moreover, one can see that the extension in~\Eq{eq:extended_GS} depends solely on the dimension of $\mcH_B$.
Thus, the extension is independent of the chosen bi-partition.

\begin{proof}[ of Claim~\ref{clm:gs-rho}]
To show that indeed $\rho'_{AB}$ is close to $\gs_{AB}$, we need to
consider an intermediate state. Recall the state $\hat\rho$ obtained
from \Lem{lem:discretiztion}. Let $\hat \rho=\sum_i a_i
\ketbra{a_i}{a_i}$ be a spectral decomposition, where $a_i$ are
decreasingly ordered.  The intermediate state is defined by it's
flat extension to the brothers space (similarly as in the beginning
of the proof of \Lem{lem:flat}): 
\begin{align*}
    \hat \rho _{AB} = \frac{1}{d_B}\sum_i \ketbra{a_i}{a_i} \otimes\Pi^B_{d_B\cdot a_i}
\end{align*}
where $\Pi^ B_{d_B\cdot a_i}=\sum\limits_{m=1}^{d_B\cdot a_i} \ketbra m m _B$.
Triangle inequality gives
\begin{align} \label{eq:what-we-need}
     \norm{\gs_{AB}-\rho'_{AB}}_1
    & \le \norm{\gs_{AB}-\hat\rho_{AB}}_1 
     +  \norm{\hat \rho_{AB}-\rho'_{AB}}_1 .
     \end{align}
The second term is evident from \Lem{lem:flat}, which tells that not
only $\norm{\hat\rho_A-\rho'_A}_1\le \delta$,
but also $\norm{\hat \rho _{AB} - \rho'_{AB}}_1\le \delta$. 
Now we handle the first term $\norm{\gs_{AB}-\hat \rho_{AB}}_1$. To show this, we define an additional intermediate state
\begin{align*}
    \rho_{\mathrm{int}} \EqDef \frac{1}{d_B}\sum_{i=1}^r \ketbra{a_i}{a_i} \otimes \Pi_r
\end{align*}
and use triangle inequality to achieve
\begin{align*}
    \norm{\gs_{AB}-\hat \rho_{AB}}_1 &
    \le \norm{\gs_{AB}-\rho_{\mathrm{int}}} _1
    +\norm{\hat \rho_{AB}-\rho_{\mathrm{int}}} _1\\
    & = \frac{1}{d_B}\norm{\big(\Pi_{gs}-\sum_{i=1}^r\ketbra{a_i}{a_i}\big)\otimes \Pi_{r}} _1
    +\frac{1}{d_B}\norm{\sum_i \ketbra{a_i}{a_i}\otimes (\Pi_{d_B \cdot a_i} - \Pi_{r})} _1\\
    & = \frac{1}{r}\norm{\Pi_{gs}-\sum_{i=1}^r\ketbra{a_i}{a_i}} _1
    +\frac{1}{d_B}\sum_i\norm{ \Pi_{d_B \cdot a_i} - \Pi_{r}} _1\\
    & \le \norm{\gs_A-\hat \rho_A} _1
    + \sum_i |a_i-1/r|
    +\sum_i|a_i -1/r|
\end{align*}
where in the second and third step we used the multiplicativity of $\norm{\cdot}_1$ under tensor product, and then in the final inequality, at the left part we used triangle inequality with $\hat \rho_A$, and at the right part we used the fact the brothers projectors are diagonal, so that $\norm{\Pi_\ell-\Pi_m}_1 = |m-\ell| $.
Notice that our specific choice of parameters yields
\begin{align*}
        \norm{\gs_A-\hat \rho_A}_1 \le 
         \norm{\gs_A-\rho}_1+\norm{\rho-\hat\rho_A }_1
         \le \delta+\delta=2\delta .
\end{align*}
Using Fact~\ref{fact:bhatia}, we obtain\footnote{Notice that we are implicitly considering $1/r$ on the first $r$ elements in the summation. In the remaining part, i.e. $i>r$, we set $1/r\mapsto 0$. This is also true in the derivation before where we write $\sum_i|a_i-1/r|$. } $\sum_i |a_i-1/r|\le 
 \norm{\hat \rho_A-\gs_A}_1\le 2\delta$.
So we got
\begin{align*}
    \norm{\hat \rho_{AB}-\gs_{AB}}_1
    \le \norm{\hat \rho_A-\gs_A} _1
    + 2\sum_i |a_i-1/r|
    \le 2\delta + 4\delta = 6\delta .
\end{align*}
Plugging to \eqref{eq:what-we-need} gives the desired bound.

\end{proof}

\subsection{Proof of Corollary~\ref{corol:TN} --- MPO approximation}
\label{sec:TN}

We now proceed to prove Corollary~\ref{corol:TN} and derive a matrix-product-operator (MPO) approximation for $\gs$.
To do so, we construct a matrix-product-state (MPS) approximation to the purification of the ground state, then trace out the ancilla (see \Fig{fig:MPO}).
The existence of such an MPS is guaranteed by the following lemma taken
from \cRef{ref:Verstrate2006MPS}, which analyzes the truncation error 
due to a repeated projection to the largest Schmidt states at each cut.
\begin{fact} [Lemma 1 from \cRef{ref:Verstrate2006MPS}] \label{lem:MPS}
    Let $\ket \psi$ be pure quantum state on $n$ sites of local dimension $d$.
    For each bi-partition $\{1\rightarrow k \}:\{k+1\rightarrow n \}$, let 
    $\epsilon_1^{(k)},\epsilon_2^{(k)},\ldots$ denote the eigenvalues of the reduced density matrix $\rho_{1\rightarrow k}$.
    There is an MPS $\ket{\psi_{MPS}}$ of bond dimension $D_k$ at the $k$-cut, such that 
    \begin{align*}
        \norm{\ket \psi - \ket{\psi_{MPS}}}^2
        \le 2\sum_{k=1}^{n-1}{\eps^{(k)}_{>D_k}},
    \end{align*}
    where $\epsilon^{(k)}_{>D_k}=\sum_{i>D_k}\limits\epsilon^{(k)}_i$.
\end{fact}

Control over the truncation error of the Schmidt coefficients
of the purified ground state is straightforward 
by combining \Thm{thm:LowSR} and the Young Eckart theorem: 

\begin{corol}[Truncation error] \label{corol:cutting-SR}
    Let $\eps>0$, and let $\ket\gs_{A\tilde A E}$ be the purification of the fully mixed ground state provided in \Thm{thm:LowSR}.
    Given a bi-partition of the physical lattice $A = L: R$, let $\lambda_1\geq \lambda_2 \geq \dots$ 
    denote the Schmidt coefficients of $\ket \gs$ with respect to the bi-partition $L\tilde L: R\tilde R E$. 
    Then $\{\lambda_i\}$ satisfy
    \begin{align*}
        \sum_{i>D_L} \lambda_i^2 \le \eps 
    \end{align*}
    for $D_L\EqDef \mySR(\psi^{(L)})$ satisfying \Ineq{eq:bound-Omega_eps}.
\end{corol}
\end{proof}


We are now ready to derive the MPO approximation for the 
purification of the fully mixed ground state of a 1D gapped local Hamiltonian.
\begin{proof}[ of Corollary~\ref{corol:TN} (Derivation of MPO)]
\begin{figure}[t]
    \centering
    \includegraphics[width=1\linewidth]{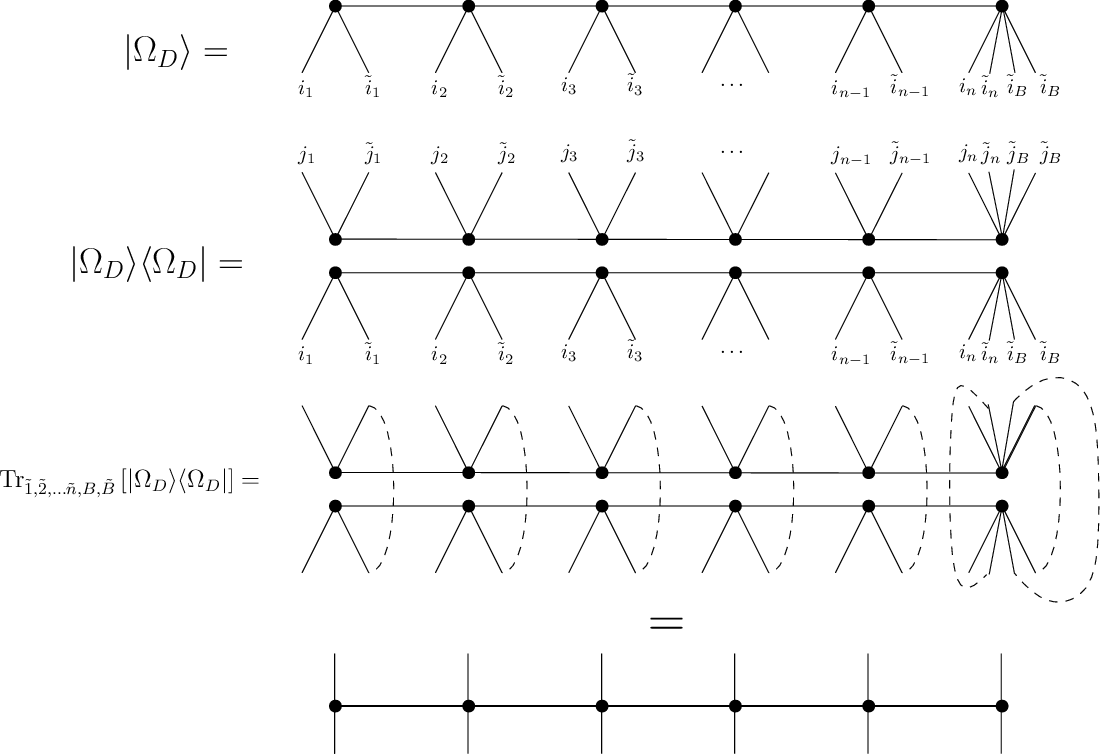}
    \caption{Tensor network structure of $\ket{\gs_D}_{A\tilde A B \tilde B}$, its density matrix $\ketbra{\gs_D}{\gs_D}_{A\tilde A B \tilde B}$ and its reduced matrix $\Psi= \Ptr{\ketbra{\gs_D}{\gs_D}}{\tilde A B \tilde B}$.}
     \label{fig:MPO}
\end{figure}    

Given $\eps>0$, apply Corollary~\ref{corol:cutting-SR} with parameter $\eps'=\eps^2/(8n)$ to get 
 a purification $\ket \gs _{A\tilde A E}$. 
 As claimed in Corollary \ref{corol:cutting-SR}, for each bi-partition of the 1D lattice $A=L:R=\{1\rightarrow k \}:\{k+1\rightarrow n \}$ where $k=1,\ldots, n-1$, the Schmidt coefficients $\{\lambda_i^{(k)}\}_i$ of $\ket \gs$ with respect to $L\tilde L : R\tilde R E$ satisfy 
 \begin{align*}
     \sum_{i>D_k} (\lambda_i^{(k)})^2 \le \eps^2/(2n)
 \end{align*}
 where 
 \begin{align*}
     D_k & \underbrace \EqDef \mySR(\ket{\psi^{(L)}}) 
     \leq_{\text{Corollary}~\ref{corol:cutting-SR}} 49 D^2 \left( \frac{|L|}{\eps'} \right)^2 \\ &
     \underbrace = _{\text{Corollary}~\ref{corol:1D-AL}}
     49  (k/\eps')^{\bigO{\gamma^{-1/3}}}  \frac{k^2 }{\eps'^2}
     = \poly(k/\eps') = \poly(n/\eps) ,
 \end{align*}
 where at the last step we inserted $\eps'=\eps^2/(8n)$ and $k\le n$.
Considering the fact that the squared Schmidt coefficients correspond to the eigenvalues of the reduced density matrix,
we apply \Lem{lem:MPS} to achieve an MPS $\ket{\gs_D}\in \mathcal H_{A\tilde A B\tilde B}$ with maximal bond dimension $D = \max_k D_k$ such that $\norm{\ket \gs-\ket{\gs_D}}^2 \le \eps^2/4$.
Notice that when we consider the MPS representation of $\ket {\gs_D}$, we look on the $n$'th qudit and system $E$ as a single entity,
namely, we associate a single tensor for both systems, as seen in Fig.~\ref{fig:MPO}.
Computing the reduced density matrix of $\ket{\gs_D}$ to $A$ achieves an MPO with bond dimension $D^2=\poly(n/\eps)$.
To demonstrate this, we write the MPS and MPO explicitly, as shown diagrammatically in Fig.~\ref{fig:MPO}.
First, we write the MPS from \Lem{lem:MPS}:
\begin{align*}
    \ket{\gs_D}
    = \sum_{\{i_k\},\{\tilde i_k\},i_B,\tilde i_B}
    \Tr{A_{i_1}^{\tilde i _1}\cdot A_{i_2}^{\tilde i _2} 
    \dots A_{i_{n-1}}^{\tilde i _{n-1}}\cdot A_{i_n,i_B}^{\tilde i _n,\tilde i_B}} 
    \ket{i_1,\ldots, i_n}_A \ket{\tilde i_1,\ldots ,\tilde i_n}_{\tilde A}
    \ket{i_B,\tilde i_B}_{B\tilde B} ,
\end{align*}
where each $A_{i_k}^{\tilde i _k}$ is a $D_{k-1}\times D_{k}$ matrix.
Then, taking the partial trace over $\tilde AB\tilde B$, we achieve the following expression:
\begin{align*}
    \Ptr{\ketbra{\gs_D}{\gs_D}}{\tilde A B\tilde B}
     & = \sum_{\{i_k\},\{j_k\},\{\tilde i_k\},i_B,\tilde i_B}
    \Tr{A_{i_1}^{\tilde i _1}\cdot A_{i_2}^{\tilde i _2} 
    \dots A_{i_{n-1}}^{\tilde i _{n-1}}\cdot A_{i_n,i_B}^{\tilde i _n,\tilde i_B}} \\
    & \; \;\; \;\;\;\;\;\;\;\;\;\;\;\;\;\;\;\;\;\;\;\;\;\;\; \cdot
    \overline{\Tr{A_{j_1}^{\tilde i _1}\cdot A_{j_2}^{\tilde i _2} 
    \dots A_{j_{n-1}}^{\tilde i _{n-1}}\cdot A_{j_n,i_B}^{\tilde i _n,\tilde i_B}} } 
     \;\ketbra{\{i_k\}}{\{j_k\}}_A \\
      & = \sum_{\{i_k\},\{j_k\}}
    \mathrm{Tr}\Big[\left(\sum_{\tilde i_1} A_{i_1}^{\tilde i _1}\otimes 
    \overline{A_{j_1}^{\tilde i _1}}\right)\cdot
    \left(\sum_{\tilde i_2} A_{i_2}^{\tilde i _2}\otimes 
    \overline{A_{j_2}^{\tilde i _2}}\right)
    \dots \left(\sum_{\tilde i_{n-1}} A_{i_{n-1}}^{\tilde i _{n-1}}\otimes 
    \overline{A_{j_{n-1}}^{\tilde i _{n-1}}}\right)\\
    & \cdot \left(\sum_{\tilde i_n,i_B,\tilde i_B} A_{i_n,i_B}^{\tilde i _n,\tilde i_B}\otimes 
    \overline{A_{j_n,i_B}^{\tilde i _n,\tilde i_B}}\right)\Big] 
     \ketbra{\{i_k\}}{\{j_k\}}_A \\
     & = \sum_{\{i_k\},\{j_k\}}
    \Tr{B_{i_1}^{j _1}\cdot
    B_{i_2}^{j _2}
    \dots B_{i_{n-1}}^{j_{n-1}} \cdot B_{i_n}^{j_n}}
     \ketbra{i_1,\ldots ,i_n}{j_1,\ldots ,j_n}_A .
\end{align*}
Here, each of the $B_{i_k}^{j_k} \EqDef \sum_{\tilde i_k} A_{i_k}^{\tilde i _k}\otimes 
    \overline{A_{j_k}^{\tilde i _k}}$ is a $(D_{k-1})^2\times (D_k)^2$ matrix for any $i_k,j_k=0,\ldots,d-1$.

    We finish by noting that $\Psi\EqDef \Ptr{\ketbra{\gs_R}{\gs_R}}{\tilde A B\tilde B}$ is indeed close to the ground state $\gs$, due to monotonicity of $\norm{\cdot}_1$ under partial tracing:
    \begin{align*}
        \norm{\Psi-\gs}_1 & \le 
        \norm{\ketbra{\gs_D}{\gs_D}_{AB\tilde A \tilde B}
        -\ketbra \gs \gs _{AB\tilde A \tilde B}}_1 \\
        & = \norm{(\ket {\gs_D}-\ket \gs)\bra{\gs_D} + \ketbra{\gs}{\gs_D} - \ket \gs (\bra{\gs}-\bra{\gs_D}) -\ketbra{\gs}{\gs_D}}_1 \\
        & \le \big(\norm{\ket{\gs}}+\norm{\ket{\gs_D}}\big)\cdot\norm{\ket \gs - \ket{\gs_D}} \le \eps,
    \end{align*}
    where in the first inequality we used monotonicity, in the second inequality
    we used triangle inequality and the fact that $\norm{\ketbra \phi \psi}_1= \norm \phi \norm \psi$, and in the last inequality we used $\norm{\ket \gs-\ket{\gs_D}}^2 \le \eps^2/4$.

\end{proof}

\vspace{2mm}

\section*{Acknowledgment} 

We thank A.~Anshu for illuminating discussions on this work.

The work of I.A. and R.J. is supported by the National Research Foundation, Singapore, through the National Quantum Office, hosted in A*STAR, under its Centre for Quantum Technologies Funding Initiative (S24Q2d0009) and the NRF grant
NRF2021-QEP2-02-P05 and the Ministry of Education, Singapore, under
the Research Centres of Excellence program. 
 This work was done in
part while R.J. was visiting the Technion-Israel Institute of
Technology, Haifa, Israel, and the Simons Institute for the Theory
of Computing, Berkeley, CA, USA. I.A.\ acknowledges the support of
the Israel Science Foundation (ISF) under the Individual Research
Grant No.~1778/17 and joint Israel-Singapore NRF-ISF Research Grant
No.~3528/20.

\bibliographystyle{ieeetr}
\bibliography{biblist.bib}

\end{document}